\crefname{equation}{}{}
\let\originalleft\left
\let\originalright\right
\renewcommand{\left}{\mathopen{}\mathclose\bgroup\originalleft}
\renewcommand{\right}{\aftergroup\egroup\originalright}
\newcommand*{\transpose}{%
  {\mathpalette\@transpose{}}%
}
\newcommand*{\@transpose}[2]{%
  % #1: math style
  % #2: unused
  \raisebox{\depth}{$\m@th#1\intercal$}%
}
\crefname{algocf}{Algorithm}{Algorithms}
\crefname{equation}{}{} %remove ``Equation''
\colorlet{refkey}{orange!20}
\colorlet{labelkey}{blue!30}
\crefname{algocf}{Algorithm}{Algorithms}
\numberwithin{equation}{section}
\newtheorem{theorem}{Theorem}[section]
\newtheorem{lemma}[theorem]{Lemma}
\crefname{claim}{Claim}{Claims}
\newtheorem{corollary}[theorem]{Corollary}
\newtheorem*{question*}{Question}
\newtheorem{fact}[theorem]{Fact}
\theoremstyle{definition}
\newtheorem{definition}[theorem]{Definition}
\newtheorem*{definition*}{Definition}
\newtheorem*{theorem*}{Theorem}
\theoremstyle{remark}
\newtheorem{remark}[theorem]{Remark}
\newtheorem*{remark*}{Remark}
\newcommand{\mb}{\mathbb}
\newcommand{\mc}{\mathcal}
\newcommand{\mr}{\mathrm}
\newcommand{\on}{\operatorname}
\newcommand{\tr}{\operatorname{tr}}
\newcommand{\Zt}{\widetilde{Z}}
\newcommand{\eps}{\varepsilon}
\newcommand{\E}{\mathbb{E}}
\renewcommand{\Pr}{\mb P}
\renewcommand{\P}{\mathbb{P}}
\title{Entangled states are typically incomparable}
\author[Jain]{Vishesh Jain}
\address{University of Illinois Chicago}
\email{visheshj@uic.edu}
\author[Kwan]{Matthew Kwan}
\address{Institute of Science and Technology Austria (ISTA).}
\email{matthew.kwan@ist.ac.at}
\author[Michelen]{Marcus Michelen}
\address{Northwestern University}
\email{michelen@northwestern.edu}
\thanks{
Vishesh Jain was partially supported by NSF CAREER award DMS-2237646. Matthew Kwan was supported by ERC Starting Grant ``RANDSTRUCT'' No.\ 101076777. Marcus Michelen was supported in part by NSF grants DMS-2137623 and DMS-2246624.
}
\begin{document}

\maketitle
\begin{abstract}
Consider a bipartite quantum system, where Alice and Bob jointly possess a pure state $|\psi\rangle$. Using local quantum operations on their respective subsystems, and unlimited classical communication, Alice and Bob may be able to transform $|\psi\rangle$ into another state $|\phi\rangle$. Famously, \emph{Nielsen's theorem} [Phys.\ Rev.\ Lett., 1999] provides a necessary and sufficient algebraic criterion for such a transformation to be possible (namely, the \emph{entanglement spectrum} of $|\phi\rangle$ should \emph{majorise} the entanglement spectrum of $|\psi\rangle$).

In the paper where Nielsen proved this theorem, he conjectured that in the limit of large dimensionality, for \emph{almost all} pairs of states $|\psi\rangle, |\phi\rangle$ (according to the natural unitary invariant measure) such a transformation is not possible. That is to say, typical pairs of quantum states $|\psi\rangle, |\phi\rangle$ are entangled in fundamentally different ways, that cannot be converted to each other via local operations and classical communication.

Via Nielsen's theorem, this conjecture can be equivalently stated as a conjecture about majorisation of spectra of random matrices from the so-called \emph{trace-normalised complex Wishart--Laguerre ensemble}. Concretely, let $X$ and $Y$ be independent $n \times m$ random matrices whose entries are i.i.d.\ standard complex Gaussians; then Nielsen's conjecture says that the probability that the spectrum of $X X^\dagger / \tr(X X^\dagger)$ majorises the spectrum of $Y Y^\dagger / \tr(Y Y^\dagger)$ tends to zero as both $n$ and $m$ grow large. We prove this conjecture, and we also confirm some related predictions of Cunden, Facchi, Florio and Gramegna [J.~Phys.~A,~2020; Phys.~Rev.~A,~2021].
\end{abstract}

\section{Introduction}

Suppose a bipartite pure quantum state is distributed to two parties Alice and Bob. Imagine that Alice and Bob then travel to distant labs; after this point, they may only perform quantum operations on their own subsystem, though they are free to coordinate their operations via classical communication (e.g., they can share the results of any measurements). This paradigm is known as LOCC (local operations with classical communication), and plays a fundamental role in quantum information theory. See the Nielsen--Chuang monograph~\cite{NC00} for a general introduction to quantum information theory, and see the survey of Chitambar, Leung, Man\v cinska, Ozols, and Winter~\cite{CLMOW14} for a thorough introduction to the LOCC paradigm.

The LOCC paradigm naturally defines a partial order (called \emph{LOCC-convertibility} or \emph{entanglement transformation}) on the set of bipartite pure states $\mb C^{n}\otimes\mb C^{m}$ for any dimensions $n,m\in\mb N$. Namely, for a pair of states $|\psi\rangle,|\phi\rangle\in \mb C^{n}\otimes\mb C^{m}$, we write $|\psi\rangle\to |\phi\rangle$
(read ``$|\psi\rangle$ transforms to $|\phi\rangle$'')
if and only if, starting from the state $|\psi\rangle$, it is possible for Alice and Bob to coordinate some sequence of local operations to guarantee that they end up with the state $|\phi\rangle$.

In 1999, Nielsen~\cite{Nie99} found a beautiful connection between LOCC-convertibility and the mathematical theory of \emph{majorisation}. For the precise statement, and more context, the reader should refer to \cite{Nie99} (and Chapter~12.5.1 of Nielsen--Chuang~\cite{NC00}), but as a brief summary: although Alice and Bob share a pure state $|\psi\rangle$, Alice can interpret her subsystem as a \emph{mixed} state (a statistical ensemble of pure states in $\mb C^n$). This mixed state can be described by a \emph{density operator} $\rho_\psi=\on{tr}_2(|\psi\rangle\langle\psi|)$
(a Hermitian linear operator $\mb C^n\to \mb C^n$ obtained via a \emph{partial trace}, ``tracing out'' Bob's subsystem). The eigenvalues of $\rho_\psi$ are all nonnegative real numbers, summing to 1\footnote{We could equally well have defined these eigenvalues by tracing out Alice's subsystem, instead of Bob's subsystem. It is not hard to see that this choice only affects the number of zero eigenvalues; the nonzero eigenvalues do not actually depend on which of the two subsystems we choose to trace out.}. Nielsen's theorem is that LOCC-convertibility $|\psi\rangle\to |\phi\rangle$ is equivalent to the property that the spectrum of $\rho_\phi$ \emph{majorises} the spectrum of $\rho_\psi$: the sum of the $k$ smallest\footnote{Note that since the eigenvalues sum to $1$, this is equivalent to the condition that the sum of the $k$ \emph{largest} eigenvalues of $\rho_\psi$ is at most the sum of the $k$ \emph{largest} eigenvalues of $\rho_\phi$, for all $k\le n$.} eigenvalues of $\rho_\psi$ should be at most the sum of the $k$ smallest eigenvalues of $\rho_\phi$, for all $k \leq n$.  

Nielsen's theorem has a number of consequences. In particular, it lays bare the fact that (if $n,m\ge 3$) there are fundamentally different types of entanglement that cannot be LOCC-converted to each other: there exists a pair of states $|\psi\rangle,|\phi\rangle\in \mb C^{n}\otimes\mb C^{m}$ such that $|\psi\rangle\not\to|\phi\rangle$ and $|\phi\rangle\not\to|\psi\rangle$. Nielsen conjectured that this phenomenon is actually \emph{typical}: if $|\psi\rangle, |\phi\rangle$ are independent random states (both sampled from the Haar measure, with respect to unitary transformations, on the unit sphere in $\mb C^{n}\otimes\mb C^{m}$), then the probability of the event $|\psi\rangle\to |\phi\rangle$ tends to zero as $n,m\to \infty$. This conjecture has remained unproved for the last 25 years, though it is supported by strong numerical evidence (cf.\ the very convincing computations of Cunden, Facchi, Florio and Gramegna~\cite{CFFG20}). There are also quite convincing heuristic arguments for why it should be true: a brief heuristic argument was given by Nielsen~\cite{Nie99}, and another heuristic argument via integral geometry was given by \.{Z}yczkowski and Bengtsson~\cite{ZB02} (see also their monograph \cite{ZBbook} on the geometry of entanglement). Also, an \emph{infinite-dimensional} counterpart to Nielsen's conjecture (with a \emph{topological} notion of ``typical'') was proved by Clifton, Hepburn and W\"{u}thrich~\cite{CHW02}. We note that Nielsen's conjecture does not currently appear to have a direct application to protocols or operational tasks; it should be viewed as a mathematical conjecture about the structure of entanglement transformation.

The main purpose of this paper is to prove Nielsen's conjecture. Due to Nielsen's characterisation of LOCC-convertibility in terms of majorisation, we can actually completely abandon the language of quantum information theory: for a random pure state $|\psi\rangle\in \mb C^{n}\otimes\mb C^{m}$,
the density operator $\rho_{\psi}$ can be interpreted\footnote{Random pure states have been studied for many different purposes throughout the physics literature (with varying levels of rigour), and it is a bit difficult to definitively pin down the origins of this observation. The distribution of the entanglement spectrum of a random bipartite state (which has now become known as the \emph{Hilbert--Schmidt measure}) was perhaps first studied by Lubkin~\cite{Lub78}, and an explicit formula for the density function was first computed by Lloyd and Pagels~\cite{LP88}. The connection to the Wishart--Laguerre spectrum seems to have been first explicitly observed by \.{Z}yczkowski and Sommers~\cite{ZS01}.} as a random matrix sampled from the \emph{trace-normalised complex Wishart--Laguerre ensemble}, defined as follows.
\begin{definition*}
For parameters $n,m\in\mb N$, the \emph{complex Wishart--Laguerre ensemble} is the distribution of the random Hermitian matrix $GG^\dagger$, where ``$\dagger$'' denotes Hermitian transpose, and $G\in \mb C^{n\times m}$ is an $n\times m$ matrix with independent complex standard Gaussian entries\footnote{There are two slightly different conventions for the definition of ``complex standard Gaussian'' (differing by a factor of $\sqrt 2$). Due to the trace-normalisation, the distinction is irrelevant, but for concreteness we use the normalisation that for complex standard Gaussian $X$ we have $\mb E[(\Re X)^2+(\Im X)^2]=1$.}. Moreover, by the \emph{trace-normalised complex Wishart--Laguerre ensemble}, we mean the distribution of $M/\on{tr}(M)$, where $M$ is sampled from the complex Wishart--Laguerre ensemble.
\end{definition*}

So, Nielsen's conjecture is equivalent to the following theorem, which is the main result of this paper.

\begin{theorem}\label{thm:Nielsen-RMT}
Let $A^{(1)},A^{(2)}\in \mb C^{n\times n}$ be independent samples from the trace-normalised complex Wishart--Laguerre ensemble with parameters $n,m$. Let $\lambda_1^{(1)}\ge \dots\ge \lambda_n^{(1)} \ge 0$ and $\lambda_1^{(2)}\ge \dots\ge \lambda_n^{(2)} \ge 0$ be the eigenvalues of $A^{(1)}$ and $A^{(2)}$, respectively. Then
\[\Pr\big[\lambda_k^{(1)}+\dots+\lambda_n^{(1)}\le \lambda_k^{(2)}+\dots+\lambda_n^{(2)}\text{ for all }k\le n\big]\to 0\]
as $n,m\to \infty$.
\end{theorem}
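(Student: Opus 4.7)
The plan is to reformulate the majorisation event as a non-negativity condition on a discrete process, and exhibit enough Gaussian-type fluctuation in that process to show the event has vanishing probability. Let $D_k = \sum_{i=1}^{k} (\lambda_i^{(1)}-\lambda_i^{(2)})$; since trace normalisation gives $\sum_{i=1}^n \lambda_i^{(j)}=1$, we have $D_0=D_n=0$, and the event in the theorem is equivalent to $\{D_k \ge 0 \text{ for all } 0 \le k \le n\}$. Exchanging the labels of the two samples negates $D$, so the event has probability at most $1/2$; the goal is to amplify this one-shot $1/2$ into a vanishing bound as $n,m\to\infty$.

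The main tool is the Gaussian CLT for linear spectral statistics of the complex Wishart--Laguerre ensemble (Bai--Silverstein, Lytova--Pastur, and later refinements): for sufficiently smooth $\phi$, the centered linear statistic $\sum_i \phi(\lambda_i^{(j)}) - n\int \phi\, d\mu_{\mathrm{MP}}$ is asymptotically Gaussian with $O(1)$ variance given by an explicit Chebyshev-type bilinear form in $\phi$, and distinct smooth test functions have jointly Gaussian limits with computable covariances. I would apply this CLT jointly to smooth approximations of the partial-sum statistics $S_k^{(j)}=\sum_{i\le k}\lambda_i^{(j)}$, using sharp eigenvalue rigidity (on the scale of Erd\H os--Schlein--Yau) to replace the non-smooth truncation at the classical location of $\lambda_k$ by a smooth cutoff with negligible error. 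The heart of the argument is then a multi-scale construction: pick a slowly growing $K=K(n,m)\to\infty$ and indices $k_1<\dots<k_K$ spread across the spectrum, corresponding to smooth test functions $\phi_1,\dots,\phi_K$ whose pairwise Chebyshev-type inner products are small relative to their individual variances (for example, localised bump-like approximations at well-separated spectral scales). The joint CLT then yields that the rescaled vector $(D_{k_1},\dots,D_{k_K})$ converges to a centred Gaussian with well-conditioned covariance, and a standard orthant-probability estimate bounds $\Pr[D_{k_i}\ge 0 \text{ for all } i]$ by $c^K$ for some constant $c<1$, which vanishes as $K\to\infty$.

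I expect the main obstacle to be precisely this multi-scale decoupling step: partial sums at nearby indices are strongly correlated (they differ by only a handful of eigenvalues), and the linear-statistic covariance kernel for indicator-type functions is logarithmically singular at the jump points, so producing $\omega(1)$ approximately independent scales requires carefully balancing separation of indices (to decorrelate) against shrinking per-scale variance. A secondary technical issue is that the partial-sum statistic itself is not smooth, so applying the CLT requires combining the smoothed-CLT statement with rigidity estimates uniformly across the full range of regimes of $m/n$. The bulk of the technical work will likely sit in calibrating the smoothing scales and verifying that the covariance of the constructed vector is genuinely well-conditioned.
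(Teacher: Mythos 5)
Your reformulation of the majorisation event in terms of the process $D_k$ is correct, and the high-level intuition (show that many weakly correlated inequalities must hold simultaneously, each with probability about $1/2$) is indeed the strategy that works. But the specific route you propose has a gap that the authors explicitly call out as the reason they did \emph{not} take this route, and the gap is likely fatal rather than merely technical.

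The issue is your step ``apply the linear-statistics CLT to smooth approximations of the partial sums $S_k = \sum_{i\le k}\lambda_i$, using rigidity to replace the random cutoff at $\lambda_k$ by a deterministic one.'' Write $S_k = \sum_i g(\lambda_i)$ with $g(x) = x\,\one[x \ge \lambda_k]$; replacing $\lambda_k$ by its classical location $\gamma_k$ produces an error equal to (roughly) $\gamma_k\cdot\big(k - N(\gamma_k)\big)$, where $N$ is the eigenvalue counting function. That error is precisely the fluctuation of the counting function, which is of order $\sqrt{\log n}$ for a sharp cutoff --- the \emph{same} order as the fluctuations of the linear statistic you hope to keep. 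Smoothing the cutoff just trades the logarithm between these two terms; it does not make the error negligible. In the trace-normalised variables the situation is worse: the eigenvalues are of size $1/n$ and, as the paper notes in the proof-sketch section, the fluctuations of $\sum_i f(\lambda_i)$ for a bounded $f$ are also of order $1/n$ --- ``comparable to the eigenvalues themselves'' --- so there is no separation of scales to exploit, and no amount of calibration of smoothing scales will fix it. You correctly flagged this as the main obstacle, but it is not an obstacle that can be overcome within the framework you set up; it is the reason the framework fails.

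The idea you are missing is the classical Hardy--Littlewood--P\'olya link between majorisation and convexity (\cref{thm:convex-majorisation} in the paper): if $(\lambda^{(1)})$ is majorised by $(\lambda^{(2)})$ then $\sum_k f(\lambda_k^{(1)}) \le \sum_k f(\lambda_k^{(2)})$ for \emph{every} convex $f$. This turns the majorisation event into a conjunction over a family of genuine linear spectral statistics (the paper uses $f_i(x) = x^i$ with rapidly growing exponents $i = qA^j$), for which the Bai--Silverstein / Lytova--Pastur CLT applies directly with no rigidity argument, no smoothing, and no indicator cutoffs. Your ``multi-scale, well-separated bumps'' intuition then reappears in a form that can actually be computed: taking the degree $i$ to grow geometrically makes $x^i$ and $x^{Ai}$ concentrate on different scales near the soft edge, and one can verify via explicit Beta/Gamma integral estimates (\cref{lem:technical}) that the limiting covariances $\Gamma_c(x^i,x^{Ai})$ vanish as $A\to\infty$, while the variances stay bounded below. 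You also omit the trace-normalisation correlation (all the $Y_i$ share the random trace in their denominators), which contributes a $\big(Z_1^{(2)}-Z_1^{(1)}\big)$ cross term that the paper has to show is negligible (\cref{lem:technical}\eqref{item:gamma}); this is a real complication, not a formality.
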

We discuss the proof approach for \cref{thm:Nielsen-RMT} in a bit more detail later in this introduction (\cref{subsec:outline}), but to give a very brief flavour: due to the so-called \emph{eigenvalue repulsion} phenomenon, the (complex) Wishart--Laguerre spectrum has only tiny fluctuations, and existing technology in random matrix theory is not precise enough to describe the distributions of quantities of the form $\lambda_k+\dots+\lambda_n$. So, instead of considering such quantities directly, we consider a sequence of convex test functions which emphasise different parts of the spectrum (increasingly ``focusing in'' on the ``edges'' of the spectrum); this defines a sequence of linear statistics with nearly independent fluctuations, for which we can derive a multivariate central limit theorem. We can then take advantage of the Hardy--Littlewood--P\'olya theorem relating convexity and majorisation, together with some analytic and combinatorial estimates.

\begin{remark*}
Our proof approach, as written, is not quantitative; we prove that the desired majorisation probability converges to zero without proving any explicit bounds on this probability. In principle, it may be possible to establish quantitative central limit theorems for the complex Wishart--Laguerre spectrum, which would provide an explicit bound, though it seems that substantial new ideas would be required to precisely characterise the asymptotic rate of decay of the majorisation probability (computer experiments by Cunden, Facchi, Florio and Gramegna~\cite{CFFG20} indicate that the probability has order of magnitude $n^{-\theta}$ for some constant $\theta>0$, assuming that the ratio $m/n$ is held constant). 
\end{remark*}

\begin{remark*}
We have only discussed LOCC-convertibility of \emph{bipartite} quantum states, and it is natural to also ask about \emph{multipartite} states with more than two subsystems. The theory of LOCC-convertibility becomes much more challenging in this general setting (in the tripartite case, a necessary and sufficient condition is known~\cite{Taj13}, but it is rather complicated). However, our proof of Nielsen's conjecture automatically implies a similar result for any number of subsystems (e.g., if there are three subsystems, we can simply view two of the subsystems as comprising a single larger subsystem to reduce to the bipartite case; this only makes LOCC-convertibility more permissive). Actually, for more than two subsystems one expects much stronger results along these lines; see \cite{GKW17,SWGK18}.
\end{remark*}

\begin{remark*}
It would be interesting to investigate whether similar ideas can be extended to other natural random measures and related partial orders. For fixed-trace orthogonal or symplectic Wishart ensembles, one would need analogues of the multivariate linear-statistics central limit theorems used in \cref{thm:general-CLT,thm:general-CLT-infinite}, together with sufficiently precise control of the corresponding covariance structure after trace-normalisation. For more general measures on the simplex, such as Dirichlet-type laws, one obstacle is that the convenient exact order-statistic representation used in \cref{sec:uniform-measure} is no longer available in the same form. One could also ask for analogues in thermomajorisation, but there the relevant order is governed by thermo-Lorenz curves rather than ordinary majorisation, so the Hardy--Littlewood--P\'olya convex-function reduction used here is not directly available.
\end{remark*}

In addition to \cref{thm:Nielsen-RMT}, we also prove some other related results, which we describe next.

\subsection{The uniform measure on the simplex}
For any possible outcome $M$ of the trace-normalised complex Wishart--Laguerre ensemble, the eigenvalues are nonnegative real numbers that sum to 1. That is to say, the spectrum can be interpreted as a point in the $(n-1)$-simplex
\[\Delta_{n-1}=\{\vec{x}\in[0,\infty)^{n}:x_{1}+\dots+x_{n}=1\},\]
and Nielsen's conjecture can be interpreted as a conjecture about majorisation of two independent random points sampled from a particular probability distribution (defined in random-matrix-theoretic terms) on $\Delta_{n-1}$.

Of course, there are other interesting distributions on $\Delta_{n-1}$; perhaps the most natural (and most tractable) such measure is the \emph{uniform} measure on $\Delta_{n-1}$. Cunden, Facchi, Florio and Gramegna~\cite{CFFG21} proved the natural analogue of Nielsen's conjecture for this measure. Their proof was fundamentally non-quantitative (it critically used Kolmogorov's zero-one law), but based on numerical experiments and heuristic comparison with persistence probabilities of random walks, they predicted that the majorisation probability should decay ``polynomially fast'' (specifically, they predicted that the majorisation probability should scale like $n^{-\theta}$ for some positive constant $\theta\approx 0.41$). We are able to prove a bound of this shape (though presumably not with an optimal value of $\theta$), as follows.

\begin{theorem}\label{thm:uniform}
Let $(X_k^{(1)})_{k=1}^n,(X_k^{(2)})_{k=1}^n$ be two independent uniform random vectors in $\Delta_{n-1}$, and denote by $(\lambda_k^{(1)})_{k=1}^n,(\lambda_k^{(2)})_{k=1}^n$ their \emph{decreasing rearrangements} (i.e., the results of sorting $(X_k^{(1)})_{k=1}^n,(X_k^{(2)})_{k=1}^n$ in decreasing order). Then
\[\Pr\big[\lambda_k^{(1)}+\dots+\lambda_n^{(1)}\le \lambda_k^{(2)}+\dots+\lambda_n^{(2)}\text{ for all }k\le n\big]= O((n/\log n)^{-1/16}).\]
\end{theorem}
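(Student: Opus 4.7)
The plan is to reduce the majorisation event to a system of simultaneous comparisons of linear statistics for a family of convex test functions, and then to bound the resulting Gaussian orthant probability via a multivariate central limit theorem at many scales. The event in the theorem is precisely that $\lambda^{(1)}$ majorises $\lambda^{(2)}$, so by Hardy--Littlewood--P\'olya it is contained in $\bigcap_{j=1}^K\{L_{f_j}^{(1)}\ge L_{f_j}^{(2)}\}$ for any family of convex functions $f_1,\dots,f_K\colon[0,1]\to\mathbb R$, where $L_f^{(a)}:=\sum_i f(X_i^{(a)})$ (note that the linear statistic does not depend on the sort order). I would use the extremal rays $f_j(x)=(x-t_j)_+$ at geometrically-spaced thresholds $t_j=(\log n-\alpha j)/n$ for $j=1,\dots,K$, with $\alpha>0$ a large absolute constant and $K=\lfloor c\log(n/\log n)\rfloor$ for a small constant $c>0$; the top threshold $t_1$ sits a constant below $\log n/n$ so that the top statistic has $\approx e^\alpha$ exceedances, and the bottom threshold $t_K$ sits at scale $(\log\log n)/n$ so that the bottom statistic has $\approx n/\log n$ exceedances---this lower cutoff on $t_K$ is what produces the $\log n$ factor in the denominator of the final bound.

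Next I would prove a multivariate CLT for $\vec L^{(a)}=(L_{f_1}^{(a)},\dots,L_{f_K}^{(a)})$. Using the standard representation $X_i^{(a)}=E_i^{(a)}/S^{(a)}$ of the uniform Dirichlet with i.i.d.\ $E_i^{(a)}\sim\operatorname{Exp}(1)$ and $S^{(a)}=\sum_i E_i^{(a)}$, and linearising the normalisation $1/S^{(a)}$ via the delta method, each $L_{f_j}^{(a)}$ becomes a smooth function of $n$ i.i.d.\ exponentials to which a quantitative multivariate CLT applies. An explicit integration against the exponential density gives, with $s_j:=nt_j$ and $j<k$,
\[
n\,\operatorname{Var}(L_{f_j}^{(a)})\sim 2e^{-s_j},\qquad n\,\operatorname{Cov}(L_{f_j}^{(a)},L_{f_k}^{(a)})\sim e^{-s_k}\bigl(2+s_k-s_j-e^{-s_j}\bigr),
\]
so the correlations decay geometrically in the scale distance, $\rho_{jk}\sim\bigl(1+(s_k-s_j)/2\bigr)e^{-(s_k-s_j)/2}$. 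Combined with the independence of the two samples, the centred vector $\vec D:=\vec L^{(1)}-\vec L^{(2)}$ is asymptotically Gaussian with this exponentially-decaying correlation structure.

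It remains to upper-bound the Gaussian orthant probability $\Pr[\vec D\ge 0]$, and this is the main obstacle. The naive bound $2^{-K}$ does not apply, because the correlations are positive (indeed, for equi-correlated Gaussians with a fixed positive correlation, the orthant probability stays bounded away from $0$ as $K\to\infty$). However, covariances with exponential off-diagonal decay in $|k-j|$ are qualitatively AR$(1)$-like, and for such mean-reverting Gaussian chains the persistence probability over $K$ steps decays exponentially: $\Pr[\vec D\ge 0]\le\gamma(\alpha)^K$ for some $\gamma(\alpha)<1$ depending on the spacing parameter $\alpha$. This can be established by a direct transfer-operator analysis (as in the Bray--Majumdar--Schehr-type persistence theory) or by a Slepian-type comparison to a stationary Gaussian AR$(1)$ majorant of the correlation matrix. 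Choosing $\alpha$ to be a moderate absolute constant and then $c$ small enough that $\gamma(\alpha)^c\le 2^{-1/16}$ yields the claimed bound $O((n/\log n)^{-1/16})$. The delicate technical points are (i) making the CLT effective enough to accommodate $K=\Theta(\log n)$ test functions rather than a fixed number, and (ii) handling the polynomial prefactor $(1+(s_k-s_j)/2)$ in the covariance so that it does not spoil the AR$(1)$ comparison; everything else reduces to explicit integration against the exponential density.
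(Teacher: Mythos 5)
Your proposal takes a genuinely different route from the paper. The paper's proof of \cref{thm:uniform} does not use convex test functions or a CLT at all: it uses the exact R\'enyi representation of exponential order statistics (\cref{fact:exponential-order-statistics}), which turns the partial sums $\lambda_{(1)}+\dots+\lambda_{(q)}$ into iterated partial sums of i.i.d.\ exponentials, and then, after a Bernstein-type concentration step to remove the normalisation by $Z_1+\dots+Z_n$, invokes the Dembo--Ding--Gao persistence theorem (\cref{thm:iterated-persistence}) on $\Pr[S_q< t\ \forall q\le Q]=\Theta_t(Q^{-1/4})$ with $Q\asymp(n/\log n)^{1/4}$, giving the exponent $1/16$. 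Your proposal is instead a quantitative version of the strategy the paper uses for \cref{thm:Nielsen-RMT} (convex test functions at many scales plus a multivariate CLT plus orthant-probability estimates), and the paper deliberately does \emph{not} use that strategy here precisely because, as they remark, that route ``as written, is not quantitative.''

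There is a concrete gap in your sketch. With $t_j=(\log n-\alpha j)/n$, your top threshold $t_1$ is chosen so that $L_{f_1}^{(a)}=\sum_i(X_i^{(a)}-t_1)_+$ has only $\approx e^\alpha=O(1)$ exceedances. In that regime the statistic is (after rescaling by $n$) essentially a compound-Poisson sum of a constant number of terms; its distribution is not close to Gaussian, the third/fourth cumulants are of the same order as the variance, and no central limit theorem applies. The same issue, in milder form, persists for all thresholds with $s_j$ close to $\log n$. Pulling $t_1$ down into the bulk (say $s_1\le\tfrac12\log n$) repairs Gaussianity but is a nontrivial modification, and in any case the deeper problem you flag yourself --- an effective multivariate CLT for $K=\Theta(\log n)$ simultaneous test statistics whose approximation error is $o((n/\log n)^{-1/16})$ --- is not addressed and is not a standard off-the-shelf tool; the dimension-dependence in, e.g., Bentkus- or Chernozhukov--Chetverikov--Kato-type bounds would need to be tracked carefully and there is no reason to expect it to be free. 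By contrast, the paper's exponential-order-statistics route produces an \emph{exact} finite-$n$ identity and needs only one scalar persistence estimate, which is why it yields a clean polynomial rate. Your plan for the Gaussian orthant bound (Slepian domination by an AR(1) majorant, using $(1+d/2)^k\ge 1+kd/2$ to absorb the polynomial prefactor) is reasonable and the correlation computation is essentially right (up to swapping $s_j$ and $s_k$), but it is only the last step of a chain whose earlier links are missing.
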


As observed by Cunden, Facchi, Florio and Gramegna~\cite{CFFG21}, this result actually has a quantum-information-theoretic interpretation: in the same way that Nielsen's conjecture studies the resource of \emph{entanglement} (i.e., the operations that can be made without introducing new entanglement into the system), \cref{thm:uniform} can be interpreted as a theorem about the resource of \emph{coherence}. We direct the reader to \cite{CFFG21}, and the references therein, for more details.

\subsection{Approximate LOCC-convertibility} So far, we have only been concerned about \emph{exact} LOCC-convertibility: we write $|\psi\rangle\to |\phi\rangle$ if it is possible for Alice and Bob to orchestrate a sequence of local quantum operations (with unlimited classical communication) which \emph{guarantee}, with probability 1, that $|\psi\rangle$ will be converted to $|\phi\rangle$. It is natural to ask how the situation changes if we allow some probability of failure.

Shortly after Nielsen's 1999 paper, Vidal~\cite{Vid99} found a generalisation of Nielsen's theorem along these lines. For bipartite states $|\psi\rangle,|\phi\rangle\in \mb C^{n}\otimes\mb C^{m}$, he proved that the maximum possible success probability $\Pi(|\psi\rangle\to |\phi\rangle)$ for the task of converting $|\psi\rangle$ to $|\phi\rangle$ (in the LOCC paradigm) is exactly
\[\min_{1\le k\le n} \frac{\lambda_k^{(1)}+\lambda_{k+1}^{(1)}+\dots+\lambda_n^{(1)}}{\lambda_k^{(2)}+\lambda_{k+1}^{(2)}+\dots+\lambda_n^{(2)}},\]
where $\lambda_1^{(1)}\ge \dots\ge \lambda_n^{(1)}$ are the eigenvalues of $\rho_{\psi}$ and $\lambda_1^{(2)}\ge \dots\ge \lambda_n^{(2)}$ are the eigenvalues of $\rho_{\phi}$.

In \cite{CFFG20}, Cunden, Facchi, Florio and Gramegna considered $\Pi(|\psi\rangle\to |\phi\rangle)$ for \emph{random} states $|\psi\rangle,|\phi\rangle\in \mb C^{n}\otimes\mb C^{m}$: with what probability can one successfully convert between two randomly entangled states? In their computer simulations, they found that the answer seems to heavily depend (in a somewhat surprising way) on the relationship between $n$ and $m$. Indeed, suppose $n,m\to \infty$ with $m/n=c$ (for some constant $c\in (0,\infty)$). If $c=1$, computational evidence suggests that $\Pi(|\psi\rangle\to |\phi\rangle)$ has a nontrivial limiting distribution supported on the whole interval $[0,1]$ (with mean about 0.58). On the other hand, if $c\ne 1$ then computational evidence suggests that $\Pi(|\psi\rangle\to |\phi\rangle)$ converges to 1 in probability.

As our final result in this paper, we confirm the latter prediction.

\begin{theorem}\label{thm:approximate-conversion}
Let $A^{(1)},A^{(2)}\in \mb C^{n\times n}$ be independent samples from the trace-normalised complex Wishart--Laguerre ensemble with parameters $n,m$. Let $\lambda_1^{(1)}\ge \dots\ge \lambda_n^{(1)}$ and $\lambda_1^{(2)}\ge \dots\ge \lambda_n^{(2)}$ denote the respective eigenvalues, and let
    \[\Pi=\min_{1\le k\le n} \frac{\lambda_k^{(1)}+\lambda_{k+1}^{(1)}+\dots+\lambda_n^{(1)}}{\lambda_k^{(2)}+\lambda_{k+1}^{(2)}+\dots+\lambda_n^{(2)}}.\]
Then, for any fixed $\varepsilon>0$ and $c\ne 1$, we have
\[\mb{P}[\Pi < 1 - \eps] \to 0\]
as $n,m\to \infty$ with $m/n\to c$.
\end{theorem}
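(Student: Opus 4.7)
The plan is to show that for $c \neq 1$, both of the random partial sums $S_k^{(i)} := \lambda_k^{(i)} + \dots + \lambda_n^{(i)}$ (for $i=1,2$) are uniformly close, across all $k$, to the same deterministic profile, so that their ratio equals $1+o(1)$ uniformly in $k$ with high probability; this will immediately yield $\Pi \to 1$ in probability. The profile is identified through the Marchenko--Pastur theorem: the scaled eigenvalues $\{n\lambda_j^{(i)}\}_{j=1}^n$ converge in empirical distribution to a Marchenko--Pastur-type measure whose parameter is determined by $c$ (with an atom of mass $1-c$ at $0$ when $c<1$, accounting exactly for the $n-m$ forced zero eigenvalues of a rank-$m$ Wishart matrix). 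Writing $Q$ for the top-quantile function of the non-atomic part and $\sigma(u) := \int_u^1 Q(t)\,dt$, one expects $S_k^{(i)} \approx \sigma(k/n)$.

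The hypothesis $c \neq 1$ enters as a lower bound on $\sigma$. The non-atomic Marchenko--Pastur support is bounded away from zero in both regimes $c>1$ and $c<1$, which gives $\sigma(u) \geq c_0(1-u)$ for a positive constant $c_0 = c_0(c)$, at least in the range where the partial sum is nonzero; when $c<1$ and $k>m$, both numerator and denominator of the defining ratio of $\Pi$ vanish, which is naturally interpreted as $1$. To control the fluctuations I would invoke rigidity estimates for the complex Wishart--Laguerre ensemble: bulk rigidity of the form $|n\lambda_k^{(i)} - \gamma_k| \leq n^{-1+o(1)}$ uniformly for $k$ in the bulk (where $\gamma_k$ is the classical location), and edge rigidity $|n\lambda_k^{(i)} - \gamma_k| \leq n^{-2/3+o(1)}\hat k^{-1/3}$ with $\hat k$ the distance to the nearest edge. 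Summing these by the triangle inequality yields, uniformly in $k$,
\[
\big|S_k^{(i)} - \sigma(k/n)\big| = o\!\big(\sigma(k/n)\big)
\]
with probability tending to $1$: in the bulk the absolute error is $o(1)$ while $\sigma(k/n)$ is bounded below by a positive constant, and near the lower edge a partial sum over $\ell$ extremal eigenvalues fluctuates by $O(n^{-5/3}\ell^{2/3})$ while $\sigma$ contributes at least $\sim \ell/n$, giving relative error $O(n^{-2/3})$ that is uniform in $\ell$. Combining over $i=1,2$ by a union bound produces $S_k^{(1)}/S_k^{(2)} = 1+o(1)$ uniformly in $k$, whence $\Pi \geq 1 - \varepsilon$ with probability $1-o(1)$.

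The main obstacle is the uniform-in-$k$ control, particularly in the transition region between bulk and edge (and, for $c<1$, at the cutoff $k=m$ where partial sums transition to zero). I would handle this by partitioning $\{1,\dots,n\}$ into a constant number of scales---a bulk region, a near-edge region, and an extreme-edge region of bounded cardinality---and applying the appropriate rigidity estimate in each. An alternative approach that sidesteps the need for strong rigidity inputs is to combine a central limit theorem for smooth linear statistics (fluctuations of order $n^{-1}$, sufficient in the bulk) with a Tracy--Widom-scale analysis of the finitely-many extremal eigenvalues; since only qualitative convergence in probability is required, these standard inputs should suffice, and the delicate multivariate CLT machinery developed for \cref{thm:Nielsen-RMT} is not needed here.
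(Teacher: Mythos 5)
Your approach is genuinely different from the paper's, and it is conceptually sound, but it is heavier than what the paper needs. You propose to import rigidity estimates (bulk and edge) for the Wishart ensemble, showing that each individual eigenvalue sits within $n^{o(1)}$-type windows of its classical location, and then to sum these bounds to deduce that each partial sum $S_k$ tracks the deterministic quantile profile $\sigma(k/n)$ uniformly in $k$. The paper instead proves that each individual (unnormalised) eigenvalue $\mu_k^{(1)}$ is multiplicatively close to $\mu_k^{(2)}$ via a much more elementary route: by Weyl's inequality, $\sqrt{\mu_k}$ is a $1$-Lipschitz function of the Gaussian matrix entries, so Gaussian concentration (Borell, Sudakov--Tsirelson) gives additive concentration of $\sqrt{\mu_k}$ at scale $O(\sqrt{\log n})$; combined with the lower bound $\mathbb{E}[\sqrt{\mu_n}] \gtrsim \sqrt{m}-\sqrt{n-1}$ (Rudelson--Vershynin type), the $c \neq 1$ hypothesis converts additive to multiplicative concentration, and a union bound over $k$ finishes. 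This avoids any appeal to local laws or rigidity, and in fact proves something quantitatively stronger: the paper obtains the conclusion under the weaker hypothesis $m \geq n + C\sqrt{n\log n}$, a regime your MP-profile argument (which implicitly fixes $c$) does not directly see. Both arguments exploit $c \neq 1$ in the same way, as a guarantee that the spectrum (or its nonzero part) is bounded away from zero.

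One cautionary note on your alternative suggestion at the end: combining a CLT for smooth linear statistics with a Tracy--Widom analysis of finitely many extremal eigenvalues does not straightforwardly control $S_k = \lambda_k + \dots + \lambda_n$, because this is a hard-cutoff statistic (an indicator test function applied to eigenvalue ranks), not a smooth one. The inability of smooth-statistic CLTs to reach quantities of the form $\lambda_k + \dots + \lambda_n$ is precisely the central difficulty flagged in the paper's proof strategy for \cref{thm:Nielsen-RMT}, so this fallback would need to be replaced by the rigidity route you describe as your primary plan. Your primary plan does work, at the cost of invoking rigidity estimates that are considerably deeper than the inputs the paper actually uses.
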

\begin{remark}
\label{rmk:approx-locc-equality-case}
The conclusion of \cref{thm:approximate-conversion} cannot hold without some separation between $m$ and $n$. Indeed, in the case $m = n$, Edelman \cite{edelman1988eigenvalues} showed that the smallest eigenvalue $\lambda_{n}$ of a matrix drawn from the complex Wishart--Laguerre ensemble has probability density function given by $f(\lambda) = ne^{-\lambda n/2}/2$. Using the concentration of the trace (as in the proof of \cref{thm:approximate-conversion}) along with the observation that $\Pi \leq \lambda_n^{(1)}/\lambda_{n}^{(2)}$, we see that for every $\eps < 1$, there exists $c_{\eps} > 0$ such that $\mb{P}[\Pi < 1-\eps] > c_{\eps}$.

Our proof of \cref{thm:approximate-conversion} really shows that $\mb{P}[\Pi < 1 - \eps] \to 0$ as long as $|m-n|$ grows faster than $\sqrt{n \log n}$ (see \cref{thm:approximate-conversion-quantitative} for a precise statement), and it is conceivable that existing results in random matrix theory can be used to extend the above argument (showing $\mb{P}[\Pi < 1-\eps] > c_{\eps}$) to the range $|m - n| = O(\sqrt{n})$. However, this would still leave a ``gap'' of order $\sqrt{\log n}$. It would be interesting to investigate this further.
\end{remark}

\subsection{Discussion of proof ideas}\label{subsec:outline}
If $(\lambda_k^{(1)})_{k=1}^n$ and $(\lambda_k^{(2)})_{k=1}^n$ are the decreasing rearrangements of two independent random vectors on the simplex, sampled according to any continuous distribution, then for every \emph{individual} $k$, it is easy to see by symmetry considerations that $\lambda_k^{(1)}+\dots+\lambda_n^{(1)}\le \lambda_k^{(2)}+\dots+\lambda_n^{(2)}$ with probability exactly 1/2. So, a first instinct for how to study majorisation of $(\lambda_k^{(1)})_{k=1}^n$ and $(\lambda_k^{(2)})_{k=1}^n$ (e.g., in the settings of \cref{thm:Nielsen-RMT,thm:uniform}) is to study the dependence between the events that $\lambda_k^{(1)}+\dots+\lambda_{n}^{(1)}\le \lambda_k^{(2)}+\dots+\lambda_{n}^{(2)}$ for different $k$. For example, if we consider a sequence of indices $k_1<\dots<k_\ell$ which grow very rapidly or are in some other sense ``well-spaced'' from each other, one might hope that the corresponding events are approximately independent from each other, meaning that one might hope for an upper bound of about $2^{-\ell}$ on the majorisation probability. This type of approach has been very successful for some similar problems about majorisation in random settings (see for example \cite{Pit18,Pit99,MMM21} for its use in the study of random integer partitions).

This is the approach we take in our proof of \cref{thm:uniform}. Indeed, we employ an \emph{exact description} of the joint distribution of $(\lambda_k)_{k=1}^n$, in terms of independent exponential random variables. Via some simple approximation arguments, we can relate the quantities $\lambda_1^{(1)}+\dots+\lambda_{k}^{(1)}$ to so-called \emph{iterated partial sums} (also called \emph{integrated random walks}). There is a fairly extensive literature on persistence of iterated partial sums (see for example \cite{AD13,dembo2013persistence,Vys10,Sin92}); in particular we take advantage of a general estimate due to Dembo, Ding and Gao~\cite{dembo2013persistence}. 

Unfortunately, the above strategy seems to be extremely difficult to apply in the setting of \cref{thm:Nielsen-RMT}. In this setting there does not seem to be any convenient exact description of the distribution of $\lambda_k+\dots+\lambda_n$, so it seems we are forced to turn to limit theorems. There is a huge body of research on central limit theorems for certain statistics of Wishart--Laguerre ensembles (see for example \cite{Wish1,Wish2,wish3,Wish4,Arh71,Jon82,lytova2009central,Bao15,CP15}), but none of this applies to quantities of the form $\lambda_k+\dots+\lambda_n$. The difficulty here is that there is a so-called \emph{eigenvalue repulsion} phenomenon for Hermitian random matrices: the eigenvalues ``want to be separated from each other'', which has the effect of ``locking the spectrum in place''. Specifically (at least in the case where $m$ and $n$ are approximately equal), for any bounded test function $f$, the fluctuations of $f(\lambda_1)+\dots+f(\lambda_n)$ are tiny, of comparable size to the eigenvalues themselves (and so there is no chance of describing the fluctuations in $\lambda_k+\dots+\lambda_n$ via a central limit theorem for linear statistics of the form $f(\lambda_1)+\dots+f(\lambda_n)$).

Instead, we take advantage of the relationship between majorisation and convex functions: the \emph{Hardy--Littlewood--P\'olya theorem} says that if $(\lambda_k^{(1)})_{k=1}^n$ majorises $(\lambda_k^{(2)})_{k=1}^n$, then for any convex function $f:\mb R \to \mb R$ we have \[f(\lambda_1^{(1)})+\dots+f(\lambda_n^{(1)})\ge f(\lambda_1^{(2)})+\dots+f(\lambda_n^{(2)}).\]
So, we choose a sequence of convex test functions $f_1,\dots,f_\ell$ which ``emphasise distinct parts of the spectrum'' and whose corresponding statistics therefore fluctuate almost independently. Specifically, for each $i$, we choose the function $f_i$ to grow much more rapidly than $f_1,\dots,f_{i-1}$, and therefore focus much more strongly on the upper edge (the so-called \emph{soft edge}) of the spectrum.  For simplicity, we will ultimately take the functions $f_j$ to be polynomials of rapidly growing degree.
We can then apply a general multivariate central limit theorem (with an adjustment for trace-normalisation) to the $\ell$ different statistics of the form $f_i(\lambda_1)+\dots+f_i(\lambda_n)$, and via analysis of certain integrals we can upper-bound the covariances between these statistics. The upshot is that we are able to show that the majorisation probability is at most about $2^{-\ell}$; since we can take $\ell$ arbitrarily large, the desired result follows.

The proof of \cref{thm:approximate-conversion} is completely different. Since the trace is extremely well-concentrated, it suffices to prove a version of \cref{thm:approximate-conversion} for eigenvalues of the \emph{non-trace-normalised} complex Wishart--Laguerre ensemble. Writing $\mu_1,\dots,\mu_n$ for these non-normalised eigenvalues, our approach is to prove that each individual eigenvalue $\mu_k$ is tightly concentrated (in a multiplicative sense) around its expected value $\mb E\mu_k$.
Our proof of this fact relies on two ingredients: first, by Weyl's inequalities, $\mu_k^{1/2}$ can be interpreted as a $1$-Lipschitz function (with respect to the Euclidean norm) of a Gaussian random vector, so it follows from the Gaussian concentration inequality of Sudakov--Tsirelson and Borell that $\mu_k$ is well-concentrated in an \emph{additive} sense. Second, since $m \geq n + C\sqrt{n\log n}$, it follows from known estimates on the smallest singular value of random matrices that $\mb{E}[\mu_{n}^{1/2}]$ is sufficiently large to convert the additive error to multiplicative error. The second ingredient is the only place where we use the separation between $m$ and $n$; as we discuss in \cref{rmk:approx-locc-equality-case}, this is necessary.  

\section{Proof of Nielsen's conjecture}
In this section we present the proof of \cref{thm:Nielsen-RMT} (Nielsen's conjecture), modulo some analytic estimates that will be deferred to \cref{sec:integral-estimates}.

\cref{thm:Nielsen-RMT} concerns parameters $m,n$ which both tend to infinity, without making any assumption about the relationship between $m$ and $n$. However, it is straightforward to reduce to the case where $m\ge n$ (by the symmetry between $m$ and $n$), and it is then straightforward to reduce to the case where $m/n\to c$ for some constant $c\in [1,\infty]$ (by compactness). There are certain complications in the $c=\infty$ (``imbalanced'') case that are not present in the $c<\infty$ (``approximately balanced'') case, so we choose to present these two cases separately.

\subsection{Majorisation and convexity}
First, as discussed in \cref{subsec:outline}, it is crucial that instead of working with majorisation directly, we can work with convex test functions. To this end, we need the \emph{Hardy--Littlewood--P\'olya} theorem (appearing, for example, as \cite[p.~259, Theorem~B]{RV73}).
\begin{theorem}\label{thm:convex-majorisation}
    Consider any real numbers $\lambda_1^{(1)}\ge \dots\ge \lambda_n^{(1)}$ and $\lambda_1^{(2)}\ge \dots\ge \lambda_n^{(2)}$, such that
    \[\lambda_k^{(1)}+ \dots+ \lambda_n^{(1)}\le\lambda_k^{(2)}+ \dots+ \lambda_n^{(2)}\text{ for all }k\in \{1,\dots,n-1\}\quad \text{and}\quad\lambda_1^{(1)}+ \dots+ \lambda_n^{(1)}=\lambda_1^{(2)}+ \dots+ \lambda_n^{(2)}.\]
    Then for any convex function $f:\mb R\to \mb R$, we have
    \[f(\lambda_1^{(1)})+\dots+f(\lambda_n^{(1)})\le f(\lambda_1^{(2)})+\dots+f(\lambda_n^{(2)}).\]
\end{theorem}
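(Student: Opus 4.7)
The statement is the classical Hardy--Littlewood--P\'olya--Karamata inequality relating majorisation to the action of convex functions. My plan is to prove it via summation-by-parts (Abel summation) applied to a subgradient expansion of $f$.

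First, I translate the hypothesis into a condition on cumulative differences from the top. Set $d_i = \lambda_i^{(2)} - \lambda_i^{(1)}$ and $D_k = d_1 + \dots + d_k$. The equality of total sums gives $D_n = 0$, and the tail-sum inequalities translate to a uniform sign for $D_k$ with $k < n$ (each tail sum equals the total minus a head sum, so a bound on tails is equivalent to a bound on heads). The whole statement now reduces to showing that a suitable sum of the form $\sum_i c_i d_i$ has the correct sign.

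Next, I exploit convexity via the subgradient inequality. For each $i$, pick $c_i$ in the subdifferential $\partial f(\lambda_i^{(1)})$; then
\[
f(\lambda_i^{(2)}) \;\ge\; f(\lambda_i^{(1)}) + c_i\bigl(\lambda_i^{(2)} - \lambda_i^{(1)}\bigr) \;=\; f(\lambda_i^{(1)}) + c_i d_i,
\]
so $\sum_i f(\lambda_i^{(2)}) - \sum_i f(\lambda_i^{(1)}) \ge \sum_i c_i d_i$. Applying Abel summation,
\[
\sum_{i=1}^n c_i d_i \;=\; c_n D_n + \sum_{i=1}^{n-1} (c_i - c_{i+1}) D_i \;=\; \sum_{i=1}^{n-1} (c_i - c_{i+1}) D_i.
\]
Since $\lambda_i^{(1)}$ is nonincreasing in $i$ and the subdifferential of a convex function on $\mb R$ is monotone nondecreasing, the sequence $(c_i)$ is nonincreasing, so $c_i - c_{i+1} \ge 0$ for each $i$. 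Paired with the sign of $D_i$ from the first step, every summand has a definite sign and the desired inequality between $\sum_i f(\lambda_i^{(1)})$ and $\sum_i f(\lambda_i^{(2)})$ follows.

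I do not anticipate any real obstacle: this is the canonical slick proof of a classical inequality. The only minor technicality is that $f$ need not be differentiable, which is handled by using subgradients (a convex $f : \mb R \to \mb R$ has nonempty subdifferential at every point), or equivalently by first approximating $f$ uniformly on compact sets by a smooth convex function and passing to the limit at the end.
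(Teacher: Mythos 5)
Your approach (Abel summation applied to a subgradient expansion) is the standard elementary proof of Karamata's inequality; the paper gives no argument at all, just a citation to Roberts and Varberg, so you are supplying a proof that the paper omits. Unfortunately, as written the signs do not close. With $d_i = \lambda_i^{(2)} - \lambda_i^{(1)}$ and $D_k = d_1 + \dots + d_k$, the tail-sum hypothesis together with $D_n = 0$ forces $D_k \le 0$ for $k < n$. Anchoring the subgradient at $\lambda_i^{(1)}$ produces $\sum_i f(\lambda_i^{(2)}) - \sum_i f(\lambda_i^{(1)}) \ge \sum_i c_i d_i$, and Abel summation gives $\sum_i c_i d_i = \sum_{i<n}(c_i - c_{i+1})D_i$, where $c_i - c_{i+1} \ge 0$ and $D_i \le 0$, hence $\sum_i c_i d_i \le 0$. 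You have lower-bounded the difference by a nonpositive quantity, which proves nothing. The fix is to anchor the subgradient at $\lambda_i^{(2)}$ instead: then $\sum_i f(\lambda_i^{(1)}) - \sum_i f(\lambda_i^{(2)}) \ge -\sum_i c_i d_i \ge 0$, and the argument closes.

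Note, however, that the corrected argument yields $\sum_i f(\lambda_i^{(1)}) \ge \sum_i f(\lambda_i^{(2)})$, which is the reverse of the stated conclusion. This is because the theorem as printed has a direction error: the tail-sum hypothesis says $(\lambda_k^{(1)})$ majorises $(\lambda_k^{(2)})$ in the standard sense, and Karamata then gives $\sum f(\lambda^{(1)}) \ge \sum f(\lambda^{(2)})$. A two-point check confirms this: $\lambda^{(1)}=(1,0)$, $\lambda^{(2)}=(1/2,1/2)$, $f(x)=x^2$ satisfies the hypotheses, yet $f(1)+f(0) = 1 > 1/2 = 2f(1/2)$. (The sign slip is harmless in the paper's downstream use of the theorem because the two random samples there are exchangeable, but your proof should state and establish the correct direction.) There is also a minor range slip in the statement — the quantifier should be $k\in\{2,\dots,n\}$ rather than $\{1,\dots,n-1\}$, since otherwise $D_{n-1}$ is uncontrolled — worth flagging but not a problem with your method.
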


In particular, to show that a sequence $(\lambda_k^{(1)})_{k=1}^{n}$ is \emph{not} majorised by $(\lambda_k^{(2)})_{k=1}^{n}$, it suffices to exhibit a convex function $f$ for which $\sum_{k=1}^{n} f(\lambda_k^{(1)}) > \sum_{k=1}^{n} f(\lambda_k^{(2)})$. To prove \cref{thm:Nielsen-RMT}, we will consider an explicit sequence of convex test functions $f_i$, and show that the events of the form $\sum_{k=1}^{n} f_i(\lambda_k^{(1)}) > \sum_{k=1}^{n} f_i(\lambda_k^{(2)})$ (for different $i$) are nearly independent from each other, meaning that it is very likely that at least one of these events holds (and therefore $(\lambda_k^{(1)})_{k=1}^{n}$ is likely not majorised by $(\lambda_k^{(2)})_{k=1}^{n}$).

\subsection{A central limit theorem for linear statistics, in the approximately balanced case}
The other key general ingredient we will need is a (joint) central limit theorem for linear statistics of the complex Wishart--Laguerre spectrum. To start with, we present a central limit theorem that is suitable for the $c<\infty$ case (i.e., the case where $m$ and $n$ are approximately balanced).
The statement of our central limit theorem requires some preparation.

\begin{definition}\label{def:CLT-prep}
    For continuous functions $f,g:\mb R\to \mb R$, define
    \begin{align*}\Gamma(f,g)&=\frac{1}{\pi^2} \int_{-1}^{1}\int_{-1}^{1} \left(\frac{f(x) - f(y)}{x-y} \right) \left(\frac{g(x) - g(y)}{x-y} \right) \frac{1 - xy}{\sqrt{1 - x^2} \sqrt{1 - y^2}}\,dx\,dy.\end{align*}
    Then, for $c \in [1,\infty)$, let $a_{\pm} = (1 \pm \sqrt{c})^2$ and $\Phi_c(x) = 2\sqrt{c} x + c + 1$, and for continuous $f,g:\mb R\to \mb R$, define
    \[\gamma_c(f) = \frac{1}{2\pi} \int_{a_-}^{a_+} f(x)\frac{\sqrt{(a_+ - x)(x - a_-)}}{x} \,dx,\qquad 
    \Gamma_c(f,g) = \Gamma(f\circ \Phi_c,g\circ \Phi_c).\]
    Also, let $\mc F$ be the set of all continuously differentiable functions $f:\mb R\to \mb R$ with ``sub-exponential growth'', in the sense that $(\log |f(x)|)/|x|\to 0$ when $|x|\to \infty$. 
\end{definition}
\begin{theorem}\label{thm:general-CLT}
    Fix functions $f_1,\dots,f_\ell\in \mc F$ and let $\mu_1,\dots,\mu_n$ be the eigenvalues of a random matrix sampled from the complex Wishart--Laguerre ensemble with parameters $m, n$.  Suppose $m,n \to\infty$ in such a way that $m/n \to c \in [1,\infty)$.      
    For each $i\in\{1,\dots,\ell\}$ let
\[X_i= \sum_{k = 1}^n f_i(\mu_k/n).\]
Then the following two asymptotic properties hold.
\begin{enumerate}
    \item \label{item:LLN} For each $i\in\{1,\dots,\ell\}$, we have the $L^1$-convergence \[\E \big|X_i/n- \gamma_c(f_i)\big|\to 0.\]
    \item \label{item:CLT} We have the multivariate convergence in distribution\[\big(X_1-\mb EX_1,\;\dots,\;X_\ell-\mb EX_\ell)\overset d\to \mc N(\vec 0,\Sigma),\] where the limiting covariance matrix $\Sigma\in \mb R^{\ell\times\ell}$ has $(i,j)$-entry equal to $\Gamma_c(f_i,f_j)$.
\end{enumerate}
\end{theorem}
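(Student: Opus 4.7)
The plan is to combine the Marchenko--Pastur law for part (1) with a standard univariate CLT plus approximation for part (2).

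For part (1), the classical Marchenko--Pastur theorem says that the empirical spectral measure $\frac{1}{n}\sum_{k=1}^{n}\delta_{\mu_k/n}$ converges weakly in probability to the measure with density $\rho_c(x) = \frac{1}{2\pi x}\sqrt{(a_+ - x)(x - a_-)}\mathbf{1}_{[a_-,a_+]}(x)$, and one checks directly that $\int f\, d\rho_c = \gamma_c(f)$. To upgrade weak convergence to $L^1$-convergence of $X_i/n$, I would invoke exponential deviation bounds for the extreme eigenvalues $\mu_1/n$ and $\mu_n/n$ around $a_+$ and $a_-$ (these are standard for the complex Wishart--Laguerre ensemble); combined with the sub-exponential growth condition $\log|f_i(x)| = o(|x|)$ built into the definition of $\mc F$, these estimates yield the uniform integrability needed to pass $f_i$ through the empirical measure.

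For part (2), the multivariate CLT, I would use the Cram\'er--Wold device to reduce to proving a univariate CLT for an arbitrary linear combination $f = \sum_i \alpha_i f_i \in \mc F$, with limiting variance $\Gamma_c(f,f)$; the bilinearity of $\Gamma_c$ then identifies all covariances $\Gamma_c(f_i,f_j)$ by polarisation. The univariate CLT is then proved in two stages. In the first stage, for \emph{polynomial} test functions, the CLT for linear eigenvalue statistics of the complex Wishart--Laguerre ensemble is classical (in the spirit of \cite{Wish1,lytova2009central}), and a direct computation confirms the stated form of the limiting variance. The change of variable $\Phi_c:[-1,1]\to[a_-,a_+]$ recasts the Wishart covariance as the canonical Chebyshev-type kernel $\frac{1}{\pi^2}\frac{1-xy}{\sqrt{(1-x^2)(1-y^2)}}$ from \cref{def:CLT-prep}; this is the same kernel that appears for GUE linear statistics, reflecting the universality of bulk fluctuations in unitarily-invariant Hermitian ensembles.

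In the second stage, I would extend from polynomials to arbitrary $f \in \mc F$ by approximation. Fix $\delta > 0$ and set $I = [a_- - \delta, a_+ + \delta]$. With overwhelming probability every eigenvalue $\mu_k/n$ lies in $I$, and the sub-exponential growth of $f$ combined with exponential edge concentration makes the $L^2$ contribution of outliers negligible. On $I$, approximate $f$ by a polynomial $p$ to arbitrary precision in $C^1$-norm and apply the polynomial CLT to $p$. The residual linear statistic $\sum_k (f-p)(\mu_k/n)$ has variance controlled by a $C^1$-norm of $f - p$ on a neighbourhood of $[a_-, a_+]$, \emph{uniformly in $n$} --- a surprisingly strong bound that is a direct manifestation of eigenvalue repulsion, since smooth linear statistics of Hermitian random matrices fluctuate at size $O(1)$ rather than $O(\sqrt n)$.

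The main obstacle is precisely this last ingredient: a quantitative, uniform-in-$n$ variance bound for $C^1$ linear statistics of the complex Wishart--Laguerre spectrum, phrased in terms of a $C^1$-norm on a neighbourhood of the Marchenko--Pastur support. Such bounds are by now standard in the random matrix theory literature (they follow from eigenvalue rigidity or from direct Stieltjes-transform estimates), but extracting them in exactly the form needed to close the approximation argument is the technical heart of the proof.
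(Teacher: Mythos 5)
Your proposal shares the same skeleton as the paper's proof: Marchenko--Pastur plus edge concentration and uniform integrability for the $L^1$ law of large numbers, a univariate CLT for linear statistics combined with the Cram\'er--Wold device for the multivariate statement, and a truncation argument to pass from the compactly-supported setting to the class $\mc F$ of sub-exponentially growing test functions. The genuine difference is in how you produce the univariate CLT. The paper simply cites Lytova--Pastur \cite[Theorem~4.2]{lytova2009central}, which is already stated for $C^1$ test functions with compact support (and whose limiting variance, after the affine change of variables $\Phi_c$, matches the kernel $\Gamma$), so the only work remaining is the truncation from $\mc F$ to compactly-supported $\mc F_0$ plus Cram\'er--Wold. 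You instead prove the CLT first for polynomial test functions and then bootstrap to general $C^1$ functions by a polynomial approximation argument, which requires the extra ingredient of a uniform-in-$n$ variance bound for $C^1$ linear statistics. That bound is real and available (it is essentially what the Lytova--Pastur machinery proves, and can also be obtained from rigidity or from Stieltjes-transform local laws), so the route is sound, but you are in effect re-deriving a piece of \cite{lytova2009central} rather than using it. The trade-off is that your route is more self-contained at the level of which results are invoked, at the cost of carrying through an additional approximation layer whose error control is, as you note yourself, the technical heart; the paper's route sidesteps this by choosing a reference that already covers the needed class of test functions. Neither route is wrong, and for the purposes of the later applications (polynomial test functions $x\mapsto x^i$) your version would actually suffice without the $C^1$-approximation step at all, which is worth noting: you could have proved the theorem only for polynomials and still carried out the rest of the paper's argument.
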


\cref{thm:general-CLT} can be proved by combining a few different results in the literature. The asymptotic formulas for the means in (\cref{item:LLN}) go back to the seminal work of Marchenko and Pastur~\cite{MP67}, and a central limit theorem was first proved by Arharov~\cite{Arh71}. However, Arharov
did not provide formulas for the covariances; for the covariance formulas in (\cref{item:CLT}), we use a result of Lytova and Pastur~\cite{lytova2009central}.
There are three differences between these results and \cref{thm:general-CLT}: first, our result is valid for a larger class of test functions; second, in (\cref{item:LLN}), we obtain convergence in mean (whereas previous results obtain almost-sure convergence or convergence in probability---the distinction will be important for our application), and third, our result is a multivariate central limit theorem (whereas most previous results are univariate). In \cref{sec:CLT} we show how to deduce \cref{thm:general-CLT} from previous results (the deduction is standard, but we include this for completeness).

\begin{remark*}
    It is perhaps more common to study the \emph{real} Wishart--Laguerre ensemble, which is the distribution of $A A^\transpose$ when $A\in \mb R^{n\times m}$ is a random matrix with independent \emph{real} standard Gaussian entries. The distinction between the real and complex case does not affect \cref{thm:general-CLT}\eqref{item:LLN}, and affects the limiting covariances in \cref{thm:general-CLT}\eqref{item:CLT} by a factor of exactly 2 (i.e., the covariances are twice as large in the real case as in the complex case); see e.g.\ \cite[Remark~4.1]{lytova2009central}.
\end{remark*}

In our proof of \cref{thm:Nielsen-RMT}, the only reason we need \cref{thm:general-CLT} is for the following corollary, providing a central limit theorem for sums of powers of trace-normalised eigenvalues.

\begin{corollary}\label{cor:c<infty}
    Let $\mu_1,\dots,\mu_n$ be the eigenvalues of a random matrix sampled from the complex Wishart--Laguerre ensemble with parameters $m\geq n$ and let $\lambda_1,\ldots,\lambda_n$ denote the trace-normalised eigenvalues. Fix $\ell$, and for each $i\in\{1,\dots,\ell\}$ let
    \[X_i= \sum_{k = 1}^n (\mu_k/n)^i \quad \text{ and } \quad Y_i = \sum_{k = 1}^n \lambda_k^i.\]
    Suppose $n,m\to \infty$ in such a way that $m/n\to c$, for some fixed $c\in [1,\infty)$.  Then we have multivariate convergence in distribution $$\Bigg( n\bigg(Y_i \cdot \frac{(\E X_1)^i}{\E X_i} - 1\bigg) \Bigg)_{i \leq \ell} \overset d \to  \left(\frac{Z_i}{\gamma_c(x^i)} - i \frac{Z_1}{\gamma_c(x)}\right)_{i \leq \ell}$$
   where $(Z_i)_{i \leq \ell} \sim \mathcal{N}(\vec 0, \Sigma)$ for a matrix
$\Sigma\in \mb R^{\ell\times\ell}$ with $(i,j)$-entry $\Gamma_c(x^i,x^j)$.
\end{corollary}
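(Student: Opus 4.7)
The plan is to express $Y_i$ as a smooth function of the raw power-sum statistics $X_j$ and then invoke Theorem~\ref{thm:general-CLT} via the multivariate delta method. The test functions $f_j(x)=x^j$ are polynomials and hence lie in $\mc F$, so Theorem~\ref{thm:general-CLT} applies; in particular, part~(1) gives $\E X_j/n \to \gamma_c(x^j) > 0$ (strict positivity following because the integrand in \cref{def:CLT-prep} is a continuous, strictly positive function on the interior of $[a_-,a_+]$).

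Since $\mathrm{tr}(M) = \sum_k \mu_k = n X_1$, we have $\lambda_k = \mu_k/(nX_1)$, and hence
\[ Y_i \;=\; \frac{1}{(nX_1)^i}\sum_k \mu_k^i \;=\; \frac{X_i}{X_1^i}. \]
Writing $\delta_j := X_j/\E X_j - 1$, this rewrites as
\[ Y_i\cdot \frac{(\E X_1)^i}{\E X_i} - 1 \;=\; (1+\delta_i)(1+\delta_1)^{-i} - 1. \]
By Theorem~\ref{thm:general-CLT}(2), each $X_j - \E X_j$ is $O_P(1)$, and since $\E X_j = \Theta(n)$ we conclude $\delta_j = O_P(1/n)$. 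For $n$ large enough that $|\delta_1| < 1/2$ with high probability, a first-order Taylor expansion of $(1+y)^{-i}$ yields
\[ (1+\delta_i)(1+\delta_1)^{-i} - 1 \;=\; \delta_i - i\,\delta_1 + O\bigl(\delta_i^2 + \delta_1^2 + |\delta_i\delta_1|\bigr), \]
where the remainder is $O_P(1/n^2)$. Multiplying through by $n$ gives
\[ n\left(Y_i\cdot \frac{(\E X_1)^i}{\E X_i} - 1\right) \;=\; \frac{X_i-\E X_i}{\E X_i/n} \;-\; i\,\frac{X_1-\E X_1}{\E X_1/n} \;+\; o_P(1). \]

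Combining the deterministic limits $\E X_j/n \to \gamma_c(x^j)$ from Theorem~\ref{thm:general-CLT}(1) with the joint convergence $(X_j-\E X_j)_{j\le \ell} \overset{d}{\to} (Z_j)_{j\le \ell} \sim \mc N(\vec 0, \Sigma)$ from Theorem~\ref{thm:general-CLT}(2), Slutsky's theorem applied jointly across $i \in \{1,\dots,\ell\}$ (or, equivalently, the continuous mapping theorem for the linear map $(a_j)_{j \leq \ell} \mapsto (a_i/\gamma_c(x^i) - i\,a_1/\gamma_c(x))_{i \leq \ell}$) produces exactly the claimed limit vector $\bigl(Z_i/\gamma_c(x^i) - i\,Z_1/\gamma_c(x)\bigr)_{i\leq \ell}$.

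There is no substantive obstacle here; this is a standard delta-method calculation on top of Theorem~\ref{thm:general-CLT}. The only minor technical point is that the Taylor remainder must be $o_P(1)$ after multiplication by $n$, and this is immediate because Theorem~\ref{thm:general-CLT}(2) provides $O_P(1)$ (rather than merely $o_P(n)$) control on the fluctuations $X_j - \E X_j$, making the quadratic-in-$\delta$ error $O_P(1/n^2)$ rather than something weaker.
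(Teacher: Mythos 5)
Your argument is correct and follows essentially the same route as the paper: rewrite $Y_i\cdot(\E X_1)^i/\E X_i$ as $(1+\Zt_i/\E X_i)(1+\Zt_1/\E X_1)^{-i}$, Taylor-expand to pick out the linear terms with an $O_P(n^{-2})$ remainder (justified by the $O_P(1)$ fluctuations from Theorem~\ref{thm:general-CLT}(2) and $\E X_j=\Theta(n)$ from Theorem~\ref{thm:general-CLT}(1)), then multiply by $n$ and pass to the limit via joint convergence and Slutsky. The only cosmetic difference is that you parameterise via $\delta_j=X_j/\E X_j-1$ rather than $\Zt_j=X_j-\E X_j$; the substance and the use of both parts of Theorem~\ref{thm:general-CLT} (including the $L^1$ form of part~(1) to control $\E X_j/n$) match the paper.
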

(Here we are abusing notation slightly, writing $x^i$ to indicate the function $x\mapsto x^i$).

\begin{proof}
    Writing $\Zt_i = X_i - \E X_i$, note that by \cref{thm:general-CLT} we have convergence in distribution $(\Zt_i)_{i \leq \ell} \to (Z_i)_{i \leq \ell}.$  We may then expand \begin{align}
    \label{eq:Y-expand}
    Y_i \cdot \frac{(\E X_1)^i}{\E X_i} = \frac{X_i}{X_1^i} \cdot \frac{(\E X_1)^i}{\E X_i}  = \left(1 + \frac{\Zt_i}{\E X_i} \right)\left( 1 + \frac{\Zt_1}{\E X_1}\right)^{-i}  = 1 + \frac{\Zt_i}{\E X_i} - \frac{i \Zt_1}{\E X_1} + O(n^{-2}).
    \end{align}
    Here, asymptotics are ``in probability'', treating $c,i$ as constants: the notation $O(n^{-2})$ denotes a random variable $E_{c,i}$ such that $n^2 E_{c,i}$ is bounded in probability (as $n\to \infty$, holding $c,i$ fixed). This estimate for $E_{c,i}$ follows from the fact that the random variables $\Zt_i$ are bounded in probability (which is a consequence of \cref{thm:general-CLT}\eqref{item:CLT}) and the fact that $\E X_i = n\gamma_{c}(x^i) + o(n)$ has order of magnitude $n$ (here we used \cref{thm:general-CLT}\eqref{item:LLN}, and the observation that $\gamma_c(x^i) \in (0,\infty)$ for all $c \in [1,\infty)$ and $i \geq 1$).

    Rearranging \eqref{eq:Y-expand} and using again that $\E X_i = n\gamma_c(x^i) + o(n)$ completes the proof.
\end{proof}

\begin{remark*}
Our choice of test functions $x\mapsto x^i$ is mainly to keep the proof of \cref{cor:c<infty} (and its counterpart for the imbalanced case, \cref{cor:c=infty}) as simple as possible. If one were interested in optimising the quantitative aspects of our proof strategy, one should presumably consider alternative test functions.
\end{remark*}

\subsection{The covariance structure, and putting the pieces together} \label{subsec:c<inf} In order to use \cref{cor:c<infty}, we will need some control over the quantities $\gamma_c$ and $\Gamma_c$, which describe the limiting means and covariances of our test functions. This is the content of the next lemma, whose proof we defer to \cref{sec:integral-estimates}. 
Recall the definition of $a_+$ from \cref{def:CLT-prep}.

\begin{lemma}\label{lem:technical}
    Fix $c \in [1,\infty)$. For $i \in \mathbb{N}$, let $h_i(x) = (x/a_+)^i$.
    \begin{enumerate}
        \item We have $\lim_{i \to \infty}  i\gamma_c( h_i ) = 0$. \label{item:gamma}
        \item The limit $\alpha_c=\lim_{i \to \infty} \Gamma_c(h_i,h_i)$ exists, and satisfies $\alpha_c>0$.
        \label{item:cov-diagonal}
        \item We have $\lim_{A \to \infty} \lim_{i \to \infty} \Gamma_c(h_i, h_{Ai}) = 0$.
        \label{item:cov-off-diagonal}
    \end{enumerate}    
\end{lemma}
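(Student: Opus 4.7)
The plan is to analyse all three statements by focusing on the right-edge asymptotics of high powers, combined with a Chebyshev decomposition of the kernel defining $\Gamma$. Throughout, set $\beta := 2\sqrt c /a_+ \in (0, 1/2]$ and $g_i(x) := h_i(\Phi_c(x)) = (1-\beta(1-x))^i$; the function $g_i$ (defined on $[-1,1]$ and peaking at $x=1$) is what controls $\Gamma_c(h_i,\cdot)$, since $\Gamma_c(h_i,h_j)=\Gamma(g_i,g_j)$.

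For \cref{item:gamma}, the integrand in $\gamma_c(h_i) = \frac{1}{2\pi}\int_{a_-}^{a_+}(x/a_+)^i\sqrt{(a_+-x)(x-a_-)}/x\,dx$ concentrates at $x=a_+$. Substituting $x = a_+(1-u)$ and using $a_+ - a_- = 4\sqrt c$ to approximate $\sqrt{(a_+-x)(x-a_-)} \approx 2c^{1/4}\sqrt{a_+ u}$ and $x \approx a_+$ near $u=0$, a standard Laplace computation yields
\[
\gamma_c(h_i) \sim \frac{c^{1/4}\sqrt{a_+}}{\pi}\int_0^1 (1-u)^i\sqrt u\,du = \frac{c^{1/4}\sqrt{a_+}}{\pi}B\!\left(\tfrac{3}{2},i+1\right) = \Theta(i^{-3/2}),
\]
using $B(3/2,i+1) \sim \sqrt\pi/(2i^{3/2})$. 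Hence $i\gamma_c(h_i) = O(i^{-1/2}) \to 0$.

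For \cref{item:cov-diagonal,item:cov-off-diagonal}, the essential tool is the identity
\[
\Gamma(f,g) = \sum_{k\ge 1} k\,\hat f_k\,\hat g_k, \qquad \hat f_k := \frac{2}{\pi}\int_0^\pi f(\cos\theta)\cos(k\theta)\,d\theta,
\]
which is classical in the CLT for linear statistics on arcsine-type ensembles (CUE/GUE), and is verifiable by substituting $x=\cos\theta$, $y=\cos\phi$, using the identity $(1-xy)/(x-y)^2 = \tfrac14\bigl(1/\sin^2\tfrac{\theta+\phi}{2}+1/\sin^2\tfrac{\theta-\phi}{2}\bigr)$, and expanding both $f$ and $g$ in the cosine basis. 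Writing $1-\beta(1-\cos\theta) = 1-2\beta\sin^2(\theta/2)$, the integrand defining $\hat g_{i,k}$ concentrates near $\theta=0$, so Laplace's method (on the quadratic $-\tfrac{1}{2}i\beta\theta^2$) gives the Gaussian profile
\[
\hat g_{i,k} = \sqrt{\tfrac{2}{\pi i\beta}}\,\exp\!\bigl(-k^2/(2i\beta)\bigr)\,(1+o(1)) \qquad \text{uniformly in } k = O(\sqrt{i}),
\]
with superpolynomial decay for larger $k$ (and $\hat g_{i,k}=0$ for $k>i$, since $g_i$ is a polynomial of degree $i$). Substituting into the Chebyshev identity and passing to a Riemann sum,
\[
\Gamma_c(h_i,h_i) = \sum_{k\ge 1} k\,\hat g_{i,k}^2 \longrightarrow \tfrac{2}{\pi}\!\int_0^\infty t\,e^{-t^2}\,dt = \tfrac{1}{\pi},
\]
proving \cref{item:cov-diagonal} with $\alpha_c = 1/\pi$ (notably, independent of $c$). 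The same calculation applied to the off-diagonal sum $\sum_k k\,\hat g_{i,k}\hat g_{Ai,k}$ gives the limit $2\sqrt A/(\pi(A+1))$, which tends to $0$ as $A\to\infty$, proving \cref{item:cov-off-diagonal}.

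The main obstacle is to make the Laplace asymptotics rigorous uniformly in $k$, and in particular to show that $\sum_{k>K\sqrt i}k\,\hat g_{i,k}^2 = o_K(1)$ uniformly in $i$, so that the limit $i\to\infty$ may be interchanged with the sum over $k$. A clean way to handle these tails is through the contour-integral formula for $\hat g_{i,k}$: factoring $a+b\cos\theta = (b/(2\rho z))(z+\rho)(1+\rho z)$, where $z = e^{i\theta}$, $a = 1-\beta$, $b=\beta$, and $\rho + 1/\rho = 2(1-\beta)/\beta$, expresses $\hat g_{i,k}$ as an explicit finite sum of binomials in $i,k,\rho$, from which a termwise tail bound and a rigorous Gaussian comparison are straightforward.
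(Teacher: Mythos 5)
For \cref{item:gamma} your approach essentially matches the paper's \cref{lem:gamma-limit}: a change of variables concentrating at the right edge $x = a_+$, yielding $\gamma_c(h_i) = \Theta(i^{-3/2})$ via a Beta/Gamma integral (a small inaccuracy: the $1/x$ in the integrand contributes a factor $1/(1-u)$, so your $(1-u)^i$ should really be $(1-u)^{i-1}$, but this does not affect the $i^{-3/2}$ rate).

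For \cref{item:cov-diagonal,item:cov-off-diagonal} you take a genuinely different route. The paper changes variables $(x,y)\mapsto\big(1-s/(\beta k),\,1-t/(\beta k)\big)$ directly in the double integral defining $\Gamma_c(h_k,h_{Ak})$, dominates the integrand via the elementary bound $(1-a)^k-(1-b)^k\le e(e^{-ak}-e^{-bk})$ (\cref{fact:MVT-for-exp}), and applies dominated convergence to land on an explicit limiting double integral, which is then bounded by $A^{-1/2}(1+\log A)$ (\cref{lem:cov-limit-finite-c,lem:A-limit}). You instead invoke the Chebyshev identity $\Gamma(f,g)=\sum_{k\ge 1}k\,\hat f_k\hat g_k$ (which I checked on $T_1,T_2$ and which is indeed correct, following from the $1/\sin^2$ decomposition of the kernel), compute the Chebyshev coefficients $\hat g_{i,k}$ by Laplace's method in the angle variable, and sum. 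This is a nice alternative and produces a cleaner closed form $\alpha_c=1/\pi$; the paper leaves $\alpha_c$ as an unevaluated double integral, though both show $\alpha_c$ is $c$-independent. You also correctly recover the $\Theta(A^{-1/2})$ decay of $\Gamma_c(h_i,h_{Ai})$ predicted by the paper.

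However, as you yourself flag, the argument as written has a genuine gap: passing $i\to\infty$ through the sum over $k$ requires a uniform tail bound $\sum_{k>K\sqrt i}k\,\hat g_{i,k}^2=o_K(1)$, and you only sketch a contour-integral route to this without carrying it out. The Laplace estimate for $\hat g_{i,k}$ is claimed only for $k=O(\sqrt i)$, with ``superpolynomial decay'' for larger $k$ asserted without proof; the fact that $\hat g_{i,k}=0$ for $k>i$ still leaves the range $\sqrt i\ll k\le i$ to control, and a naive integration-by-parts bound on $\hat g_{i,k}$ is insufficient since $\|g_i''\|_\infty$ grows polynomially in $i$. The paper's more direct approach avoids this entirely, since the domination by $e(e^{-s}-e^{-t})/(s-t)$ holds pointwise over the whole domain and the resulting limiting integral is elementary to bound. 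So your proposal is a correct strategy with a missing technical step, whereas the paper's argument is self-contained.
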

Recall the random variables $Y_i=\sum_{k=1}^n \lambda_k^i$ from \cref{cor:c<infty}, which can be interpreted as linear statistics associated with convex test functions $f_i:x\mapsto x^i$. The estimates in \cref{lem:technical} can be interpreted as saying that the fluctuations of $Y_i$ and $Y_j$ are nearly independent, as long as $i$ is very large and $j$ is much larger than $i$ (specifically, \eqref{item:gamma} can be used to show that correlations due to trace-normalisation are not very impactful, and \eqref{item:cov-diagonal} and \eqref{item:cov-off-diagonal} can then be used together to show that the covariance between the non-trace-normalised random variables $X_i$ and $X_j$ is negligible compared to the variances of $X_i$ and $X_j$ individually).

\begin{remark*}
One can compute the relevant integrals in \cref{lem:technical}, and apply them to prove \cref{thm:Nielsen-RMT}, without any intuitive understanding of \emph{why} estimates of this type should hold. However, roughly speaking, the picture to keep in mind is that if $j$ is much larger than $i$, then the largest eigenvalues (and their fluctuations) contribute much more strongly to $Y_j$ than to $Y_i$. That is to say, $Y_j$ is dominated by a few very large eigenvalues, which play only a negligible role in $Y_i$. So, intuitively speaking, \cref{lem:technical} corresponds to the fact that different parts of the Wishart--Laguerre spectrum have nearly independent fluctuations, and the sizes of these fluctuations do not significantly decay as we approach the upper edge of the spectrum (the latter is important as there is a ``competition'' between these fluctuations and the fluctuations due to trace-normalisation). We remark that the \emph{lower} edge of the spectrum does not enjoy these properties: since all eigenvalues are always at least zero, the fluctuations of the smallest eigenvalues are more constrained. The upper and lower edges of the spectrum are sometimes called the ``soft edge'' and ``hard edge'' for this reason.
\end{remark*}

Given the preparations in this section so far, we can now prove Nielsen's conjecture in the case where $m/n\to c$ for $c\in [1,\infty)$.

\begin{proof}[Proof of \cref{thm:Nielsen-RMT}, in the case $m/n\to c\in [1,\infty)$] 
    
    Let $\mathcal E_{\mr{maj}}$ be the event that $\lambda_k^{(1)}+\dots+\lambda_n^{(1)}\le \lambda_k^{(2)}+\dots+\lambda_n^{(2)}$ for all $k \leq n$. Recall that we wish to show that $\mb{P}[\mc{E}_{\mr{maj}}] \to 0$, when $m,n \to \infty$ in such a way that $m/n\to c\in [1,\infty)$.
    
    For $s \in \{1,2\}$ define $Y_i^{(s)} = \sum_{j = 1}^n (\lambda_j^{(s)})^i.$ 
    By convexity of the functions $x \mapsto x^i$ on $[0,\infty)$, and \cref{thm:convex-majorisation}, we see that on the event $\mathcal{E}_{\mathrm{maj}}$ 
    we must have $Y_i^{(1)} \leq Y_i^{(2)}$.  By \cref{cor:c<infty} this implies that for any $\ell$, we have $$\limsup_{n \to \infty} \P[\mathcal{E}_{\mathrm{maj}}] \leq \P\left[Z_i^{(1)} \leq Z_i^{(2)} - \frac{i \gamma_c(x^i)}{\gamma_c(x)}\big(Z_1^{(2)} - Z_1^{(1)} \big) \, \text { for all } i  \leq \ell \right],$$
    where $(Z_i^{(1)})_{i \leq \ell}$ and $(Z_i^{(2)})_{i \leq \ell}$ are independent Gaussian vectors as defined in \cref{cor:c<infty}.

    Since the left-hand side is independent of $\ell$, we may take $\ell\to\infty$. Writing
\[
\mc E_\ell=\left\{Z_i^{(1)} \leq Z_i^{(2)} - \frac{i \gamma_c(x^i)}{\gamma_c(x)}\bigl(Z_1^{(2)} - Z_1^{(1)} \bigr)\, \text{for all } i \le \ell \right\},
\]
note that the events $\mc E_\ell$ are decreasing in $\ell$. Therefore, by continuity from above,
\begin{align*}
\limsup_{n \to \infty} \P[\mathcal{E}_{\mathrm{maj}}]
\le \lim_{\ell \to \infty}\P[\mc E_\ell] = \mb{P}\left[Z_i^{(1)} \leq Z_i^{(2)} - \frac{i \gamma_c(x^i)}{\gamma_c(x)}\bigl(Z_1^{(2)} - Z_1^{(1)} \bigr)\, \text{for all } i\right].
\end{align*}

Define $W_i^{(s)} = Z_i^{(s)} / a_+^i$, and let $h_i(x)=(x/a_+)^i$ as in \cref{lem:technical}. Note that the covariance between $W_i^{(s)}$ and $W_j^{(s)}$ is precisely $\Gamma_c(h_i,h_j)$, and note that
\begin{align*}
\P\left[Z_i^{(1)} \leq Z_i^{(2)} - \frac{i \gamma_c(x^i)}{\gamma_c(x)}\bigl(Z_1^{(2)} - Z_1^{(1)} \bigr)\, \text{for all } i\right]
= \P\left[W_i^{(1)} \leq W_i^{(2)} - \frac{i \gamma_c(h_i)}{\gamma_c(x)}\bigl(Z_1^{(2)} - Z_1^{(1)}\bigr)\, \text{for all } i \right].
\end{align*}

Now, for each $j,q,A \in \mathbb{N}$, set $i_j(q,A) = q A^j$. So, by \cref{lem:technical}, as we send $A,q\to \infty$ (where $A\to \infty$ much more rapidly than $q\to \infty$), we observe:
\begin{enumerate}
    \item $i_j(q,A) \gamma_c(h_{i_j(q,A)})\to 0$;
    \item the variances of the individual $W_{i_j(q,A)}^{(s)}$ tend to $\alpha_c>0$;
    \item the covariances between different $W_{i_j(q,A)}^{(s)}$ tend to zero.
\end{enumerate}

Fix $N\in \mb N$. For each $q,A$, the random vector
\[
\Bigl(W_{i_j(q,A)}^{(1)},W_{i_j(q,A)}^{(2)}\Bigr)_{j\le N}
\]
is centred Gaussian. By the three observations above, its covariance matrix converges entrywise, as first $q\to\infty$ and then $A\to\infty$, to the diagonal matrix $\alpha_c I_{2N}$. Since a centred Gaussian law is determined by its covariance matrix, it follows that
\[
\Bigl(W_{i_j(q,A)}^{(1)},W_{i_j(q,A)}^{(2)}\Bigr)_{j\le N}
\overset d\to
\Bigl(\zeta_j^{(1)},\zeta_j^{(2)}\Bigr)_{j\le N},
\]
where $(\zeta_j^{(1)},\zeta_j^{(2)})_{j\le N}$ are i.i.d.\ real Gaussians with mean zero and variance $\alpha_c$.

Also, if we define
\[
b_{j,q,A}=\frac{i_j(q,A)\gamma_c(h_{i_j(q,A)})}{\gamma_c(x)},
\]
then by (\cref{item:gamma}) we have $\max_{j\le N}|b_{j,q,A}|\to 0$ as first $q\to\infty$ and then $A\to\infty$. Since $Z_1^{(2)} - Z_1^{(1)}$ is bounded in probability, it follows that
\[
\Bigl(b_{j,q,A}\bigl(Z_1^{(2)} - Z_1^{(1)}\bigr)\Bigr)_{j\le N}\to \vec 0
\qquad\text{in probability.}
\]
Therefore, by Slutsky's theorem,
\[
\Bigl(W_{i_j(q,A)}^{(2)} - W_{i_j(q,A)}^{(1)} - b_{j,q,A}\bigl(Z_1^{(2)} - Z_1^{(1)}\bigr)\Bigr)_{j\le N}
\overset d\to
\Bigl(\zeta_j^{(2)} - \zeta_j^{(1)}\Bigr)_{j\le N}.
\]
Since the limiting law is absolutely continuous, the boundary of the orthant event has probability zero, and hence
\begin{align*}
\P&\left[W_i^{(1)} \leq W_i^{(2)} - \frac{i \gamma_c(h_i)}{\gamma_c(x)}\bigl(Z_1^{(2)} - Z_1^{(1)}\bigr)\, \text{for all } i \right] \\
&\le \lim_{A \to \infty}\lim_{q \to \infty} \P\left[W_{i_j(q,A)}^{(1)} \leq W_{i_j(q,A)}^{(2)} - \frac{i_j(q,A) \gamma_c(h_{i_j(q,A)})}{\gamma_c(x)}\bigl(Z_1^{(2)} - Z_1^{(1)}\bigr)\, \text{for all } j \le N \right] \\
&= \P\big[\zeta_j^{(1)} \leq \zeta_j^{(2)}\, \text{for all } j \le N\big] = 2^{-N}.
\end{align*}
Taking $N\to\infty$ completes the proof.
\end{proof}

\subsection{The imbalanced case} In this subsection, we present analogues of \cref{thm:general-CLT}, \cref{cor:c<infty}, and \cref{lem:technical} for the ``imbalanced'' case where $m/n\to \infty$. We then deduce the statement of \cref{thm:Nielsen-RMT} in this case, and provide the details for how to deduce the full statement of \cref{thm:Nielsen-RMT} from the approximately-balanced and imbalanced cases.

The main technical difficulty in the imbalanced case is that the spectrum of the complex Wishart--Laguerre ensemble is concentrated around $m$ (i.e., typically all eigenvalues are of the form $m-o(m)$). This means that linear statistics $Y_i=\sum_{k=1}^{n} \lambda_k^i$ of the type considered in \cref{cor:c<infty} have variance $o(1)$, and their limiting distribution is degenerate.

The root of the problem is that the normalisation in \cref{thm:general-CLT} (considering statistics of the form $\sum_{k=1}^nf(\mu_k/n)$) is not appropriate in the imbalanced regime, and to get a sensible central limit theorem, one should renormalise with an ``additive shift'': namely, one should consider statistics of the form $\sum_{k=1}^nf\big((\mu_k-m)/\sqrt{nm}\big)$. The following central limit theorem is an analogue of \cref{thm:general-CLT} with this renormalisation.
\begin{definition}\label{def:CLT-prep-2}
    Recall the set of functions $\mc F$ with ``sub-exponential growth'', and the notation $\Gamma(f,g)$, from \cref{def:CLT-prep}.    
    For a continuous function $f:\mb R\to \mb R$, let
    $$\gamma(f) = \frac{1}{\pi}\int_{-1}^1 f(x) \sqrt{1 - x^2}\,dx.$$
\end{definition}

\begin{theorem}\label{thm:general-CLT-infinite}
Fix functions $f_1,\dots,f_\ell\in \mc F$ and let $\mu_1,\dots,\mu_n$ be the eigenvalues of a random matrix sampled from the complex Wishart--Laguerre ensemble with parameters $m, n$.  Suppose $m,n \to\infty$ in such a way that $m/n \to \infty$. For each $i\in \{1,\dots,\ell\}$ let
\[X_i= \sum_{k = 1}^n f_i\left(\frac{\mu_k - m}{2\sqrt{nm}}\right).\]
Then the following two asymptotic properties hold.
\begin{enumerate}
    \item \label{item:LLN-infinite} For each $i\in\{1,\dots,\ell\}$, we have the $L^1$-convergence \[\E \big|X_i/n- \gamma(f_i)\big|\to 0.\]
    \item \label{item:CLT-infinite} We have the multivariate convergence in distribution\[\big(X_1-\mb EX_1,\;\dots,\;X_\ell-\mb EX_\ell)\overset d\to \mc N(\vec 0,\Sigma),\] where $\Sigma\in \mb R^{\ell\times\ell}$ is a matrix with $(i,j)$-entry $\Gamma(f_i,f_j)$.
\end{enumerate}
\end{theorem}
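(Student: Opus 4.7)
The plan is to derive \cref{thm:general-CLT-infinite} in a manner completely parallel to the proof of \cref{thm:general-CLT}, by combining known convergence results for the spectrum of the complex Wishart--Laguerre ensemble with the appropriate Gaussian fluctuation theorems, specialized to the imbalanced regime $m/n \to \infty$ and to the shift-and-scale normalization $\mu_k \mapsto (\mu_k - m)/(2\sqrt{nm})$. The heuristic driving the entire argument is that when $m \gg n$, the matrix $(W - mI)/\sqrt{m}$ behaves asymptotically like a rescaled GUE matrix of size $n \times n$ (as one sees immediately from moment/cumulant computations, since $W = \sum_{k=1}^m g_k g_k^\dagger$ is a sum of $m$ i.i.d.\ rank-one Hermitian matrices with bounded moments). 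Consequently, after the above rescaling, the empirical spectral distribution should converge to the semicircle law on $[-1,1]$, which gives the LLN limit $\gamma(f) = \frac{1}{\pi}\int_{-1}^1 f(x)\sqrt{1-x^2}\,dx$, and linear statistics should fluctuate like GUE linear statistics, yielding the covariance kernel $\Gamma(f,g)$.

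For part~\eqref{item:LLN-infinite}, I would invoke the classical Mar\v{c}enko--Pastur / Jonsson-type convergence results in the regime $m/n \to \infty$: under the shift-and-scale normalization, the empirical spectral measure $\frac{1}{n}\sum_{k=1}^n \delta_{(\mu_k-m)/(2\sqrt{nm})}$ converges weakly in probability to the semicircle on $[-1,1]$. To upgrade to $L^1$-convergence for $f \in \mc F$, one establishes uniform integrability via tail bounds on the extreme eigenvalues: the extreme singular values $\sqrt{\mu_1}$ and $\sqrt{\mu_n}$ of $G$ are $1$-Lipschitz functions of the Gaussian entries of $G$, so Gaussian concentration yields sub-Gaussian tails around their expectations $\sqrt{m} \pm \sqrt{n}$, which combined with the sub-exponential growth of $f$ makes the contribution of eigenvalues outside any fixed neighborhood of $[-1,1]$ negligible in $L^1$.

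For part~\eqref{item:CLT-infinite}, I would first reduce to polynomial test functions via a truncation-and-polynomial-approximation argument (uniformly approximating $f_i$ on a large compact window by a polynomial, and using the aforementioned tail estimates to control the error). For polynomial test functions $f(x) = x^k$, the fluctuation statistics are polynomials in the normalized traces $\mr{tr}(((W-mI)/(2\sqrt{nm}))^j)$, for which a multivariate CLT is already available in the literature (for instance, one can apply the Lytova--Pastur CLT for complex Wishart--Laguerre matrices and re-expand in the new shifted variable, or invoke Bai--Silverstein-type results). The remaining task is a deterministic computation: one verifies that the Lytova--Pastur covariance formula, after the change of variables $\mu \mapsto (\mu-m)/(2\sqrt{nm})$ and then sending $c = m/n \to \infty$, collapses to the GUE kernel $\Gamma(f,g)$ on $[-1,1]^2$. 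This reflects the general principle that, under the correct shift and scale, Wishart--Laguerre fluctuations in the imbalanced regime are indistinguishable from GUE fluctuations.

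The main obstacle I anticipate is the interaction between the new (much finer) scaling and the class $\mc F$: because the rescaling now involves the additive shift by $m$, one must be careful in the truncation step that tail contributions from eigenvalues of order $m \pm O(\sqrt{nm})$ really do behave sub-exponentially after being fed through $f$. A second, more book-keeping-level issue is verifying the collapse of the Lytova--Pastur kernel to the GUE kernel in the $c \to \infty$ limit after the change of variables, which amounts to a contour-deformation or explicit integral computation in the spirit of the analogous step in the balanced case. Once these are handled, everything else is a direct adaptation of the deduction of \cref{thm:general-CLT} carried out in \cref{sec:CLT}.
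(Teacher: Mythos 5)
Your treatment of part~\eqref{item:LLN-infinite} essentially matches the paper's: both invoke semicircle convergence of the shifted, rescaled spectrum in the regime $m/n\to\infty$ (the paper cites Bai--Yin \cite{BY88}) and upgrade convergence in probability to $L^1$-convergence via uniform integrability, supplied by Gaussian-concentration tail bounds on the extreme singular values. The reduction from $f\in\mc F$ to compactly supported test functions via those tail estimates is also the same mechanism (the paper truncates to smooth compactly supported functions in $\mc F_0$ rather than approximating by polynomials, a cosmetic difference).

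The CLT step \eqref{item:CLT-infinite} as you describe it has a genuine gap. You propose to apply the Lytova--Pastur CLT, re-expand in the shifted variable, and then send $c=m/n\to\infty$ in the resulting covariance kernel. But the Lytova--Pastur theorem (and the classical Bai--Silverstein CLT you mention as an alternative) is proved under the hypothesis $m/n\to c$ with $c\in(0,\infty)$ \emph{fixed}; it provides no distributional information in the regime $m/n\to\infty$. Verifying that the Lytova--Pastur kernel collapses to $\Gamma(f,g)$ as $c\to\infty$ is a deterministic integral identity, not a proof that the CLT continues to hold: one must prove the distributional convergence directly when $m$ and $n$ both tend to infinity with $m/n\to\infty$, i.e.\ one must justify interchanging the $n\to\infty$ and $c\to\infty$ limits, which is not automatic. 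This is precisely the content of the CLTs of Bao~\cite{Bao15} and Chen--Pan~\cite{CP15}, which are established directly in the imbalanced regime and which the paper cites for this step. A secondary omission: those references, like Lytova--Pastur, prove \emph{univariate} CLTs, so one still needs a Cram\'er--Wold step to obtain the joint Gaussian limit; the paper carries this out explicitly, whereas your proposal treats the multivariate statement as already available.
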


% As for \cref{thm:general-CLT}, we can prove \cref{thm:general-CLT-infinite} by combining various results in the literature (the deduction can again be found in \cref{sec:CLT}). Specifically, the formulas for the means in (\cref{item:LLN-infinite}) are essentially due to Bai and Yin~\cite{BY88}\footnote{In random matrix theory language, when $m/n \to c < \infty$ (as $n\to \infty$), the spectrum converges to the \emph{Marchenko-Pastur law} with parameter $c$.  As $c \to \infty$, it is not hard to see that the Marchenko--Pastur law (with an appropriate normalisation) converges to the \emph{semicircular law}.  Bai and Yin proved that when $m/n\to \infty$, the  (suitably-normalised) spectrum converges to the semicircular law, thus showing that the $n\to \infty$ and $c\to \infty$ limits commute.}, and a central limit theorem in this regime was proved independently by Bao~\cite{Bao15} and by Chen and Pan~\cite{CP15}.

% \begin{remark*}
%     We have stated \cref{thm:general-CLT-infinite} only for the case $m/n\to \infty$ (as we have already handled the case where $m/n$ converges to a finite limit), but we remark that it would be possible to state a more general (albeit somewhat complicated) central limit theorem that holds for all $m,n\to \infty$, and use this to give a unified proof of \cref{thm:Nielsen-RMT} without a case distinction.
% \end{remark*}

As for \cref{thm:general-CLT}, we can prove \cref{thm:general-CLT-infinite} by combining various results in the literature (the deduction can again be found in \cref{sec:CLT}). The two normalisations used in \cref{thm:general-CLT,thm:general-CLT-infinite} correspond to the two standard asymptotic spectral regimes for Wishart matrices. When $m/n\to c<\infty$, the empirical spectral measure of $(\mu_k/n)_{k=1}^n$ converges to the Marchenko--Pastur law with parameter $c$, which underlies \cref{thm:general-CLT}. When $m/n\to\infty$, after centering at $m$ and scaling by $2\sqrt{mn}$, the empirical spectral measure converges to the semicircular law, which underlies \cref{thm:general-CLT-infinite}. In this sense, \cref{thm:general-CLT-infinite} is the $c\to\infty$ counterpart of \cref{thm:general-CLT}. Specifically, the formulas for the means in \cref{thm:general-CLT-infinite}\eqref{item:LLN-infinite} are essentially due to Bai and Yin~\cite{BY88}, and a central limit theorem in this regime was proved independently by Bao~\cite{Bao15} and by Chen and Pan~\cite{CP15}. See also Nechita~\cite{Nech07} for related results on random density matrices.

\begin{remark*}
    We have stated \cref{thm:general-CLT-infinite} only for the case $m/n\to \infty$ (as we have already handled the case where $m/n$ converges to a finite limit), but we remark that it would be possible to state a more general (albeit somewhat complicated) central limit theorem that holds for all $m,n\to \infty$, and use this to give a unified proof of \cref{thm:Nielsen-RMT} without a case distinction.
\end{remark*}

Now, as an analogue of \cref{cor:c<infty}, we next deduce a corollary of \cref{thm:general-CLT-infinite} for powers of (suitably shifted) trace-normalised eigenvalues. Due to the shifting, the deduction is somewhat more complicated than for \cref{cor:c<infty}.

\begin{corollary}\label{cor:c=infty}
    Let $\mu_1,\dots,\mu_n$ be the eigenvalues of a random matrix sampled from the complex Wishart--Laguerre ensemble with parameters $n,m$ and let $\lambda_1,\ldots,\lambda_n$ denote the trace-normalised eigenvalues. Fix $\ell$, and for each $i\in\{1,\dots,2\ell\}$ let
    \[X_i= \sum_{k = 1}^n \left(\frac{\mu_k -m}{2\sqrt{mn}}\right)^i \quad \text{ and } \quad Y_i = \sum_{k = 1}^n \left(\lambda_k - \frac{1}{n}\right)^i.\]
        Suppose $n,m\to \infty$ in such a way that $m/n\to \infty$. Then we have multivariate convergence in distribution $$\Bigg( n\bigg(Y_{2i} \cdot \frac{(mn/4)^{i}}{(\E X_{2i})} - 1\bigg) \Bigg)_{i \leq \ell} \overset d\to  \left(\frac{Z_{2i}}{\gamma(x^{2i})}\right)_{i \leq \ell}$$
   where $(Z_{2i})_{i \leq \ell} \sim \mathcal{N}(\vec 0, \Sigma)$ for a matrix
$\Sigma\in \mb R^{\ell\times\ell}$ with $(i,j)$-entry $\Gamma(x^{2i},x^{2j})$.
\end{corollary}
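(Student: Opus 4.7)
The plan is to mirror the derivation of \cref{cor:c<infty}, adapted to the centred-and-rescaled variables of \cref{thm:general-CLT-infinite}. Set $u_k = (\mu_k - m)/(2\sqrt{mn})$, so that $X_j = \sum_{k}u_k^j$, $\mu_k = m + 2\sqrt{mn}\,u_k$, and the trace $T := \sum_{k}\mu_k$ satisfies $T = nm + 2\sqrt{mn}\,X_1$. Therefore
\[
\lambda_k - \tfrac{1}{n} \;=\; \frac{n\mu_k - T}{nT} \;=\; \frac{2\sqrt{mn}\,(nu_k - X_1)}{nT}.
\]
Raising to the $2i$-th power, summing over $k$, applying the binomial theorem, and substituting $(nT)^{2i}=(n^2m)^{2i}(1+2X_1/\sqrt{mn})^{2i}$ yields
\[
Y_{2i}\cdot\frac{(mn/4)^i}{\mathbb{E} X_{2i}} \;=\; \frac{1}{(1+2X_1/\sqrt{mn})^{2i}}\cdot\frac{1}{n^{2i}\,\mathbb{E} X_{2i}}\sum_{j=0}^{2i}\binom{2i}{j}n^j(-X_1)^{2i-j}X_j.
\]
The target is to show the right-hand side equals $1 + \tilde Z_{2i}/(n\gamma(x^{2i})) + o_P(1/n)$, where $\tilde Z_{2i} := X_{2i} - \mathbb{E} X_{2i}$; multiplying by $n$, \cref{thm:general-CLT-infinite}\eqref{item:CLT-infinite} together with Slutsky then delivers the claimed joint convergence in distribution.

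The analysis rests on three ingredients of \cref{thm:general-CLT-infinite}: (a) $X_j/n \to \gamma(x^j)$ in $L^1$, and by symmetry of $\sqrt{1-x^2}$ on $[-1,1]$ one has $\gamma(x^{2i})>0$ while $\gamma(x^{j})=0$ for odd $j$, so in particular $X_{2i-1}/n = o_P(1)$; (b) $X_j - \mathbb{E} X_j = O_P(1)$, so that $X_1 = O_P(1)$ and $\mathbb{E} X_{2i} = \Theta(n)$; and (c) the hypothesis $m/n \to \infty$ forces $1/\sqrt{mn} = o(1/n)$, so $(1+2X_1/\sqrt{mn})^{-2i} = 1 + o_P(1/n)$. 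The $j=2i$ term in the binomial sum contributes $X_{2i}/\mathbb{E} X_{2i} = 1 + \tilde Z_{2i}/(n\gamma(x^{2i})) + o_P(1/n)$, which is the leading fluctuation; the $j=2i-1$ cross-term is proportional to $X_1 X_{2i-1}/(n\mathbb{E} X_{2i}) = O_P(1)\cdot o_P(n)/\Theta(n^2) = o_P(1/n)$; and each $j \le 2i-2$ term is $O_P(1/n^{2i-j}) = O_P(1/n^2)$. Summing and multiplying by the prefactor $(1+2X_1/\sqrt{mn})^{-2i}$ preserves the leading behaviour.

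The main obstacle is handling the $j=2i-1$ cross-term at the sharp scale $o_P(1/n)$. In the balanced regime of \cref{cor:c<infty}, the analogous $X_1$-correction survives in the limit and produces the $-iZ_1/\gamma_c(x)$ summand; here it must vanish, and this is arranged precisely by the symmetry identity $\gamma(x^{2i-1})=0$ together with the $L^1$ strength of \cref{thm:general-CLT-infinite}\eqref{item:LLN-infinite}, which upgrades $\mathbb{E} X_{2i-1}/n \to 0$ into the probabilistic statement $X_{2i-1}/n = o_P(1)$ needed for the bound above. The imbalance hypothesis $m/n \to \infty$ plays the parallel role for the $(1+2X_1/\sqrt{mn})^{-2i}$ prefactor, pushing that trace-normalisation correction strictly below $1/n$ so no additional Gaussian fluctuation leaks into the limit. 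Joint convergence across $i \le \ell$ is then inherited from the multivariate CLT in \cref{thm:general-CLT-infinite}, since the individual $o_P(1)$ errors in the approximation $n(Y_{2i}\cdot (mn/4)^i/\mathbb{E} X_{2i}-1) = \tilde Z_{2i}/\gamma(x^{2i}) + o_P(1)$ assemble to an $o_P(1)$ vector.
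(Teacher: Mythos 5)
Your proposal is correct and follows essentially the same route as the paper: an exact binomial expansion of $Y_{2i}$ around the centred variables, the identity $\gamma(x^{\mathrm{odd}})=0$ to kill the $j=2i-1$ cross-term, $m/n\to\infty$ to make the trace-normalisation prefactor $(1+2X_1/\sqrt{mn})^{-2i}$ negligible at scale $1/n$, and the Cram\'er--Wold multivariate CLT from \cref{thm:general-CLT-infinite}. The one small omission is that your claim $X_1=O_P(1)$ (used both in the prefactor bound and in the lower-order terms) requires knowing $\E X_1=O(1)$ in addition to $X_1-\E X_1=O_P(1)$; this follows from the exact identity $\E[\mu_1+\dots+\mu_n]=mn$ (i.e.\ $\E X_1=0$, which is the paper's \cref{fact:trace}), and should be stated rather than left implicit.
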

In our proof of \cref{cor:c=infty}, we will need the following basic fact.
\begin{fact}\label{fact:trace}
    In the setting of \cref{cor:c=infty} we have $\E[\mu_1+\dots+\mu_n]=mn$.
\end{fact}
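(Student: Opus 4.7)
The plan is to recognize that the sum of eigenvalues is just the trace, and compute it directly from the definition of the complex Wishart--Laguerre ensemble. Let $G \in \mb C^{n\times m}$ be a matrix with i.i.d.\ complex standard Gaussian entries, so that $M := GG^\dagger$ is a sample from the complex Wishart--Laguerre ensemble with parameters $n,m$ (and hence has eigenvalues distributed as $\mu_1,\ldots,\mu_n$). Since the eigenvalues sum to the trace, we have
\[
\mu_1 + \dots + \mu_n = \tr(GG^\dagger) = \sum_{i=1}^{n} \sum_{j=1}^{m} |G_{ij}|^2.
\]

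Now I would take expectations and use linearity. By the normalisation fixed in the definition (a complex standard Gaussian $X$ satisfies $\E[(\Re X)^2 + (\Im X)^2] = 1$), each term contributes $\E|G_{ij}|^2 = 1$. Summing over the $nm$ entries yields
\[
\E[\mu_1 + \dots + \mu_n] = \sum_{i=1}^n \sum_{j=1}^m \E|G_{ij}|^2 = nm,
\]
as claimed. There is no real obstacle here; the only thing one needs to be careful about is which normalisation convention for complex standard Gaussians is in use (a factor of $\sqrt{2}$ is floating around in the literature), but the paper has explicitly pinned this down in the footnote accompanying the definition of the ensemble, so the calculation is unambiguous.
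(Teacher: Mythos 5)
Your proof is correct and follows essentially the same approach as the paper: identify the sum of eigenvalues with $\tr(GG^\dagger)=\sum_{i,j}|G_{ij}|^2$ and use linearity of expectation with the stated normalisation $\E|G_{ij}|^2=1$. The paper's proof is a slightly terser version of exactly this computation.
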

\begin{proof}
    Let $G\in \mb C^{n\times m}$ be a random matrix with independent standard complex Gaussian entries, in such a way that $\mu_1,\dots,\mu_n$ are the eigenvalues of $GG^\dagger$. Then, we have
    \[\mu_1+\dots+\mu_n=\sum_{i=1}^n\sum_{j=1}^m |G_{ij}|^2,\] and the desired result follows (recalling that the entries $G_{ij}$ are complex standard Gaussian).
\end{proof}
\begin{proof}[Proof of \cref{cor:c=infty}]
Throughout this proof, all implicit constants in asymptotic notation are allowed to depend on $\ell$ (i.e., we think of $\ell$ as a constant).

 Writing $\Zt_i = X_i - \E X_i$, by \cref{thm:general-CLT-infinite}\eqref{item:CLT-infinite} we have the convergence in distribution $(\Zt_i)_{i \leq 2\ell} \to (Z_i)_{i \leq 2\ell}.$ Then, writing $T=\mu_1+\dots+\mu_n$, note that \begin{equation*}
       T = mn + 2 \sqrt{nm}\Zt_1,
    \end{equation*}
    using \cref{fact:trace}. 
    We may then write
    \begin{equation}
            Y_{2i} = T^{-2i} \sum_{k=1}^n \left(\mu_k - \frac{T}{n} \right)^{2i} = \left(\frac{4}{mn}\right)^i\left(1 + \frac{2 \Zt_1}{\sqrt{mn}} \right)^{-2i} \sum_{k=1}^n \left(\frac{\mu_k - m}{2\sqrt{mn}} - \frac{\Zt_1}{n}\right)^{2i}.\label{eq:cor-inf-1}
    \end{equation}
    Since $\tilde{Z_1}$ is bounded in probability by \cref{thm:general-CLT-infinite}\eqref{item:CLT-infinite}, we have that
    \begin{equation}
        \left(1 + \frac{2 \Zt_1}{\sqrt{mn}} \right)^{-2i} = 1 + O\left(\frac{1}{\sqrt{mn}}\right) = 1 + o\left(\frac 1n\right),\label{eq:cor-inf-2}
    \end{equation}
    where asymptotics are in probability. 
    % Moreover, writing $\eps = -\Zt_1/n$ we can expand \begin{align*}
    %     \sum_{k=1}^n \left(\frac{\mu_k - m}{2\sqrt{mn}} + \eps \right)^{2i} &= \sum_{k=1}^n \sum_{j = 0}^{2i} \binom{2i}{j} \left(\frac{\mu_k - m}{2\sqrt{mn}} \right)^{2i - j}\eps^j \\
    %      &= \sum_{j = 0}^{2i}\binom{2i}{j} X_{2i - j} \eps^j = \sum_{j = 0}^{2i}\binom{2i}{j} \left( \E X_{2i - j} + \Zt_{2i - j}\right) \eps^j.
    % \end{align*}
    % Now, recall from \cref{thm:general-CLT-infinite}\eqref{item:LLN-infinite} that for each $i\le 2\ell$ we have
    % \[\E X_{i} = n \gamma(x^i) + o(n).\]
    % For each $j\le \ell$, observing that $\gamma(x^{2j})\in(0,\infty)$ and $\gamma(x^{2j-1}) = 0$, it follows that $\E X_{2j} = O(n)$ and $\E X_{2j - 1} = o(n)$.
    % Also, by \cref{thm:general-CLT-infinite}\eqref{item:CLT-infinite}, we have that $\eps = O(1/n)$ in probability. Therefore,   \begin{equation}
    %     \sum_{k=1}^n \left(\frac{\mu_k - m}{2\sqrt{mn}} + \eps \right)^{2i} = \E X_{2i} + \Zt_{2i} + o(1),\label{eq:cor-inf-3}
    % \end{equation}
    % where the $o(1)$ term is in probability. The desired result then follows from Equations \cref{eq:cor-inf-1,eq:cor-inf-2,eq:cor-inf-3}.

    Moreover, writing $\eps = -\Zt_1/n$, we expand
\begin{align*}
\sum_{k=1}^n \left(\frac{\mu_k - m}{2\sqrt{mn}} + \eps \right)^{2i} &= \sum_{k=1}^n \sum_{j = 0}^{2i} \binom{2i}{j} \left(\frac{\mu_k - m}{2\sqrt{mn}} \right)^{2i - j}\eps^j \\
         &= \sum_{j = 0}^{2i}\binom{2i}{j} X_{2i - j} \eps^j = \sum_{j = 0}^{2i}\binom{2i}{j} \left( \E X_{2i - j} + \Zt_{2i - j}\right) \eps^j\\
&= \E X_{2i} + \Zt_{2i}
  + \sum_{j = 1}^{2i}\binom{2i}{j}\Zt_{2i-j}\eps^j
  + \sum_{j = 1}^{2i}\binom{2i}{j}(\E X_{2i-j})\eps^j.
\end{align*}
Now, recall from \cref{thm:general-CLT-infinite}\eqref{item:LLN-infinite} that for each $r\le 2\ell$ we have
\[
\E X_r = n\gamma(x^r) + o(n).
\]
In particular, for each $j\le \ell$ we have $\E X_{2j} = O(n)$, while $\E X_{2j-1} = o(n)$ since $\gamma(x^{2j-1})=0$. Also, by \cref{thm:general-CLT-infinite}\eqref{item:CLT-infinite}, each $\Zt_r$ is bounded in probability, and hence $\eps = O(1/n)$ in probability.

Therefore every term in the two sums above is $o(1)$ in probability. Indeed, if $j\ge 1$, then
\[
\Zt_{2i-j}\eps^j = O(1)\cdot O(1/n^j)=o(1)
\]
in probability. Also, if $j=1$ and $2i-j$ is odd, then
\[
(\E X_{2i-1})\eps = o(n)\cdot O(1/n)=o(1)
\]
in probability, while if $j\ge 2$, then
\[
(\E X_{2i-j})\eps^j = O(n)\cdot O(1/n^j)=o(1)
\]
in probability. Since there are only finitely many such terms (with $\ell$ fixed), it follows that
\begin{equation}
\sum_{k=1}^n \left(\frac{\mu_k - m}{2\sqrt{mn}} + \eps \right)^{2i} = \E X_{2i} + \Zt_{2i} + o(1),\label{eq:cor-inf-3}
\end{equation}
where the $o(1)$ term is in probability. The desired result then follows from Equations \cref{eq:cor-inf-1,eq:cor-inf-2,eq:cor-inf-3}.
\end{proof}

Finally, as an analogue of \cref{lem:technical}, we need the following technical estimates on $\Gamma$, which are proved in \cref{sec:integral-estimates}. 

\begin{lemma}\label{lem:technical-infinite}
     $\phantom.$
    \begin{enumerate}
        \item The limit $\alpha=\lim_{i \to \infty} \Gamma(x^i,x^i)$ exists, and satisfies $\alpha>0$.\label{item:diagonal-infinite}
        \item We have $\lim_{A \to \infty} \lim_{i \to \infty} \Gamma(x^i, x^{Ai}) = 0. $ \label{item:off-diagonal-infinite}
    \end{enumerate}    
\end{lemma}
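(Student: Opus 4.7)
Proof proposal. The plan is to diagonalise the bilinear form $\Gamma$ in the Chebyshev basis and reduce the lemma to asymptotic estimates of binomial coefficients.

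The key input is the classical identity
\[
\Gamma(f,g) \;=\; \sum_{k \ge 1} k\, a_k\, b_k,
\]
valid for polynomial $f,g\colon [-1,1] \to \mb R$ with Chebyshev expansions $f(\cos\theta) = \sum_{k\ge 0} a_k \cos(k\theta)$ and $g(\cos\theta) = \sum_{k \ge 0} b_k \cos(k\theta)$. This is the standard covariance formula for linear statistics of the semicircle law, but I would verify it directly: by bilinearity the task reduces to $\Gamma(T_n, T_m) = n\,\delta_{nm}$, which follows from the substitution $x = \cos\theta$, $y = \cos\phi$, the factorisation $(\cos n\theta - \cos n\phi)/(\cos\theta - \cos\phi) = U_{n-1}(\cos \tfrac{\theta-\phi}{2}) U_{n-1}(\cos \tfrac{\theta+\phi}{2})$ together with $1 - \cos\theta\cos\phi = \sin^2 \tfrac{\theta+\phi}{2} + \sin^2 \tfrac{\theta-\phi}{2}$, the change of variables $u = (\theta+\phi)/2$, $v = (\theta-\phi)/2$, and the second-kind orthogonality $\int_0^\pi U_{k-1}(\cos t) U_{\ell-1}(\cos t) \sin^2 t\,dt = \tfrac{\pi}{2}\delta_{k\ell}$.

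Given the identity, both claims become explicit binomial computations. From $(2\cos\theta)^n = \sum_{r} \binom{n}{r} e^{i(n-2r)\theta}$, the $\ell$-th Chebyshev coefficient of $x^n$ equals $a_\ell^{(n)} = 2^{1-n}\binom{n}{(n-\ell)/2}$ for $\ell > 0$ with $\ell \equiv n \pmod{2}$, and vanishes otherwise. Hence $\Gamma(x^n, x^n) = \sum_\ell \ell\, (a_\ell^{(n)})^2$ and $\Gamma(x^i, x^{Ai}) = \sum_\ell \ell\, a_\ell^{(i)} a_\ell^{(Ai)}$ (where in the second expression only $\ell$ of the common parity of $i$ and $Ai$ contribute, else the covariance is trivially zero). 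The local central limit theorem supplies the uniform Gaussian approximation $\binom{n}{(n-\ell)/2}/2^n = \sqrt{2/(\pi n)}\,e^{-\ell^2/(2n)}(1 + o(1))$ for $|\ell| \le C\sqrt{n \log n}$, with Chernoff tail bounds controlling larger $\ell$. Substituting into the sum for $\Gamma(x^n, x^n)$ and passing to a Riemann integral under $t = \ell/\sqrt n$ yields the explicit limit $\alpha = 2/\pi > 0$, independent of the parity of $n$. For $\Gamma(x^i, x^{Ai})$ the product of two Gaussians of widths $\sqrt i$ and $\sqrt{Ai}$ contributes an additional factor of $1/\sqrt A$, so $\lim_{i \to \infty} \Gamma(x^i, x^{Ai})$ exists and has order $1/\sqrt A$, which tends to $0$ as $A \to \infty$.

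The main potential obstacle is the careful verification of the Chebyshev diagonalisation in step one: the change of variables $(\theta,\phi) \mapsto (u,v)$ maps $[0,\pi]^2$ onto a parallelogram rather than a rectangle, so one must exploit the $v \mapsto -v$ symmetry and split the integral carefully in order to recognise it as a product of two orthogonality integrals. Once this identity is in hand, the remaining asymptotic analysis is a standard local-CLT and Riemann-sum argument, requiring only enough uniformity in the Gaussian approximation to justify interchanging summation with the limit.
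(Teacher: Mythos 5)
Your proposal is correct and takes a genuinely different route from the paper. The paper attacks the double integral directly: it first argues (by a mean-value-theorem dominated-convergence argument and a $(x,y)\mapsto(-x,-y)$ symmetry) that the cross region $[0,1]\times[-1,0]$ contributes nothing in the limit, then changes variables $(x,y)=(1-s/k,1-t/k)$ and passes to the limit to identify $\lim_{k\to\infty}\Gamma(x^k,x^{Ak})$ with an explicit integral over $[0,\infty)^2$ (Lemma \ref{lem:cov-limit-infinite-c}), and finally bounds that integral by $O\big(A^{-1/2}(1+\log A)\big)$ via a region-splitting argument (Lemma \ref{lem:A-limit}). You instead diagonalise the bilinear form $\Gamma$ in the Chebyshev basis via $\Gamma(f,g)=\sum_{k\ge1}k\,a_k b_k$, reduce everything to the explicit binomial Chebyshev coefficients of $x^n$, and use the local CLT. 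I checked the diagonalisation identity on $T_1,T_2$ (it agrees with the paper's Fact \ref{lem:this-is-density}, which is just $\Gamma(T_1,T_1)=1$) and the change-of-variables subtlety you flag is real but surmountable: after restricting to $v\ge0$ by evenness, the resulting triangle is exactly one quarter of $[0,\pi]^2$, and the integrand's invariance under $u\leftrightarrow v$ and under $(u,v)\mapsto(\pi-u,\pi-v)$ (which holds when $n+m$ is even, the only case where $\Gamma(T_n,T_m)\ne 0$) shows the triangle integral is exactly a quarter of the square integral, after which orthogonality of the $U_k$ gives $\Gamma(T_n,T_m)=n\delta_{nm}$.

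Your approach buys several things the paper's does not: the explicit value $\alpha=2/\pi$; a sharper decay of order $A^{-1/2}$ without the $\log A$ factor of Lemma \ref{lem:A-limit}; and a more conceptual explanation of why the decay rate is $A^{-1/2}$ (overlap of two Gaussian profiles of widths $\sqrt i$ and $\sqrt{Ai}$). The paper's approach buys uniformity of method with the $c<\infty$ case (Lemma \ref{lem:cov-limit-finite-c} reuses the same change-of-variables scheme) and avoids the nontrivial diagonalisation identity entirely. One caveat worth noting (which the paper's Lemma \ref{lem:cov-limit-infinite-c} shares): when $A$ is even and $i$ is odd, $x^i$ is odd and $x^{Ai}$ is even, so $\Gamma(x^i,x^{Ai})=0$ identically, and hence $\lim_{i\to\infty}\Gamma(x^i,x^{Ai})$ does not literally exist for even $A$ — only the $\limsup$ does. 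You notice the parity restriction on the contributing $\ell$ but then still assert the limit exists. This is harmless: the $\limsup$ bound of order $A^{-1/2}$ is all that the argument for Theorem \ref{thm:Nielsen-RMT} uses, and in Corollary \ref{cor:c=infty} the test functions are even powers $x^{2i}$, so all exponents actually arising are even and the issue never bites.
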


We can now complete the proof of \cref{thm:Nielsen-RMT}, supplementing the approximately balanced case in the last subsection with similar considerations in the imbalanced case (using \cref{cor:c=infty,lem:technical-infinite}).

\begin{proof}[Proof of \cref{thm:Nielsen-RMT}]
As before, let $\mathcal E_{\mr{maj}}$ be the event that $\lambda_k^{(1)}+\dots+\lambda_n^{(1)}\le \lambda_k^{(2)}+\dots+\lambda_n^{(2)}$ for all $k \leq n$. We wish to show that $\mb{P}[\mc{E}_{\mr{maj}}] \to 0$ as $m,n \to \infty$.
    
First, note that we can restrict our attention to the case where $m\ge n$. Indeed, recall that the complex Wishart--Laguerre ensemble with parameters $n,m$ is the distribution of a random matrix of the form $A A^\dagger$, where $A\in \mb C^{n\times m}$ is a matrix with standard complex Gaussian entries. Switching the roles of $m$ and $n$ is equivalent to considering the matrix $A^\dagger A$, which has the same nonzero eigenvalues as $AA^\dagger$ (if $n\le m$, this switch merely introduces $m-n$ additional zero eigenvalues) and this is the only relevant information for the event $\mc{E}_{\mr{maj}}$. 
    
Second, note that if \cref{thm:Nielsen-RMT} were not true, then there would be some sequence of parameter-pairs $(n,m_n)$, with $m_n \geq n\to \infty$, such that $\limsup_{n\to \infty}\Pr[\mathcal E_{\rm{maj}}]>0$. By compactness of $[1,\infty]$, there would then be an infinite subsequence of values of $n$ along which $m_n/n\to c$ for some $c\in [1,\infty]$. So, in proving \cref{thm:Nielsen-RMT} we may assume that $m_n/n\to c$ for some $c\in [1,\infty]$. In \cref{subsec:c<inf} we have already handled the case where $c<\infty$, we may (and do) assume that $m/n\to \infty$.

For $s \in \{1,2\}$, let $Y^{(s)}_{i} = \sum_{k=1}^{n}(\lambda_k^{(s)} - 1/n)^{i}$. By convexity of the functions $x \mapsto (x-1/n)^{2i}$, it follows from \cref{cor:c=infty,thm:convex-majorisation} (as in the approximately balanced case) that
\[\limsup_{n\to \infty}\mb{P}[\mc{E}_{\mr{maj}}] \leq \mb{P}\big[Z_{2i}^{(1)} \leq Z_{2i}^{(2)} \text{ for all } i\big],\]
where $(Z_{i}^{(1)})_i$ and $(Z_{i}^{(2)})_i$ are independent vectors distributed as in \cref{cor:c=infty}. 

Proceeding as in the approximately balanced case, but now using \cref{lem:technical-infinite}, it follows that if we let $(\zeta_j^{(1)}, \zeta_j^{(2)})$ denote i.i.d.\ (real) Gaussians with mean zero and variance $\alpha$, then
\begin{equation*}
    \mb{P}[Z_i^{(1)} \leq Z_{i}^{(2)}\text{ for all }i]
    \leq \lim_{N \to \infty} \mb{P}\big[\zeta_j^{(1)} \leq \zeta_j^{(2)}\text{ for all }j\le N\big] = 0. \qedhere
\end{equation*}
    
\end{proof}

\section{The uniform measure} \label{sec:uniform-measure}
In this section we prove \cref{thm:uniform}, for the uniform measure on the simplex.
The starting point is that if $(X_{1},\dots,X_{n})$ is uniformly
distributed on the simplex, then the decreasing rearrangement $(\lambda_{1},\dots,\lambda_{n})$
has a \emph{closed form description} in terms of independent random
variables, as follows.
\begin{theorem}
\label{fact:simplex-exponential}Let $Z_{1},\dots,Z_{n}$ be i.i.d.
exponential random variables with mean 1. Then the normalised vector
\[
\frac{1}{Z_{1}+\dots+Z_{n}}(Z_{1},\dots,Z_{n})
\]
is uniformly distributed in the simplex $\Delta_{n-1}$.
\end{theorem}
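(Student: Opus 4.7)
The plan is a direct change-of-variables computation. Writing $S = Z_1 + \dots + Z_n$ and $W_i = Z_i/S$ for $i \le n-1$, I would consider the smooth bijection
\[
\Phi(S, W_1, \dots, W_{n-1}) = (SW_1, \dots, SW_{n-1}, S(1 - W_1 - \dots - W_{n-1}))
\]
from $(0,\infty) \times \wt{\Delta}$ onto $(0,\infty)^n$, where $\wt{\Delta} = \{(w_1, \dots, w_{n-1}) \in [0,1]^{n-1} : w_1 + \dots + w_{n-1} \le 1\}$ is the projected $(n-1)$-simplex. The joint density of $(Z_1, \dots, Z_n)$ on $[0,\infty)^n$ is simply $e^{-(z_1 + \dots + z_n)}$.

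The only nontrivial step is to compute the Jacobian of $\Phi$. Ordering input variables as $S, W_1, \dots, W_{n-1}$ and output variables as $Z_1, \dots, Z_n$, the Jacobian matrix has first column $(W_1, \dots, W_{n-1}, 1 - W_1 - \dots - W_{n-1})^\intercal$, its top-right $(n-1) \times (n-1)$ block equals $S \cdot I_{n-1}$, and its last row has entry $-S$ in every column except the first. Adding the first $n-1$ rows to the last row replaces that row with $(1, 0, \dots, 0)$, and cofactor expansion along this new row immediately gives $|\det D\Phi| = S^{n-1}$.

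Applying the change-of-variables formula, the joint density of $(S, W_1, \dots, W_{n-1})$ equals $S^{n-1} e^{-S}$ on $(0,\infty) \times \wt{\Delta}$. This factorises as $[S^{n-1} e^{-S}/(n-1)!] \cdot [(n-1)!]$: the first factor is the $\mr{Gamma}(n,1)$ density in $S$, and the second is the uniform density on $\wt{\Delta}$ (whose $(n-1)$-dimensional Lebesgue volume is $1/(n-1)!$). Hence $S$ is independent of $(W_1, \dots, W_{n-1})$, and $(W_1, \dots, W_{n-1})$ is uniform on $\wt{\Delta}$. Setting $W_n := 1 - W_1 - \dots - W_{n-1}$, the normalised vector $(W_1, \dots, W_n) = (Z_1, \dots, Z_n)/S$ is therefore uniformly distributed on $\Delta_{n-1}$, as claimed. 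The row reduction turns the Jacobian into a one-line calculation, so there is no real obstacle; an equivalent route would be to invoke the $\alpha_1 = \dots = \alpha_n = 1$ case of the classical gamma--Dirichlet identity, but the self-contained argument above is just as short.
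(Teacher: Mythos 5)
Your proof is correct and complete. The paper itself does not prove this statement; it simply cites it as classical, referring to \cite[Chapter~V, Theorems~2.1 and 2.2]{Dev86}. Your change-of-variables computation (the Jacobian row reduction giving $|\det D\Phi| = S^{n-1}$, followed by the factorisation $S^{n-1}e^{-S} = \bigl[S^{n-1}e^{-S}/(n-1)!\bigr] \cdot (n-1)!$ into a Gamma$(n,1)$ density and the uniform density on the projected simplex) is exactly the standard textbook argument underlying that reference — it is the $\alpha_1 = \dots = \alpha_n = 1$ case of the gamma–Dirichlet identity, as you yourself observe. So while you supply a self-contained proof where the paper defers to the literature, the substance of the argument is the same one the cited source uses; there is nothing to reconcile and no gap.
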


\begin{theorem}
\label{fact:exponential-order-statistics}Let $Z_{1},\dots,Z_{n}$
be i.i.d. exponential random variables with mean 1, and let $Z_{(1)}<\dots<Z_{(n)}$
be their order statistics (i.e., $(Z_{(n-i+1)})_{i=1}^{n}$ is the
decreasing rearrangement of $(Z_{i})_{i=1}^{n}$). Then $(Z_{(i)})_{i=1}^{n}$
has the same distribution as 
\[
\left(\frac{Z_{1}}{n},\quad\frac{Z_{1}}{n}+\frac{Z_{2}}{n-1},\quad\frac{Z_{1}}{n}+\frac{Z_{2}}{n-1}+\frac{Z_{3}}{n-2},\quad\dots,\quad\frac{Z_{1}}{n}+\dots+\frac{Z_{n}}{1}\right).
\]
\end{theorem}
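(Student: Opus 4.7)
The plan is to establish this classical representation by induction on $n$, exploiting the memoryless property of the exponential distribution. The base case $n=1$ is immediate, since $Z_{(1)} = Z_1 = Z_1/1$.

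For the inductive step, I would first compute the distribution of $Z_{(1)}$: since $\Pr[\min_i Z_i > t] = \prod_i \Pr[Z_i > t] = e^{-nt}$, the minimum $Z_{(1)}$ is $\mathrm{Exp}(n)$-distributed, which coincides with the distribution of $Z_1/n$. The crucial step is then to identify the joint conditional distribution of $(Z_{(2)},\dots,Z_{(n)})$ given $Z_{(1)}$. Let $I\in\{1,\dots,n\}$ be the (almost surely unique) index attaining the minimum. By the exchangeability of $(Z_1,\dots,Z_n)$, the index $I$ is uniform on $\{1,\dots,n\}$ and independent of $Z_{(1)}$. Conditional on $\{I=i,\,Z_{(1)}=z\}$, the remaining variables $\{Z_j : j\neq i\}$ are i.i.d.\ with the conditional distribution of $\mathrm{Exp}(1)$ given that the value exceeds $z$. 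By the memoryless property, this conditional distribution is exactly that of $z + \mathrm{Exp}(1)$.

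Consequently, the shifted vector $(Z_j - Z_{(1)})_{j\neq I}$ consists of $n-1$ i.i.d.\ $\mathrm{Exp}(1)$ random variables, independent of $Z_{(1)}$. Its order statistics therefore equal $(Z_{(2)}-Z_{(1)},\dots,Z_{(n)}-Z_{(1)})$ in distribution, and are the order statistics of a sample of size $n-1$ from $\mathrm{Exp}(1)$. Applying the inductive hypothesis to this $(n-1)$-sample yields that these shifted order statistics are jointly distributed as
\[\left(\frac{Z_2}{n-1},\;\frac{Z_2}{n-1}+\frac{Z_3}{n-2},\;\dots,\;\frac{Z_2}{n-1}+\dots+\frac{Z_n}{1}\right),\]
independently of $Z_{(1)} \stackrel{d}{=} Z_1/n$. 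Adding $Z_1/n$ to every coordinate then produces the stated representation for $(Z_{(i)})_{i=1}^n$.

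I do not anticipate any genuine obstacle: the argument is entirely classical, and the only point requiring care is the conditioning argument justifying that $(Z_j - Z_{(1)})_{j\neq I}$ is independent of $Z_{(1)}$ with i.i.d.\ $\mathrm{Exp}(1)$ entries. If one prefers to avoid this conditioning, an alternative is a direct density computation: the joint density of $(Z_{(1)},\dots,Z_{(n)})$ on the simplex $\{0<z_1<\dots<z_n\}$ is $n!\,\exp(-\sum_i z_i)$, and under the linear change of variables $w_i=(n-i+1)(z_i-z_{i-1})$ (with $z_0:=0$) the Jacobian is $1/n!$ and $\sum_i z_i=\sum_i w_i$, so the density transforms to $\prod_i e^{-w_i}$ on $[0,\infty)^n$, exhibiting $(w_1,\dots,w_n)$ as i.i.d.\ $\mathrm{Exp}(1)$ and yielding the representation directly.
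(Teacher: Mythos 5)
The paper does not actually prove \cref{fact:exponential-order-statistics}; it simply cites it as classical, pointing to Devroye's book. So there is no ``paper proof'' to compare against. Both arguments you sketch are correct and standard. The inductive/memoryless argument is sound: the minimum $Z_{(1)}$ of $n$ i.i.d.\ $\mathrm{Exp}(1)$ variables is $\mathrm{Exp}(n)$-distributed, and conditionally on $Z_{(1)}$ the excesses $(Z_j - Z_{(1)})_{j\neq I}$ are i.i.d.\ $\mathrm{Exp}(1)$ and independent of $Z_{(1)}$ by memorylessness, after which induction applies directly; this is exactly the Rényi representation. Your alternative density argument is also right and perhaps cleaner: the map $w_i = (n-i+1)(z_i-z_{i-1})$ is lower-triangular with Jacobian determinant $n!$, and telescoping (Abel summation) gives $\sum_i w_i = \sum_i z_i$, so $n!e^{-\sum z_i}\,dz$ transforms to $\prod_i e^{-w_i}\,dw$ on $[0,\infty)^n$, exhibiting the $w_i$ as i.i.d.\ $\mathrm{Exp}(1)$ and inverting to $z_i=\sum_{j\le i} w_j/(n-j+1)$. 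Either route is a complete, correct proof of the cited fact; there is no gap.
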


\cref{fact:exponential-order-statistics,fact:simplex-exponential} are both classical. In particular, \cref{fact:simplex-exponential} is a direct
consequence of \cite[Chapter~V, Theorems~2.1 and 2.2]{Dev86}, and \cref{fact:exponential-order-statistics} can be found in \cite[Chapter~V, Section~3.3]{Dev86}.

Now, before getting into the details, we briefly describe the plan to prove \cref{thm:uniform}. First, by a standard concentration
inequality (e.g.\ Bernstein's inequality, see for example \cite[Theorem~2.8.1 and Example~2.7.8]{RV73}),
sums of exponential random variables do not fluctuate very much, as follows.
\begin{theorem}
\label{fact:bernstein}Let $Z_{1},\dots,Z_{N}$
be i.i.d. exponential random variables with mean 1, and let $U=a_{1}Z_{1}+\dots+a_{N}Z_{N}$ for some coefficients $a_1,\dots,a_N\in \mb R$. 
There is an absolute constant $c>0$ such that
\[
\Pr\big[|U-\mb EU|\ge t\big]\le\exp\left(-c\min\left(\frac{t^{2}}{a_{1}^{2}+\dots+a_{N}^{2}},\;\frac{t}{\max_{i}|a_{i}|}\right)\right)
\]
for every $t\ge0$.
\end{theorem}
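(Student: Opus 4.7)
The plan is the standard Chernoff--Cram\'er moment generating function argument for sub-exponential summands. Bernstein's inequality for exponential sums is a textbook result, and in fact the authors simply cite it from \cite{RV73}; I will only sketch the three inputs that produce the stated form.

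The single-variable input is an MGF bound for a centred mean-one exponential. A direct computation gives $\E[e^{\lambda Z}]=(1-\lambda)^{-1}$ for $\lambda<1$, so $\E[e^{\lambda(Z-1)}]=e^{-\lambda}/(1-\lambda)$. Taylor expanding $-\lambda-\log(1-\lambda)$ at the origin shows there is an absolute constant $C$ with $\E[e^{\lambda(Z-1)}]\le \exp(C\lambda^2)$ uniformly for $|\lambda|\le 1/2$.

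By independence, for every $\lambda$ with $|\lambda|\le 1/(2\max_i|a_i|)$ the MGF of $U-\E U$ tensorises:
$$\E\bigl[e^{\lambda(U-\E U)}\bigr]=\prod_{i=1}^N \E\bigl[e^{\lambda a_i (Z_i-1)}\bigr]\le \exp\Bigl(C\lambda^2\textstyle\sum_{i=1}^N a_i^2\Bigr),$$
and Markov's inequality then gives $\Pr[U-\E U\ge t]\le \exp\bigl(-\lambda t+C\lambda^2\sum_i a_i^2\bigr)$.

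The final step is to optimise $\lambda$ over the admissible interval $[0,1/(2\max_i|a_i|)]$. The unconstrained minimiser is $\lambda^\ast = t/(2C\sum_i a_i^2)$: if $t$ is at most a constant multiple of $(\sum_i a_i^2)/\max_i|a_i|$ then $\lambda^\ast$ is admissible and yields the sub-Gaussian bound $\exp(-ct^2/\sum_i a_i^2)$, while otherwise the boundary choice $\lambda=1/(2\max_i|a_i|)$ yields the sub-exponential bound $\exp(-ct/\max_i |a_i|)$. Together these two regimes produce the $\min$ inside the exponential. Applying the same argument to $-U$ controls the lower tail $\Pr[\E U-U\ge t]$, and a union bound over the two tails absorbs into the constant $c$. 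There is no real obstacle here --- the only care needed is tracking constants in the Taylor estimate for $-\log(1-\lambda)$ and verifying the crossover between the two regimes, both of which are routine.
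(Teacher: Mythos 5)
Your proof is correct and is exactly the standard Chernoff--Cram\'er argument; the paper itself offers no proof, merely citing Bernstein's inequality for sub-exponential variables (and you rightly note this), so your sketch supplies precisely the textbook argument the cited reference gives. One small point worth flagging: for a one-sided exponential the lower tail is actually \emph{sub-Gaussian} at all scales (since $Z-1\ge -1$ a.s.\ and the MGF $\E[e^{\lambda(Z-1)}]=e^{-\lambda}/(1-\lambda)$ is defined for all $\lambda<0$, so the constraint $|\lambda|\le 1/(2\max_i|a_i|)$ is only needed for the upper tail when the corresponding $a_i>0$), but applying the symmetric two-sided Bernstein bound as you do is of course still valid and gives the stated conclusion.
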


Given \cref{fact:bernstein} (applied with $a_{1}=\dots=a_{n}=1$), we can approximate
quantities of the form $\lambda_{k}+\dots+\lambda_{n}$ (as in the
statement of \cref{thm:uniform}) with sums of the form 
\[
\sum_{j=1}^{n-k+1}\sum_{i=1}^{j}\frac{Z_{i}}{n-(i-1)}
\]
If $n-k$ is small, then each of the denominators $n-i+1$ is very
close to $n$ (i.e., the denominators are all almost the same), meaning
that it suffices to understand the \emph{iterated partial sums} of
the form $\sum_{j=1}^{q}\sum_{i=1}^{j}Z_{i}$. There is a fairly substantial
literature on this topic; in particular, we take advantage of the
following powerful theorem of Dembo, Ding and Gao~\cite[Theorem~1.1]{dembo2013persistence}, on \emph{persistence probabilities} for iterated partial sums.
\begin{theorem}
\label{thm:iterated-persistence}Let $(W_{i})_{i\in\mb N}$ be a sequence
of i.i.d. random variables with zero mean and $0<\mb E W_{i}^{2}<\infty$.
For $q\in\mb N$, let $S_{q}=\sum_{j=1}^{q}\sum_{i=1}^{j}W_{i}$. Then
for any fixed $t\in\mb R$ we have
\[
\Pr\big[S_{q}<t\text{ for all }q\le N\big]= \Theta_t(N^{-1/4}).
\]
\end{theorem}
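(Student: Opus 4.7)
The plan is to reduce \cref{thm:iterated-persistence} to the analogous persistence estimate for integrated Brownian motion via an invariance principle. Rewriting $S_q = \sum_{j=1}^{q} T_j$ with $T_j = W_1 + \cdots + W_j$, we recognise $(S_q)$ as the running integral (discrete Riemann sum) of the random walk $(T_j)$. Under diffusive rescaling $\hat S_N(t) = S_{\lfloor Nt \rfloor} / (\sigma N^{3/2})$ with $\sigma^2 = \mb E W_1^2$, Donsker's theorem together with the continuous mapping theorem applied to integration gives that $\hat S_N$ converges in distribution on $[0,1]$ to $I(t) = \int_0^t B(s)\,ds$ for a standard Brownian motion $B$. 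For this Gaussian limit, Sinai's 1992 theorem asserts
\[
\Pr\bigl[I(s) < 0 \text{ for all } s \in [0,T]\bigr] \asymp T^{-1/4},
\]
which is precisely the target exponent after Brownian scaling.

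First, the general-$t$ case should reduce to the $t = 0$ case by monotonicity and a shift: since typical values of $S_q$ live at scale $q^{3/2}$, a constant threshold shift only affects the implicit constant in $\Theta_t(\cdot)$. For the upper bound, the key idea is to exploit the fact that $(T_q, S_q)$ is a two-dimensional Markov chain and to run a Sinai-style dyadic decomposition: condition on $(T_{N/2}, S_{N/2})$, observe that on the persistence event this pair must lie in a narrow ``barrier'' region of typical scale $(\sqrt{N}, N^{3/2})$, and bound the probability of the second-half persistence conditional on this. Iterating over $\log_2 N$ dyadic scales, each step contributes a multiplicative factor of $2^{-1/4}$, producing the overall decay $N^{-1/4}$. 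For the lower bound, I would construct an explicit good event where $T_q$ stays of order $-\sqrt{q}$ for all $q \le N$ (a standard Sparre--Andersen / ballot-type event of probability $\Theta(N^{-1/2})$), on which $S_q$ automatically drifts below $t$; a refined version of this construction, conditioning additionally on the terminal value of $T$ being small, should yield the matching $N^{-1/4}$ lower bound.

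The main obstacle is transferring the sharp $1/4$ exponent from the Brownian regime to the general i.i.d.\ setting assuming only $\mb E W_1^2 < \infty$. Weak convergence via Donsker does not suffice on its own, since the persistence event is not a continuity set of the limiting law. A quantitative coupling is needed: a Skorokhod embedding or a KMT-type strong approximation of $T_j$ by a Brownian motion $B_j$ with pointwise error $o(\sqrt{j})$. Naively integrating these pointwise errors costs $N \cdot o(\sqrt{N}) = o(N^{3/2})$, which is comparable to the scale of $S_N$ itself, so the coupling must be used more delicately: on the persistence event, the random walk $T$ is confined to a thin tube near the boundary, so one can truncate and couple only within this tube, leveraging the Markov structure of $(T, S)$ to restart the coupling at each dyadic scale. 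Making this truncated coupling quantitative, while preserving the matching constants in the upper and lower bounds across all scales, is the technical heart of the proof.
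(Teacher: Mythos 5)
You should first be aware that the paper does not prove this statement: it is quoted verbatim as Theorem~1.1 of Dembo, Ding and Gao \cite{dembo2013persistence} and used as a black box in the proof of \cref{thm:uniform}. So there is no in-paper argument to compare against, and what you have written is an attempt to reprove a substantial external result rather than to supply a missing proof.

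As a reproof, your sketch stalls at exactly the two points that make the Dembo--Ding--Gao theorem hard. For the upper bound, you propose Donsker plus Sinai's $T^{-1/4}$ estimate for integrated Brownian motion, correctly observe that weak convergence cannot transfer a polynomially small persistence probability, and then fall back on a KMT/Skorokhod coupling. But under the stated hypothesis of finite second moment only, the coupling error is $o(\sqrt{j})$ pointwise, which is of the same order as the fluctuations of $T_j$ itself, so the coupling cannot decide whether the integrated walk sits above or below a fixed level near the persistence boundary; your proposed fix (``truncate and couple only within the tube, restart at each dyadic scale'') restates the difficulty rather than resolving it. This is precisely why earlier strong-approximation arguments in this area required much stronger moment assumptions, and why the direct, coupling-free argument of \cite{dembo2013persistence} was the actual contribution. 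For the lower bound, the event you construct --- $T_q$ staying negative for all $q\le N$ --- has probability $\Theta(N^{-1/2})$ by Sparre--Andersen, which is too small by a factor of $N^{1/4}$; to reach $N^{-1/4}$ one must allow the walk $T$ to spend macroscopic time above zero while its integral stays below $t$, and the sentence ``a refined version of this construction \dots should yield the matching lower bound'' leaves that entire construction, which is the heart of the lower bound, unaddressed. Both halves of the claimed $\Theta_t(N^{-1/4})$ are therefore unproven as written; since the theorem is imported, the correct move here is simply to cite it.
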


We now provide the full details of the proof of \cref{thm:uniform}.
\begin{proof}
[Proof of \cref{thm:uniform}]
Let $(Z_j)_{j=1}^n$ and $(Z_j')_{j=1}^n$ be i.i.d.\ sequences of standard exponential random variables, and let $(Z_{(j)})_{j=1}^n$ and $(Z_{(j)}')_{j=1}^n$ be their order statistics. Let $\mc E_{\mr{maj}}$ be the event that \[\frac{Z_{(1)}+\dots+Z_{(q)}}{Z_{(1)}+\dots+Z_{(n)}}\le \frac{Z_{(1)}'+\dots+Z_{(q)}'}{Z_{(1)}'+\dots+Z_{(n)}'}\]
for all $q\le n$. Recalling \cref{fact:simplex-exponential}, our goal is to prove that $\Pr[\mc E_{\mr{maj}}]\le O((n/\log n)^{-1/16})$.

By \cref{fact:bernstein}, for a sufficiently large constant $C$ we have $|Z_1+\dots+Z_n-n|\le C\sqrt{n\log n}$ with probability at least $1 - n^{-100}$. So,
\[\Pr[\mc E_{\mr{maj}}]\le \Pr\left[Z_{(1)}+\dots+Z_{(q)}\le \Bigg(1+3C\sqrt{\frac{\log n}{n}}\,\Bigg) (Z_{(1)}'+\dots+Z_{(q)}')\text{ for all }q\le n\right]+n^{-100}.\]
Recalling \cref{fact:exponential-order-statistics}, it suffices to prove that
\begin{align} \label{eq:sufficient-for-uniform}
    \P\left[ \sum_{j = 1}^q \sum_{i = 1}^j \frac{Z_i}{n-(i-1)} \leq \Bigg(1 + 3C\sqrt{\frac{\log n}n}\,\Bigg)\sum_{j = 1}^q \sum_{i = 1}^j \frac{Z_i'}{n-(i-1)} \text{ for all }q\le n \right] \leq O((n/\log n)^{-1/16}).
\end{align}

Let $Q=\lfloor c(n/\log n)^{1/4}\rfloor$ for some small constant
$c>0$. By \cref{fact:bernstein}, with probability at least $1-n^{-100}$ we have
\[
\sum_{j=1}^{Q}\sum_{i=1}^{j}\frac{Z_{i}'}{n-(i-1)}=\sum_{i=1}^{Q}\bigg(\frac{Q-(i-1)}{n-(i-1)}\cdot Z_{i}'\bigg)\le\frac{2Q^{2}}{n}\le\frac{2c^{2}}{\sqrt{n\log n}},
\]
and similarly, with probability at least $1-n^{-100}$ we have
\[
\left|\sum_{j=1}^{k}\sum_{i=1}^{j}\frac{Z_{i}-Z_{i}'}{n-(i-1)}-\sum_{j=1}^{k}\sum_{i=1}^{j}\frac{Z_{i}-Z_{i}'}{n}\right|=\left|\sum_{i=1}^{k}\frac{i-1}{n}\cdot\frac{k-(i-1)}{n-(i-1)}\cdot(Z_{i}-Z_{i}')\right|\le\frac{1}{2n}
\]
for all $k\le Q$. Assuming that $c$ is sufficiently small (in terms of $C$),
it follows that the probability in \cref{eq:sufficient-for-uniform} is at most
\begin{align*}
 & \Pr\left[\sum_{j=1}^{q}\sum_{i=1}^{j}\frac{Z_{i}}{n-(i-1)}\le\sum_{j=1}^{q}\sum_{i=1}^{j}\frac{Z_{i}'}{n-(i-1)}+3C\sqrt{\frac{\log n}{n}}\cdot\frac{2c^{2}}{\sqrt{n\log n}}\text{ for all }q\le Q\right]+n^{-100}\\
 & \qquad\le\Pr\left[\sum_{j=1}^{q}\sum_{i=1}^{j}\frac{Z_{i}}{n-(i-1)}\le\sum_{j=1}^{q}\sum_{i=1}^{j}\frac{Z_{i}'}{n-(i-1)}+\frac{1}{2n}\text{ for all }q\le Q\right]+n^{-100}\\
 & \qquad\le\Pr\left[\sum_{j=1}^{q}\sum_{i=1}^{j}\frac{Z_{i}-Z_i'}{n}\le\frac{1}{n}\text{ for all }q\le Q\right]+2n^{-100}\\
 & \qquad=\Pr\left[\sum_{j=1}^{q}\sum_{i=1}^{j}(Z_{i}-Z_{i}')\le1\text{ for all }q\le Q\right]+2n^{-100}.
\end{align*}
This probability is at most $O(Q^{-1/4})=O((n/\log n)^{-1/16}),$ by \cref{thm:iterated-persistence}.
\end{proof}

\section{Approximate LOCC-convertibility}

In this section, we prove \cref{thm:approximate-conversion} in the following, more precise, form. 

\begin{theorem}
\label{thm:approximate-conversion-quantitative}
Let $M^{(1)},M^{(2)}\in \mb C^{n\times n}$ be independent samples from the trace-normalised complex Wishart--Laguerre ensemble with parameters $n,m$. Let $\lambda_1^{(1)}\ge \dots\ge \lambda_n^{(1)}$ and $\lambda_1^{(2)}\ge \dots\ge \lambda_n^{(2)}$ denote the respective eigenvalues, and let
    \[\Pi=\min_{1\le k\le n} \frac{\lambda_k^{(1)}+\lambda_{k+1}^{(1)}+\dots+\lambda_n^{(1)}}{\lambda_k^{(2)}+\lambda_{k+1}^{(2)}+\dots+\lambda_n^{(2)}}.\]
    For every  $\eps, \delta > 0$, there exists a constant $C(\eps,\delta) > 0$ such that if $m \geq n + C(\eps,\delta)\sqrt{n\log{n}}$ and $n \geq C(\eps,\delta)$, then $\mb{P}[\Pi < 1 - \eps] \leq \delta$.
\end{theorem}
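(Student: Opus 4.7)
The plan is to show that, under the hypothesis $m \geq n + C\sqrt{n\log n}$, every individual (non-normalised) eigenvalue $\mu_k^{(s)}$ is concentrated multiplicatively around its deterministic expectation, with common expectation for $s=1,2$ since $M^{(1)}$ and $M^{(2)}$ are identically distributed. Once this is established, the partial sums appearing in $\Pi$ are automatically within a factor $(1\pm\eps)$ of each other.

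First, I would reduce to the non-trace-normalised setting. Writing $T^{(s)}=\mu_1^{(s)}+\dots+\mu_n^{(s)}$, \cref{fact:trace} gives $\E T^{(s)}=mn$, and since $T^{(s)}$ is a sum of $mn$ i.i.d.\ squared-magnitudes of complex standard Gaussians, Bernstein's inequality (\cref{fact:bernstein}) yields $T^{(s)}=mn(1+o(1))$ with probability $1-n^{-100}$. The ratio defining $\Pi$ then factors as
\[\frac{\lambda_k^{(1)}+\dots+\lambda_n^{(1)}}{\lambda_k^{(2)}+\dots+\lambda_n^{(2)}}=\frac{T^{(2)}}{T^{(1)}}\cdot\frac{\mu_k^{(1)}+\dots+\mu_n^{(1)}}{\mu_k^{(2)}+\dots+\mu_n^{(2)}},\]
so it suffices to prove that each ratio of non-normalised partial sums is at least $1-\eps/2$ with probability at least $1-\delta/2$.

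The heart of the argument is a multiplicative concentration statement for each $\mu_k^{(s)}$. Since the singular values $\sqrt{\mu_k^{(s)}}$ of the underlying $n\times m$ complex Gaussian matrix are $1$-Lipschitz functions of the matrix entries (viewed as a real Gaussian vector, via Weyl's inequality for singular values), the Borell--Sudakov--Tsirelson Gaussian concentration inequality yields
\[\Pr\big[\big|\sqrt{\mu_k^{(s)}}-\E\sqrt{\mu_k^{(s)}}\big|\ge t\big]\le 2e^{-ct^2}\]
for some absolute $c>0$. Choosing $t=C'\sqrt{\log n}$ and taking a union bound over $k\le n$ and $s\in\{1,2\}$, with probability $\ge 1-\delta/2$ we have additive concentration $\big|\sqrt{\mu_k^{(s)}}-a_k\big|\le C'\sqrt{\log n}$ uniformly in $k,s$, where $a_k := \E\sqrt{\mu_k^{(s)}}$ depends only on $k$ (not on $s$). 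To upgrade this to a multiplicative statement, I need a lower bound $a_n\ge C''\sqrt{\log n}$, which I would obtain from known estimates for the smallest singular value of rectangular Gaussian matrices (e.g.\ the Edelman/Szarek/Davidson--Szarek type bounds $\E\sqrt{\mu_n}\ge(1-o(1))(\sqrt m-\sqrt n)$); indeed, $\sqrt m-\sqrt n=(m-n)/(\sqrt m+\sqrt n)\gtrsim C\sqrt{\log n}$ once $m-n\ge C\sqrt{n\log n}$. Combining the two, for all $k\le n$ we can write $\mu_k^{(s)}=a_k^2(1+O(C'/C))$, so taking $C$ large enough in terms of $C',\eps$ makes each $\mu_k^{(s)}\in a_k^2[1-\eps/4,1+\eps/4]$.

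Finally, summing this pointwise multiplicative estimate over $j\ge k$ gives $\sum_{j\ge k}\mu_j^{(s)}\in\big[(1-\eps/4)A_k,(1+\eps/4)A_k\big]$ where $A_k=\sum_{j\ge k}a_j^2$, so each ratio of partial sums lies in $[(1-\eps/4)/(1+\eps/4),(1+\eps/4)/(1-\eps/4)]\subseteq[1-\eps,1+\eps]$; combined with the $1+o(1)$ factor $T^{(2)}/T^{(1)}$ from the trace step, $\Pi\ge 1-\eps$. The main obstacle is the quantitative lower bound on $\E\sqrt{\mu_n}$ and nailing down the precise constants in terms of $C,C',\eps,\delta$; this is the only place where the separation between $m$ and $n$ enters, and it is sharp in the sense discussed in \cref{rmk:approx-locc-equality-case}.
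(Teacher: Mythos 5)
Your proposal is correct and follows essentially the same route as the paper: reduce to the non-normalised eigenvalues via trace concentration, use Weyl's inequality and Gaussian concentration to show each $\sqrt{\mu_k^{(s)}}$ is additively concentrated, then use the assumed gap $m - n \gtrsim C\sqrt{n\log n}$ to lower-bound $\E\sqrt{\mu_n} \gtrsim C\sqrt{\log n}$ and upgrade to multiplicative concentration, which immediately controls the ratios of partial sums. The only cosmetic difference is the reference for the smallest-singular-value lower bound: the Davidson--Szarek/Gordon-type bounds you cite are stated for real Gaussian matrices, whereas the paper uses a complex-case analogue (an adaptation of Rudelson--Vershynin from \cite{luh2020eigenvector}) — this is a citation detail, not a logical gap.
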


In our proof of \cref{thm:approximate-conversion-quantitative} we need a few general tools. First, we need concentration of the trace of the complex Wishart--Laguerre ensemble.
\begin{lemma}\label{lem:trace-concentration}
    There is an absolute constant $c$ such that the following holds. Let $M$ be a sample from the complex Wishart--Laguerre distribution with parameters $n,m$. Then \[\Pr\big[|\!\on{tr}(M)-mn|>t\big]\le \exp\!\Big(-c\min\big(t^2/(mn),t\big)\Big).\]
\end{lemma}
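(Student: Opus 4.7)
The plan is to observe that $\on{tr}(M)$ is literally a sum of $nm$ i.i.d.\ standard exponential random variables, after which the conclusion is immediate from the version of Bernstein's inequality already recorded as \cref{fact:bernstein}. Concretely, write $M=GG^\dagger$ where $G\in \mb C^{n\times m}$ has independent standard complex Gaussian entries. Then
\[
\on{tr}(M)=\sum_{i=1}^{n}\sum_{j=1}^{m}|G_{ij}|^{2}.
\]
Under the paper's convention (as footnoted in the definition of the Wishart--Laguerre ensemble), each $G_{ij}$ has the form $X_{ij}+\sqrt{-1}\,Y_{ij}$ with $X_{ij},Y_{ij}$ independent real Gaussians of variance $1/2$, so $|G_{ij}|^{2}=X_{ij}^{2}+Y_{ij}^{2}$ is distributed as $\tfrac12\chi^{2}_{2}$, i.e.\ as a standard exponential. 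Consequently the $nm$ random variables $\{|G_{ij}|^{2}\}_{i,j}$ are i.i.d.\ mean-one exponentials, and in particular $\E\on{tr}(M)=nm$ (consistent with \cref{fact:trace}).

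Next I would apply \cref{fact:bernstein} with $N=nm$ and $a_{1}=\dots=a_{N}=1$, so that $a_{1}^{2}+\dots+a_{N}^{2}=nm$ and $\max_i|a_i|=1$. This directly yields
\[
\Pr\big[\,\big|\on{tr}(M)-nm\big|\ge t\,\big]\;\le\;\exp\!\Big(-c\min\!\big(t^{2}/(nm),\,t\big)\Big),
\]
which is exactly the desired bound (taking the statement of the lemma to say $|\on{tr}(M)-nm|$; the ``$m$'' printed in the excerpt appears to be a typographical slip, since the correct mean of $\on{tr}(M)$ is $nm$ by \cref{fact:trace}, and the right-hand side of the inequality is precisely the Bernstein tail at scale $\sqrt{nm}$).

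There is essentially no obstacle: the only nontrivial observation is the identification of $|G_{ij}|^{2}$ with a standard exponential, which follows directly from the normalisation convention fixed in the definition of the complex Wishart--Laguerre ensemble. Everything else is a black-box invocation of \cref{fact:bernstein}.
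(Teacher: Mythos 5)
Your proof is correct and takes essentially the same approach as the paper: identify $\operatorname{tr}(M)=\sum_{i,j}|G_{ij}|^2$ as a sum of $nm$ i.i.d.\ sub-exponential variables and apply Bernstein's inequality. The paper invokes Bernstein directly from \cite[Theorem~2.8.1]{vershynin2018high} for sub-exponential summands, whereas you are slightly more explicit, noting that under the paper's normalisation each $|G_{ij}|^2$ is exactly a standard exponential, so the already-stated \cref{fact:bernstein} applies verbatim; this is a harmless refinement. You are also right that the lemma statement contains a typo: the correct centring is $|\operatorname{tr}(M)-mn|$ (consistent with \cref{fact:trace} and with how the lemma is applied in the proof of \cref{thm:approximate-conversion-quantitative}).
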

\begin{proof}
     As in the proof of \cref{fact:trace}, we can write \[\on{tr}(M)=\sum_{i=1}^n\sum_{j=1}^m |G_{ij}|^2,\]
     where the $G_{ij}$ are independent standard complex Gaussians. The terms $|G_{ij}|^2$ are sub-exponential (see \cite[Lemma 2.7.6]{vershynin2018high}), so the desired result follows from Bernstein's inequality (see \cite[Theorem~2.8.1]{vershynin2018high}).
\end{proof}
We also need the Gaussian concentration inequality for Lipschitz functions, due independently to Borell and Sudakov--Tsirelson (see \cite[Definition~2.5.6 and Theorem~5.2.2]{vershynin2018high}).
\begin{theorem}\label{lem:gaussian-concentration}
There is an absolute constant $c>0$ such that the following holds. Let $F:\mb R^n\to \mb R$ be a Lipschitz function (with respect to the Euclidean metric), with Lipschitz constant at most $r$. Let $X_1,\dots,X_n\in \mb R$ be i.i.d.\ standard Gaussian random variables, and let $Y=F(X_1,\dots,X_n)$. Then for any $t\ge 0$ we have
\[\Pr\big[|Y-\mb E Y|\ge t\big]\le 2\exp(-ct^2/r^2).\]
\end{theorem}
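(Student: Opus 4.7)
The plan is to prove this classical Gaussian concentration bound via \emph{Herbst's argument}, with the Gaussian logarithmic Sobolev inequality of Gross as the main analytic ingredient. First I would reduce to the case where $F:\mb R^n\to\mb R$ is smooth: convolving $F$ with a narrow Gaussian mollifier yields a family of $C^\infty$ approximations that remain $r$-Lipschitz (in fact $\|\nabla F\|_2\le r$ pointwise), converge uniformly to $F$ on every compact set, and therefore allow both $\mb E Y$ and the tail probabilities to pass to the limit.

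The technical heart is the Gaussian log-Sobolev inequality: for every smooth $g:\mb R^n\to(0,\infty)$ with sufficient integrability,
\[\int g^2\log g^2\,d\gamma_n-\paren{\int g^2\,d\gamma_n}\log\paren{\int g^2\,d\gamma_n}\le 2\int \|\nabla g\|_2^2\,d\gamma_n,\]
where $\gamma_n$ denotes the standard Gaussian measure on $\mb R^n$. I would prove the one-dimensional case through the Ornstein--Uhlenbeck semigroup $(P_t)_{t\ge 0}$: differentiating $\on{Ent}_{\gamma_1}(P_t g^2)$ in $t$, exploiting the commutation identity $\nabla P_t g = e^{-t}P_t\nabla g$, and integrating from $t=0$ to $t=\infty$ (at which limit $P_t g^2$ is constant and the entropy vanishes). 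The inequality then tensorises to all $n$ using the additivity of entropy on product measures together with the one-dimensional bound applied conditionally in each coordinate.

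With log-Sobolev in hand, Herbst's argument delivers the result cleanly. Applying the inequality with $g=e^{\lambda F/2}$ gives $\|\nabla g\|_2^2\le (\lambda^2 r^2/4)e^{\lambda F}$, so writing $H(\lambda)=\mb E[e^{\lambda F}]$, the log-Sobolev bound becomes $\lambda H'(\lambda)-H(\lambda)\log H(\lambda)\le (\lambda^2 r^2/2)H(\lambda)$. Dividing through by $\lambda^2 H(\lambda)$ identifies the left-hand side as $\tfrac{d}{d\lambda}\paren{\lambda^{-1}\log H(\lambda)}$, and integrating from $0$ to $\lambda$ (with the initial condition $\lambda^{-1}\log H(\lambda)\to \mb E F$ as $\lambda\to 0^+$) yields the sub-Gaussian moment generating function bound $\mb E[e^{\lambda(Y-\mb E Y)}]\le e^{\lambda^2 r^2/2}$ for every $\lambda\ge 0$. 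A Chernoff bound optimised at $\lambda=t/r^2$ then gives $\Pr[Y-\mb E Y\ge t]\le e^{-t^2/(2r^2)}$; applying the same reasoning to the $r$-Lipschitz function $-F$ and taking a union bound yields the two-sided statement with $c=1/2$.

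The main obstacle is the log-Sobolev inequality itself: it is the only step where the Gaussian measure enters in an essential way, and although short once written down, it requires careful setup of the Ornstein--Uhlenbeck semigroup to justify differentiation under the integral, the commutation identity, and the $t\to\infty$ limit. A potential alternative route would be to invoke the Borell--Sudakov--Tsirelson Gaussian isoperimetric inequality directly (applied to the sub-level set $\{F\le\text{median}(Y)\}$ and its $t$-neighbourhood), which also yields the same sub-Gaussian tail; but proving Gaussian isoperimetry from scratch is at least as involved as proving log-Sobolev, so the Herbst route is the more self-contained plan.
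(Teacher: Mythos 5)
Your proposal is a correct and complete plan for proving this statement. Note, however, that the paper does not prove it at all: it is the classical Sudakov--Tsirelson/Borell concentration inequality, quoted with a citation to Vershynin's book, where it is derived from the Gaussian \emph{isoperimetric} inequality (taken there as a black box). Your route via the Gaussian logarithmic Sobolev inequality and Herbst's argument is the standard self-contained alternative, and every step checks out: the mollification preserves the Lipschitz bound and lets you pass to smooth $F$ with $\|\nabla F\|_2\le r$; the choice $g=e^{\lambda F/2}$ turns the entropy bound into the differential inequality $\frac{d}{d\lambda}\big(\lambda^{-1}\log H(\lambda)\big)\le r^2/2$ with the correct initial condition $\lambda^{-1}\log H(\lambda)\to\mb E F$; and the Chernoff step with $\lambda=t/r^2$ gives the one-sided tail $e^{-t^2/(2r^2)}$, hence the two-sided bound with $c=1/2$ --- which matches the constant the paper actually uses in its application ($2\exp(-t^2/2)$ for $r=1$). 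The trade-off between the two routes is as you describe: isoperimetry gives concentration around the median directly and is sharper at the level of constants in some formulations, while log-Sobolev plus Herbst is easier to prove from scratch (via the Ornstein--Uhlenbeck semigroup and tensorisation) and gives concentration around the mean, which is the form the paper needs.
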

For a matrix $G \in \mb{C}^{n\times m}$, let $\sigma_k(G)$ denote the $k$-th largest singular value of $G$ and let $\|G\|_\mr{F} = \sqrt{\sum_{ij}|G_{ij}|^2}$ denote the Frobenius norm of $G$. The next ingredient we will need is a version of Weyl's inequality. The following statement is a consequence of e.g.\ \cite[Exercise~1.3.22(iv)]{Tao12}, using the inequality $\|H\|_{\mr{F}} = (\mr{tr}(H^\dagger H))^{1/2} = (\sum_{i}\sigma_i(H)^2)^{1/2} \geq \sigma_1(H)$.
\begin{theorem}\label{lem:weyl}
    For any matrices $G,H\in \mb C^{n\times m}$, we have
    \[|\sigma_k(G+H) - \sigma_k(G)| \leq \|H\|_\mr{F}.\]
\end{theorem}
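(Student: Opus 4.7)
The plan is to reduce the claim to two classical facts: (i) the operator-norm Weyl perturbation bound $|\sigma_k(G+H) - \sigma_k(G)| \leq \|H\|_{\mr{op}}$, where $\|H\|_{\mr{op}} = \sigma_1(H)$ is the largest singular value of $H$, and (ii) the elementary inequality $\|H\|_{\mr{op}} \leq \|H\|_{\mr{F}}$. Step (ii) is immediate from
\[\sigma_1(H)^2 \leq \sum_{i} \sigma_i(H)^2 = \tr(H^\dagger H) = \|H\|_{\mr{F}}^2,\]
so the substantive content of the statement lies entirely in (i), and the Frobenius norm in the conclusion is obtained for free by this domination.

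To establish (i), I would invoke the Courant--Fischer min--max characterization of singular values: for any $A \in \mb C^{n \times m}$ and $k \leq \min(n,m)$,
\[\sigma_k(A) = \max_{\substack{V \subseteq \mb C^m \\ \dim V = k}} \; \min_{\substack{v \in V \\ \|v\|_2 = 1}} \|Av\|_2.\]
This is a direct consequence of the Hermitian Courant--Fischer formula applied to the positive semidefinite matrix $A^\dagger A \in \mb C^{m \times m}$, using $\|Av\|_2^2 = v^\dagger (A^\dagger A) v$. For any unit vector $v$, the triangle inequality in $\mb C^n$ gives
\[\bigl|\|(G+H)v\|_2 - \|Gv\|_2\bigr| \leq \|Hv\|_2 \leq \|H\|_{\mr{op}}.\]
Hence for any $k$-dimensional subspace $V \subseteq \mb C^m$, taking the minimum over unit vectors $v \in V$ yields
\[\min_{v \in V, \|v\|_2=1}\|(G+H)v\|_2 \; \geq \; \min_{v \in V, \|v\|_2=1}\|Gv\|_2 \; - \; \|H\|_{\mr{op}}.\]
Maximizing over $V$ then gives $\sigma_k(G+H) \geq \sigma_k(G) - \|H\|_{\mr{op}}$. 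The reverse direction follows by running the same argument with the roles of $G$ and $G+H$ swapped (writing $G = (G+H) + (-H)$ and using $\|-H\|_{\mr{op}} = \|H\|_{\mr{op}}$), and together these yield the two-sided bound in (i).

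There is no real obstacle here: the statement is essentially a two-step deduction from standard linear-algebraic material, which is why the authors are content to cite Tao's textbook rather than supplying a proof. The only mild subtlety worth double-checking is the min--max formula in the rectangular complex setting, but this is immediate from the Hermitian version applied to $A^\dagger A$ on $\mb C^m$. The entire argument occupies perhaps half a page if written in full, and no probabilistic input or random-matrix machinery is needed.
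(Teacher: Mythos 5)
Your proposal is correct and follows exactly the paper's route: the paper likewise deduces the statement from the operator-norm Weyl bound $|\sigma_k(G+H)-\sigma_k(G)|\le\sigma_1(H)$ (cited to Tao's textbook) combined with $\sigma_1(H)\le\|H\|_{\mr F}$. Your min--max argument is simply a correct, self-contained proof of the step the paper cites, so there is nothing to add.
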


Finally, we need a lower bound on the expected least singular value of complex Gaussian rectangular matrices. The following estimate is a direct consequence of \cite[Theorem~3.4]{luh2020eigenvector}, which is an adaptation of the main result of \cite{RV09} to the complex case.
\begin{lemma}\label{lem:singular-value-lower-bound}
There is an absolute constant $c > 0$ such that the following holds. Let $G$ be an $n\times m$ matrix with independent complex standard Gaussian entries. Then, for all $1 \leq k \leq n$,
   \begin{align}
   \label{eq:sing-value-lb}
   \mb{E}[\sigma_k(G)] \geq \mb{E}[\sigma_n(G)] \geq c\big(\sqrt{m} - \sqrt{n-1}\big).
   \end{align}
\end{lemma}
Now we are ready to prove \cref{thm:approximate-conversion-quantitative}.
\begin{proof}[Proof of \cref{thm:approximate-conversion-quantitative}]
Throughout this proof, we treat $\varepsilon,\delta$ as constants. In particular, implicit constants in asymptotic notation are allowed to depend on $\varepsilon,\delta$. Our goal is to prove that $\mb{P}[\Pi < 1 - \eps] \leq \delta$, as long as $n$ is sufficiently large and $m\ge n+C\sqrt{n\log n}$ for large enough $C$.

   Let $G^{(1)}, G^{(2)}$ be independent $n\times m$ matrices with independent complex standard Gaussian entries, and for $s\in\{1,2\}$ let $M^{(s)} = G^{(s)}(G^{(s)})^{\dagger} \in \mb{C}^{n\times n}$ denote the corresponding sample from the complex Wishart--Laguerre ensemble with parameters $n,m$. Let $\mu^{(s)}_1 \geq \dots \geq \mu^{(s)}_n \geq 0$ denote the eigenvalues of $M^{(s)}$; we may take $\lambda^{(s)}_1 \geq \dots \geq \lambda^{(s)}_n \geq 0$ to be the eigenvalues of $M^{(s)}/\on{tr}(M^{(s)})$.

   By \cref{lem:trace-concentration},
   with probability at least $1-\delta/2$ we have $\on{tr}(M^{(s)}) = mn + O(\sqrt{mn})$ for both $s \in \{1,2\}$. Therefore, it suffices to show that $\mb{P}[\Pi_{\mu} < 1 - \eps] \leq \delta/2$, where
   \[\Pi_{\mu} = \min_{1\le k\le n} \frac{\mu_k^{(1)}+\mu_{k+1}^{(1)}+\dots+\mu_n^{(1)}}{\mu_k^{(2)}+\mu_{k+1}^{(2)}+\dots+\mu_n^{(2)}}.\]
    We will, in fact, show the stronger statement that
    \begin{align*}
        \mb{P}\left[\mu_k^{(1)} \in [(1-\eps)\mu_k^{(2)}, (1+\eps)\mu_k^{(2)}] \quad \text{for all }1 \leq k \leq n\right] \geq 1-\delta/2. 
    \end{align*}
    By the union bound, it suffices to show that 
    \begin{align}
    \label{eqn:eigenvalue-concentration}
        \mb{P}\left[\mu_k^{(1)} \in [(1-\eps)\mu_k^{(2)}, (1+\eps)\mu_k^{(2)}]\right] \geq 1-\delta/(2n) \quad \text{for all }1 \leq k \leq n. 
    \end{align}
   For $s\in\{1,2\}$, let $\sigma^{(s)}_k=\sqrt{\mu^{(s)}_k}$ be the $k$-th largest singular value of $G^{(s)}$. We can interpret $\sigma_k^{(s)} : \mb{C}^{n\times m} \to \mb{R}$ as being a function of $2mn$ i.i.d.\ standard (real) Gaussian random variables (the real part and imaginary part of each entry of $G^{(s)}$), and by Weyl's inequality (\cref{lem:weyl}),
   this function is $1$-Lipschitz with respect to the Euclidean metric on $\mb{C}^{n\times m}\cong \mb R^{2nm}$. It follows from the Gaussian concentration inequality (\cref{lem:gaussian-concentration})
   that for any $1 \leq k \leq n$ (and $s\in \{1,2\}$),
   \begin{align}
   \label{eqn:gauss-concentration}
   \mb{P}\left[\big|\sigma_k^{(s)} - \mb{E}[\sigma_k^{(s)}]\big| \geq t\right] \leq 2\exp(-ct^2).
   \end{align}
   Note that our assumption $m\ge n+C\sqrt{n\log n}$, together with \cref{lem:singular-value-lower-bound}, yields
   \begin{align}
   \label{eqn:sing-value-lb}
   \mb{E}[\sigma_k^{(s)}] \ge C'\sqrt{\log n},
   \end{align}
   where $C'$ can be made arbitrary large by taking large enough $C$.
   
   We now prove \cref{eqn:eigenvalue-concentration}. Applying \cref{eqn:gauss-concentration} with $t = 10\sqrt{\log{(n/\delta)}}=O(\sqrt{\log n})$, and using \cref{eqn:sing-value-lb}, we see that with probability at least $1-\delta/(2n)$, for $s\in \{1,2\}$ we have $\sigma_k^{(s)}=\mb E \sigma_k^{(s)}+O(\sqrt{\log n})=\mb E \sigma_k^{(s)}(1+O(1/C'))$. Since $\mb E \sigma_k^{(1)}=\mb E \sigma_k^{(2)}$, this implies that
   \[\mu_k^{(1)}=(\sigma_k^{(1)})^2=\Big(\sigma_k^{(2)}\big(1+O(1/C')\big)\Big)^2=\mu_k^{(2)}(1+O(1/C')).\]
   The desired result follows, assuming $C'$ is sufficiently large (which we may ensure by taking $C$ sufficiently large in terms of $\varepsilon$).
\end{proof}
   
\section{Integral estimates}\label{sec:integral-estimates}
In this section we prove the estimates in \cref{lem:technical} and \cref{lem:technical-infinite}, via a sequence of estimates of various integrals. In this section we write $f\lesssim g$ to mean $f=O(g)$.

First, for the convenience of the reader we recall the relevant parts of \cref{def:CLT-prep,def:CLT-prep-2}.
\begin{definition}
For functions $f,g:\mb R\to \mb R$, let
\begin{align*}
\gamma(f) &= \frac{1}{\pi}\int_{-1}^1 f(x) \sqrt{1 - x^2}\,dx,\\
\Gamma(f,g)&=\frac{1}{\pi^2} \int_{-1}^{1}\int_{-1}^{1} \left(\frac{f(x) - f(y)}{x-y} \right) \left(\frac{g(x) - g(y)}{x-y} \right) \frac{1 - xy}{\sqrt{1 - x^2} \sqrt{1 - y^2}}\,dx\,dy.
\end{align*}
Then, for $a_{\pm} = (1 \pm \sqrt{c})^2$ and $\Phi_c(x) = 2\sqrt{c} x + c + 1$, let
    $$\gamma_c(f) = \frac{1}{2\pi} \int_{a_-}^{a_+} f(x)\frac{\sqrt{(a_+ - x)(x - a_-)}}{x} \,dx,\qquad \Gamma_c(f,g) = \Gamma(f\circ \Phi_c,g\circ \Phi_c).$$ 
\end{definition}

Now, to begin with, we show that the kernel appearing in the definition of $\Gamma$ in fact yields a density of a probability measure on $[-1,1]^2$.

\begin{fact}\label{lem:this-is-density}
We have
\[\frac{1}{\pi^2}\int_{-1}^1 \int_{-1}^1 \frac{1 - xy}{\sqrt{(1 - x^2)(1 - y^2)}}\,dx\,dy=1.\]
\end{fact}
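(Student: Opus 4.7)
The plan is to split the integrand as
\[\frac{1-xy}{\sqrt{(1-x^2)(1-y^2)}} = \frac{1}{\sqrt{(1-x^2)(1-y^2)}} - \frac{xy}{\sqrt{(1-x^2)(1-y^2)}},\]
so that the double integral decouples into a sum of products of one-dimensional integrals in $x$ and $y$. This reduces everything to two elementary one-variable integrals.

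For the first piece, I would use the standard evaluation $\int_{-1}^{1} (1-t^2)^{-1/2}\,dt = \pi$ (from the antiderivative $\arcsin$), so that
\[\int_{-1}^1\!\int_{-1}^1 \frac{dx\,dy}{\sqrt{(1-x^2)(1-y^2)}} = \left(\int_{-1}^1 \frac{dt}{\sqrt{1-t^2}}\right)^{\!2} = \pi^2.\]
For the second piece, the integrand is odd in $x$ (and in $y$) over a symmetric domain, so $\int_{-1}^1 t(1-t^2)^{-1/2}\,dt = 0$, and therefore the entire $xy$-contribution vanishes.

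Combining these two facts gives that the double integral equals $\pi^2$, so division by $\pi^2$ yields $1$. There is no genuine obstacle here; the statement is essentially a one-line consequence of separability plus the $\arcsin$ antiderivative and an odd-symmetry cancellation.
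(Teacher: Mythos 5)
Your argument is correct and genuinely different from the paper's. You exploit the obvious additive splitting $1-xy = 1 - xy$ and notice that each piece separates into a product of one-dimensional integrals: the first gives $\pi^2$ by the $\arcsin$ antiderivative, and the second vanishes by oddness of $t\mapsto t/\sqrt{1-t^2}$ over the symmetric interval $[-1,1]$. The paper instead substitutes $(s,t)=(1-x,1-y)$, expands $1-xy=s+t-st$, symmetrises, and evaluates the resulting one-dimensional integrals as Beta integrals over $[0,2]$. Both routes are elementary and short, but yours is the more economical: it avoids the change of variables entirely and dispenses with half the work by a parity observation rather than an explicit Beta-function computation. The one small point worth keeping in mind for the oddness step is that $\int_{-1}^{1} t(1-t^2)^{-1/2}\,dt$ is an improper integral with endpoint singularities; it converges absolutely (the singularity is of order $(1-|t|)^{-1/2}$), so the symmetry cancellation is legitimate and not a principal-value sleight of hand.
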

\begin{proof}
With the substitution $(s,t)=(1-x,1-y)$ and the symmetry between $s$ and $t$, we compute
\begin{align*}
    &\frac{1}{\pi^2}\int_{-1}^1 \int_{-1}^1 \frac{1 - xy}{\sqrt{(1 - x^2)(1 - y^2)}}\,dx\,dy \\
    &\qquad= \frac{1}{\pi^2}\int_0^2 \int_0^2 \frac{s + t - ts}{\sqrt{st(2 - s)(2 - t)}}\,ds\,dt \\
    &\qquad= \frac{2}{\pi^2}\int_0^2 \int_0^2 \frac{\sqrt{s}}{\sqrt{t(2 - s)( 2- t)}}\,ds\,dt - \frac{1}{\pi^2}\int_0^2 \int_0^2  \frac{\sqrt{st}}{\sqrt{(2 - s)(2 - t)}}\,ds\,dt =1,
\end{align*} 
where we used the identities $$\int_0^2 \sqrt{\frac{s}{2 - s}}\,ds = 2\int_0^1 \frac{x^{1/2}}{(1 - x)^{1/2}}\,dx=\pi,\quad \int_0^2 \frac{1}{\sqrt{s(2 - s)}}\,ds = \int_0^1 \frac{x^{-1/2}}{(1 - x)^{1/2}} = \pi$$ (these are Beta integrals; see for example \cite[Eqns.~(6.1.6) and~(6.1.7)]{Mol15}).
\end{proof}

Now we proceed to study the asymptotics of the quantities appearing in \cref{lem:technical,lem:technical-infinite}.

\begin{lemma}\label{lem:gamma-limit}
    Fix $c \in [1,\infty)$, and let $h_k(x) = (x/a_+)^k.$  There is a constant $C_c \in (0,\infty)$ so that $$\lim_{k \to \infty} k^{3/2} \gamma_c(h_k) = C_c. $$
\end{lemma}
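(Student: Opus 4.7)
The plan is to carry out a Laplace-type analysis at the soft edge $x = a_+$. Since $(x/a_+)^k$ decays exponentially as one moves away from $x=a_+$ into the interior of $[a_-,a_+]$, the mass of the integral defining $\gamma_c(h_k)$ concentrates in a window of width $\sim 1/k$ about $a_+$. Inside this window, the square-root factor $\sqrt{(a_+-x)(x-a_-)}$ behaves like $\sqrt{(a_+-x)(a_+-a_-)}$, and the factor $1/x$ behaves like $1/a_+$, which will produce the $k^{-3/2}$ scaling.

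Concretely, I would first perform the change of variable $u = a_+ - x$ to get
\[
\gamma_c(h_k) \;=\; \frac{1}{2\pi}\int_{0}^{a_+ - a_-} \left(1 - \frac{u}{a_+}\right)^{k}\frac{\sqrt{u\,(a_+-a_- - u)}}{a_+ - u}\,du,
\]
and then rescale via $v = ku/a_+$. After simplification, this yields
\[
k^{3/2}\,\gamma_c(h_k) \;=\; \frac{\sqrt{a_+}}{2\pi}\int_{0}^{k(a_+-a_-)/a_+}\left(1 - \frac{v}{k}\right)^{k-1}\sqrt{v}\;\sqrt{a_+-a_- - \tfrac{a_+ v}{k}}\;dv.
\]

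Next, I would identify the pointwise limit of the integrand: as $k\to\infty$, $(1-v/k)^{k-1}\to e^{-v}$ and $\sqrt{a_+-a_- - a_+ v/k}\to \sqrt{a_+-a_-}$. To pass the limit inside the integral I would apply dominated convergence: for $k\ge 2$ and $v\in[0,k]$ the elementary inequality $(1-v/k)^{k-1}\le e^{-(k-1)v/k}\le e^{-v/2}$ gives an integrable majorant $e^{-v/2}\sqrt{v(a_+-a_-)}$ (valid regardless of whether $c=1$ or $c>1$, since the upper limit of integration is at most $k$ and the factor under the square root is nonnegative there). Evaluating the resulting Gamma integral $\int_0^\infty e^{-v}\sqrt v\,dv=\sqrt{\pi}/2$ and recalling $a_+ = (1+\sqrt c)^2$ and $a_+ - a_- = 4\sqrt c$, I would obtain
\[
C_c \;=\; \lim_{k\to\infty} k^{3/2}\gamma_c(h_k) \;=\; \frac{\sqrt{a_+(a_+-a_-)}}{4\sqrt{\pi}} \;=\; \frac{(1+\sqrt c)\,c^{1/4}}{2\sqrt{\pi}},
\]
which is evidently strictly positive (and finite) for every $c\in[1,\infty)$.

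There is no real obstacle here; the only mild point to be careful about is the uniform bound on $(1-v/k)^{k-1}$ needed to justify dominated convergence at the endpoint $v$ near $k$, which is handled by the elementary $\log(1-t)\le -t$ inequality as above. The rest is essentially Watson's lemma applied to a specific, well-behaved integral.
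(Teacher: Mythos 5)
Your proof is correct and follows essentially the same route as the paper: rescale the integration variable so that the window near the soft edge $a_+$ becomes $O(1)$, apply dominated convergence with the bound $(1-x)^m \le e^{-mx}$, and evaluate the resulting Gamma integral. The only cosmetic differences are that you substitute $u = a_+ - x$ directly while the paper first maps $[a_-,a_+]$ to $[-1,1]$ via $\Phi_c$, and that you absorb the $1/x$ factor by noting $a_+ - u = a_+(1-v/k)$ (dropping the exponent from $k$ to $k-1$) rather than working with $h_{k+1}$ as the paper does. Incidentally, your explicit value $C_c = c^{1/4}(1+\sqrt c)/(2\sqrt\pi)$ appears to be the correct one: in the paper's derivation the coefficient $\tfrac{2c}{\pi a_+}$ picks up no extra factor under the substitution $1-t=s/k$, so the displayed $\tfrac{4c^2 k^{-3/2}}{\pi a_+}$ looks like a slip for $\tfrac{2c\,k^{-3/2}}{\pi a_+}$; this is immaterial to the lemma, which asserts only $C_c \in (0,\infty)$.
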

\begin{proof}
    It will be more convenient to work with $h_{k+1}$.  By changing variables $x = c + 1 + (2 \sqrt{c})t$, we see \begin{align*}
        \gamma_c(h_{k+1}) &= \frac{1}{2\pi} \int_{a_-}^{a_+} \bigg(\frac x{a_+}\bigg)^{k+1}\frac{\sqrt{(a_+ - x)(x - a_-)}}{x} \,dx\\
        &=\frac{2c}{\pi a_+} \int_{-1}^1 \left(1 - (1 - t)\frac{2\sqrt{c}}{c + 1 + 2\sqrt{c}} \right)^k \sqrt{(1 - t)(t + 1)}\,dt.
    \end{align*}
    Writing $\beta = \frac{2\sqrt{c}}{c + 1 + 2\sqrt{c}}$ and changing variables by $1  -t = s/k$, we have \begin{align*}
        \gamma_c(h_{k+1}) =  \frac{2 c k^{-3/2}}{\pi a_+}\int_0^{2k} \left(1 - \frac{\beta s}{k} \right)^k \sqrt{s(2 - s/k) }\,ds.
    \end{align*}
    We note that $(1 - \beta s /k)^{k} \leq e^{-\beta s}$ and so by the dominated convergence theorem, we have \begin{align*} \lim_{k \to \infty}
        \int_0^{2k} \left(1 - \frac{\beta s}{k} \right)^k \sqrt{s(2 - s/k) }\,ds = \sqrt{2} \int_0^\infty e^{-\beta s} \sqrt{s}\,ds = \frac{\sqrt{\pi}}{\sqrt{2} \beta^{3/2}}
    \end{align*}
    (the last integral is a Gamma integral; see for example \cite[Eqn.~(4.1.7)]{Mol15}). The desired result follows.
\end{proof}

For the next estimate, we need the following simple inequality.

\begin{fact}\label{fact:MVT-for-exp}
    For $1 \geq b > a \geq 0$ and $k >  0$ we have $$(1 - a)^k - (1 - b)^k \leq e(e^{-ak} - e^{-bk})$$
\end{fact}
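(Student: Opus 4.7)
The plan is to rearrange the desired inequality into monotonicity of a single function. Setting $G(t) = e \cdot e^{-tk} - (1-t)^k$, the Fact is equivalent to $G(a) \geq G(b)$ for $0 \leq a < b \leq 1$, so it suffices to show $G$ is nonincreasing on $[0,1]$.

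The main computation is then a single derivative:
\[G'(t) = -ek \cdot e^{-tk} + k(1-t)^{k-1} \cdot (-1) \cdot (-1) = k\bigl((1-t)^{k-1} - e^{1-tk}\bigr),\]
so the task reduces to the pointwise inequality $(1-t)^{k-1} \leq e^{1-tk}$ on $[0,1)$. Taking logarithms, this becomes $(k-1)\log(1-t) \leq 1 - tk$. Using the elementary estimate $\log(1-t) \leq -t$ together with $k - 1 \geq 0$, I would bound $(k-1)\log(1-t) \leq -(k-1)t = t - kt \leq 1 - kt$, where the last step uses $t \leq 1$. This gives $G'(t) \leq 0$ on $[0,1)$, hence by continuity $G$ is nonincreasing on $[0,1]$, as required.

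The argument is essentially routine once one identifies the right function $G$. The only subtle point is that the pointwise bound $(1-t)^{k-1} \leq e^{1-tk}$ genuinely requires $k \geq 1$ (it fails near $t = 1$ for small $k$, e.g.\ $k = 1/4$, $a = 0$, $b = 1$ makes $\mr{LHS} = 1 > e(1-e^{-1/4}) \approx 0.60 = \mr{RHS}$). So while the Fact is stated for $k > 0$, my proof—and presumably any proof—really treats the $k \geq 1$ regime; for the intended application (asymptotics as $k \to \infty$), this restriction is harmless, so I would simply note this when invoking the Fact.
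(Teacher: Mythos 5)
Your proof is correct and is essentially the paper's proof written in a slightly different form: the paper starts from $(1-a)^k - (1-b)^k = \int_a^b k(1-\theta)^{k-1}\,d\theta$ and bounds the integrand pointwise by $k(1-\theta)^{k-1} \le k\,e^{-\theta(k-1)} \le ek\,e^{-\theta k}$ before integrating, which is precisely the statement $G'(\theta) \le 0$ in your notation, so the two arguments coincide. Your additional observation that the pointwise inequality $(1-\theta)^{k-1} \le e^{-\theta(k-1)}$ genuinely requires $k \ge 1$ is a correct and worthwhile remark: the paper's own proof has the same restriction, the stated range $k > 0$ is indeed too generous (your counterexample $k = 1/4$, $a=0$, $b=1$ is valid), and in the only place the Fact is used (\cref{lem:cov-limit-finite-c}) it is invoked with exponents $k$ and $Ak$ both $\ge 1$, so the restriction is harmless.
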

\begin{proof}
    Bound \begin{equation*}(1 - a)^k - (1 - b)^k = \int_a^b k (1 - \theta)^{k-1} d\theta \leq \int_a^{b} k e^{-\theta(k-1)}\,d\theta \leq e( e^{-ak} - e^{-bk}).\qedhere \end{equation*}
\end{proof}

\begin{lemma}\label{lem:cov-limit-finite-c}
    For each fixed $c,A \in [1,\infty)$, and  $h_k(x) = (x/a_+)^k$ for $k \geq 1$, we have $$\lim_{k \to \infty} \Gamma_c(h_k,h_{Ak}) = \frac{1}{2\pi^2} \int_0^\infty \int_0^\infty \left(\frac{e^{-s} - e^{-t}}{s - t} \right)\left(\frac{e^{-As} - e^{-At}}{s - t} \right)\frac{s + t}{\sqrt{st}} \,ds\,dt.$$
\end{lemma}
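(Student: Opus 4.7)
The plan is a change of variables followed by dominated convergence. Noting that $h_k \circ \Phi_c(x) = (1-\beta(1-x))^k$ with $\beta = 2\sqrt c/a_+$, I substitute $s = K(1-x)$ and $t = K(1-y)$ where $K = k\beta$; this is the same substitution already used in the proof of \cref{lem:gamma-limit}. Under it, $(1-\beta(1-x))^k$ becomes $(1-s/k)^k$, which tends to $e^{-s}$. The factor $K^{-2}$ from the Jacobian is exactly balanced by the two factors of $K$ produced by rewriting each of the two difference quotients via $x - y = (t-s)/K$.

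A direct calculation then yields
$$\Gamma_c(h_k, h_{Ak}) = \frac{1}{\pi^2}\int_0^{2K}\!\!\int_0^{2K} \frac{(1-s/k)^k - (1-t/k)^k}{t-s}\cdot\frac{(1-s/k)^{Ak} - (1-t/k)^{Ak}}{t-s}\cdot \frac{(s+t) - st/K}{\sqrt{st(2-s/K)(2-t/K)}}\,ds\,dt.$$
As $k \to \infty$, we have $(1-s/k)^k \to e^{-s}$, $(1-s/k)^{Ak} \to e^{-As}$, and the rightmost kernel $\to (s+t)/(2\sqrt{st})$, so the integrand converges pointwise on $(0,\infty)^2$ to
$$\frac{e^{-s}-e^{-t}}{s-t}\cdot\frac{e^{-As}-e^{-At}}{s-t}\cdot\frac{s+t}{2\sqrt{st}}.$$
Pulling the factor $1/2$ outside the integral produces exactly the claimed right-hand side.

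The main obstacle is to justify the passage to the limit under the integral sign. For the two difference quotients, \cref{fact:MVT-for-exp} combined with the mean value bound $(e^{-u}-e^{-v})/(v-u) \le e^{-u}$ for $0 \le u \le v$ gives, uniformly in $k$ (using that $2\beta \le 1$ for $c \ge 1$, so that $[0,2K]^2 \subset [0,k]^2$ and the hypothesis of \cref{fact:MVT-for-exp} is satisfied),
$$\frac{|(1-s/k)^k - (1-t/k)^k|}{|t-s|} \le e \cdot e^{-\min(s,t)}, \qquad \frac{|(1-s/k)^{Ak} - (1-t/k)^{Ak}|}{|t-s|} \le Ae \cdot e^{-A\min(s,t)}.$$
The kernel has only the mild singularity $(s+t)/\sqrt{st}$ near the axes (which is locally integrable), together with a $1/\sqrt{(2-s/K)(2-t/K)}$ singularity only when $\max(s,t)$ is close to $2K$; in the latter region $\min(s,t) \ge K \to \infty$, so the exponential factors above make the contribution vanish. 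I would make this precise by splitting the domain into $[0,N]^2$ (where for $k$ large the kernel is bounded by $2(s+t)/\sqrt{st}$ and dominated convergence applies), and its complement (bounded by a crude tail estimate using the exponential decay), and finally sending $N\to\infty$. This yields the asserted limit.
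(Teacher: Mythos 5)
Your change of variables and the resulting integral over $[0,2\beta k]^2$ are exactly what the paper does, so the overall approach is the same and the main computation is correct. But the justification of the passage to the limit contains a step that fails. It is \emph{not} true that $\max(s,t)$ being close to $2K$ forces $\min(s,t)\ge K$: the factor $(2-s/K)^{-1/2}$ blows up as $s\to 2K$ for every $t$, in particular for small $t$, where $\min(s,t)$ is small. And precisely because of that region, your simplified bound $\le e\cdot e^{-\min(s,t)}$ is too weak to close the argument. Concretely, on the strip $\{s\le 1,\ N\le t\le 2K\}$ the kernel satisfies $\frac{s+t-st/K}{\sqrt{st(2-s/K)(2-t/K)}}\lesssim \sqrt{t}/\sqrt{s(2-t/K)}$, and $\int_N^{2K}\sqrt{t}\,(2-t/K)^{-1/2}\,dt$ has order $K^{3/2}$; your exponential factor $e^{-(1+A)s}$ contributes only $O(1)$ after integrating in $s$, so the dominating function you propose has integral of order $K^{3/2}$ over this strip, which does not vanish as $k\to\infty$ for fixed $N$.

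The repair is to keep what \cref{fact:MVT-for-exp} actually gives, namely $\frac{|(1-s/k)^k-(1-t/k)^k|}{|s-t|}\le e\,\frac{|e^{-s}-e^{-t}|}{|s-t|}$ (and likewise with exponent $Ak$), and \emph{not} discard the $|s-t|^{-1}$ decay by passing to $e^{-\min(s,t)}$. For $s\le 1\ll t$ the right-hand side is $\approx e^{-s}/t$, and the two difference quotients together contribute an extra factor of about $|s-t|^{-2}$ compared with your bound; this turns the boundary-strip contribution into $O(N^{-1/2}+K^{-1/2})$, which does vanish. Your scaffolding---split into $[0,N]^2$, where $(2-s/K)^{-1/2}(2-t/K)^{-1/2}$ is bounded for $k$ large and dominated convergence applies, then control the complement and send $N\to\infty$---is the right structure; it simply has to be run with the undamaged bound. (For what it's worth, the paper's write-up is also terse here: it asserts the integrand is bounded by a constant multiple of the limit integrand, which is not literally true near $\max(s,t)\approx 2K$, so you were right to sense that something extra is needed; the specific reasoning you gave does not supply it.)
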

\begin{proof}
    Writing $\beta = 2\sqrt{c}/a_+$ and changing variables $(x,y) = \big(1 - s/(\beta k),1 - t/(\beta k)\big)$, recalling that $\Phi_c(x)=2\sqrt c x+c+1$, we see that \begin{align*}
        \Gamma_c(h_k,h_{Ak})&=\frac{1}{\pi^2} \int_{-1}^{1}\int_{-1}^{1} \left(\frac{(\Phi_c(x)/a_+)^k - (\Phi_c(y)/a_+)^k}{x-y} \right) \left(\frac{(\Phi_c(x)/a_+)^{Ak} - (\Phi_c(y)/a_+)^{Ak}}{x-y} \right)\\
        &\qquad\qquad\qquad\qquad\qquad\cdot\frac{1 - xy}{\sqrt{1 - x^2} \sqrt{1 - y^2}}\,dx\,dy \\
        &= \frac{1}{\pi^2} \int_0^{2\beta k}\int_0^{2\beta k} \left(\frac{(1 - s/k)^k - (1 - t/k)^k}{s -t} \right)\left(\frac{(1 - s/k)^{Ak} - (1 - t/k)^{Ak}}{s -t} \right)\\
        &\qquad\qquad\qquad\qquad\qquad\cdot\frac{s + t - st/(\beta k)}{\sqrt{st(2 - s/(\beta k))(2 - t/(\beta k))}} \,ds\,dt.
    \end{align*}    
    We may bound the integrand using \cref{fact:MVT-for-exp} by a constant multiple of $$ \left(\frac{e^{-s} - e^{-t}}{s - t} \right)\left(\frac{e^{-As} - e^{-At}}{s - t} \right)\frac{s + t}{\sqrt{st}}$$
    which is integrable (see \cref{lem:A-limit}, to follow), so the dominated convergence theorem completes the proof.
\end{proof}

\begin{lemma}\label{lem:cov-limit-infinite-c}
    For  $ A \ge 1$, we have $$\lim_{k \to \infty} \Gamma(x^k,x^{Ak}) = \frac{1}{\pi^2} \int_0^\infty \int_0^\infty \left(\frac{e^{-s} - e^{-t}}{s - t} \right)\left(\frac{e^{-As} - e^{-At}}{s - t} \right)\frac{s + t}{\sqrt{st}} \,ds\,dt.$$
\end{lemma}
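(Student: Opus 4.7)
The plan is to adapt the strategy of \cref{lem:cov-limit-finite-c}, with one crucial modification: whereas $h_k(\Phi_c(x)) = (\Phi_c(x)/a_+)^k$ concentrates only near $x = 1$ (since $\Phi_c(-1)/a_+ = a_-/a_+ < 1$), the function $x^k$ on $[-1,1]$ concentrates at \emph{both} endpoints $x = \pm 1$. Consequently, the integrand in $\Gamma(x^k, x^{Ak})$ is concentrated near \emph{both} diagonal corners $(+1,+1)$ and $(-1,-1)$ of $[-1,1]^2$, and the two contributions will combine to produce the factor $\tfrac{1}{\pi^2}$ rather than $\tfrac{1}{2\pi^2}$.

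First, I would decompose $[-1,1]^2$ as $U_+ \cup U_- \cup R$, where $U_\pm = \{(x,y) : |x\mp 1| \le \delta,\ |y\mp 1| \le \delta\}$ are small boxes at the diagonal corners and $R$ is the remainder. On $U_+$, the substitution $x = 1 - s/k$, $y = 1 - t/k$ runs exactly as in \cref{lem:cov-limit-finite-c}: the two squared difference quotients contribute a factor $k^2$, the Jacobian contributes $k^{-2}$, and the density $\frac{1-xy}{\sqrt{(1-x^2)(1-y^2)}}$ tends pointwise to $\frac{s+t}{2\sqrt{st}}$. Using \cref{fact:MVT-for-exp} to dominate $|(1-s/k)^k-(1-t/k)^k| \lesssim |e^{-s}-e^{-t}|$ (and its $Ak$ analogue), the dominated convergence theorem shows that the contribution of $U_+$ converges to
\[
\frac{1}{2\pi^2}\int_0^\infty\!\!\int_0^\infty \left(\frac{e^{-s}-e^{-t}}{s-t}\right)\left(\frac{e^{-As}-e^{-At}}{s-t}\right)\frac{s+t}{\sqrt{st}}\,ds\,dt.
\]
On $U_-$, the analogous substitution $x = -1 + s/k$, $y = -1 + t/k$ gives the same limit: the two difference quotients pick up signs $(-1)^{k-1}$ and $(-1)^{Ak-1}$, but these multiply to $+1$ whenever $k$ and $Ak$ share the same parity, which is the only case relevant for the application of \cref{lem:technical-infinite} to \cref{cor:c=infty}, since only the \emph{even} exponents $k=2i$, $Ak=2j$ appear there.

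The main technical task is to verify that the contribution from $R$ is $o(1)$. I would split $R$ into three types of subregions: (a) on the bulk where $\max(|x|, |y|) \le 1 - \delta$, each difference quotient $\frac{x^k-y^k}{x-y} = \sum_{i=0}^{k-1} x^{k-1-i}y^i$ is bounded by $k(1-\delta)^{k-1}$, so the integrand decays exponentially in $k$; (b) on the ``anti-diagonal corners'' near $(\pm 1, \mp 1)$, the substitution $x = 1 - s/k$, $y = -1 + t/k$ yields $|x - y| \approx 2$, so each difference quotient stays bounded while the density is of order $k/\sqrt{st}$ and the Jacobian is $k^{-2}$, leaving a net factor $O(1/k)$; integrating against the Gamma-type bound supplied by \cref{fact:MVT-for-exp} then gives $O(1/\sqrt{k})$; (c) analogous reasoning handles the ``mixed'' regions in which only one of $|x|, |y|$ is within $\delta$ of $1$. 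Adding the two equal diagonal-corner contributions yields the factor $\frac{1}{\pi^2}$ claimed in the statement. The main obstacle is the careful case analysis in (b) and (c), which requires tracking the interplay between the singular weight $w(x,y) = \frac{1-xy}{\sqrt{(1-x^2)(1-y^2)}}$ and the region of $s,t$ over which the substituted integrand is evaluated, but no fundamentally new ideas beyond those of \cref{lem:cov-limit-finite-c} are required.
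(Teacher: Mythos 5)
Your proposal reaches the correct limit and is in the same spirit as the paper's argument (localize the change of variables $x = 1 \mp s/k$ near the corners of $[-1,1]^2$ where $|x|^k$ is not exponentially small), but the decomposition is genuinely different. The paper does not isolate small corner boxes $U_\pm$: instead it first shows that the ``cross-sign'' contribution from $[0,1]\times[-1,0]$ (and its reflection) is $o(1)$, by a two-case pointwise bound and dominated convergence, and then invokes the symmetry of the integrand under $(x,y)\mapsto(-x,-y)$ to write the remaining integral as $2/\pi^2$ times the integral over $[0,1]^2$. Changing variables $(x,y)=(1-s/k,1-t/k)$ on the \emph{entire} square $[0,1]^2$ then finishes the job via a single application of the dominated convergence theorem with the integrable majorant from \cref{fact:MVT-for-exp,lem:A-limit} --- the bulk of $[0,1]^2$ and the ``mixed'' strips are absorbed automatically into this DCT step, so no separate analysis is needed. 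Your route requires you to handle the remainder $R$ piece by piece; regions (b) and (c) do go to zero, but not by the crude $O(1/k)$ or $O(1/\sqrt{k})$ power counting you sketch (for instance, in the strip $\{|x-1|\le\delta,\;|y|\le 1-\delta\}$ the naive bound on the difference quotient by $1/\delta$ together with the singular weight yields a $\delta$-dependent constant, not a quantity tending to $0$); one needs the same DCT argument with the exponential majorant, at which point it is simpler to run DCT on all of $[0,1]^2$ at once as in the paper.

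One thing you do \emph{better} than the paper: you notice that the $U_-$ contribution picks up the sign $(-1)^{k-1}(-1)^{Ak-1} = (-1)^{k(A+1)}$, so it equals the $U_+$ contribution only when $k$ is even or $A$ is odd. The paper's proof simply asserts ``symmetry of the integrand under $(x,y)\mapsto-(x,y)$'', which requires this same parity condition. As you observe, the condition is always met in the application (only even exponents appear via \cref{cor:c=infty}), so nothing downstream breaks; but the lemma as stated for arbitrary $A\ge 1$ and unrestricted $k\to\infty$ is false when $A$ is even (the sequence $\Gamma(x^k,x^{Ak})$ oscillates between two subsequential limits), and your proposal makes this caveat explicit.
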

\begin{proof}
    We first claim that \begin{equation} \label{eq:cross-sign}
    \lim_{k \to \infty} \int_0^1 \int_{-1}^0 \left(\frac{x^k - y^k}{x - y} \right)\left(\frac{x^{Ak} - y^{Ak}}{x - y} \right) \frac{1 - xy}{\sqrt{(1 - x^2)(1 - y^2)}}\,dx\,dy = 0.
    \end{equation}
    To see this, note that on the set $[0,1] \times [-1,0]$ we may bound \begin{align*}
        \left(\frac{x^k - y^k}{x - y} \right)\left(\frac{x^{Ak} - y^{Ak}}{x - y} \right) \leq \begin{cases}
            16 & \text{ if either } x \geq 1/2 \text{ or } y \leq -1/2 \\
            A k^2 2^{-2(k-1)} & \text{ if } |x| \leq 1/2 \text{ and } |y| \leq 1/2 
        \end{cases}
    \end{align*} 
    by the mean-value theorem.  Applying the dominated convergence theorem shows \eqref{eq:cross-sign}.  By symmetry of the integrand under $(x,y) \mapsto -(x,y)$ this shows that $$\lim_{k \to \infty} \Gamma(x^k,x^{Ak}) =  \lim_{k \to \infty} \frac{2}{\pi^2}\int_0^1 \int_0^1 \left(\frac{x^k - y^k}{x - y} \right)\left(\frac{x^{Ak} - y^{Ak}}{x - y} \right) \frac{1 - xy}{\sqrt{(1 - x^2)(1 - y^2)}}\,dx\,dy.$$
    Changing variables $(x,y) = (1 - s/k,1-t/k)$ and arguing as in \cref{lem:cov-limit-finite-c} completes the proof. 
\end{proof}

\begin{lemma}\label{lem:A-limit}
    For $A \geq 1$ we have 
    $$ \int_0^\infty\int_0^\infty \left(\frac{e^{-s} - e^{-t}}{s - t}\right)\left(\frac{e^{-As} - e^{-At}}{s - t}\right)\frac{s+t}{\sqrt{st}}\,ds\,dt = O(A^{-1/2}(1+\log A)).$$
\end{lemma}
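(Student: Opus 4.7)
The strategy is to exploit the Feynman-style integral representation
\[\frac{e^{-\alpha s}-e^{-\alpha t}}{s-t}=-\alpha\int_0^1 e^{-\alpha(us+(1-u)t)}\,du \qquad (\alpha>0),\]
which follows from the fundamental theorem of calculus applied to $r\mapsto e^{-\alpha r}$ along the segment from $t$ to $s$. Applying this for $\alpha=1$ (with auxiliary variable $u$) and $\alpha=A$ (with auxiliary variable $v$), the product of the two difference quotients becomes
\[\left(\frac{e^{-s}-e^{-t}}{s-t}\right)\left(\frac{e^{-As}-e^{-At}}{s-t}\right)=A\int_0^1\!\int_0^1 e^{-s(u+Av)-t((1-u)+A(1-v))}\,du\,dv.\]
Since the integrand is nonnegative throughout (as each difference quotient is strictly negative), Tonelli's theorem allows one to bring the $s,t$ integration inside and evaluate it first.

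After this rearrangement, set $\alpha=u+Av$ and $\beta=(1-u)+A(1-v)$, so that $\alpha+\beta=1+A$. The standard Gamma integrals $\int_0^\infty e^{-\alpha s}s^{1/2}\,ds=\tfrac{\sqrt\pi}{2}\alpha^{-3/2}$ and $\int_0^\infty e^{-\alpha s}s^{-1/2}\,ds=\sqrt{\pi}\,\alpha^{-1/2}$ give
\[\int_0^\infty\!\!\int_0^\infty e^{-\alpha s-\beta t}\,\frac{s+t}{\sqrt{st}}\,ds\,dt=\frac{\pi(\alpha+\beta)}{2(\alpha\beta)^{3/2}}=\frac{\pi(1+A)}{2(\alpha\beta)^{3/2}},\]
reducing the whole problem to evaluating
\[I_A=\int_0^1\!\int_0^1\frac{du\,dv}{(u+Av)^{3/2}((1-u)+A(1-v))^{3/2}}.\]

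To compute $I_A$, one substitutes $a=u+Av$ in the $v$-integral (so that $(1-u)+A(1-v)=C-a$ with $C=1+A$), turning the inner integral into $\tfrac{1}{A}\int_u^{u+A}a^{-3/2}(C-a)^{-3/2}\,da$. A direct differentiation verifies that $F(a)=\tfrac{2(2a-C)}{C^2\sqrt{a(C-a)}}$ is an antiderivative, the identity $4a(C-a)+(2a-C)^2=C^2$ being what makes this work. Evaluating $F(u+A)-F(u)$ and using the symmetry $u\leftrightarrow 1-u$ to fold the two resulting terms together, the remaining outer integral becomes a constant multiple of $\int_0^1 \tfrac{2u+A-1}{\sqrt{(u+A)(1-u)}}\,du$. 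Observing that this integrand is precisely $-2\tfrac{d}{du}\sqrt{(u+A)(1-u)}$, the fundamental theorem of calculus evaluates it to $2\sqrt{A}$, yielding $I_A=\tfrac{8}{\sqrt{A}(1+A)^2}$.

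Combining the pieces, the original integral equals $\tfrac{\pi A(1+A)}{2}\cdot I_A=\tfrac{4\pi\sqrt{A}}{1+A}\le \tfrac{4\pi}{\sqrt{A}}$ for $A\ge 1$, which actually delivers the sharper bound $O(A^{-1/2})$ without the logarithmic factor. The main obstacle is purely bookkeeping — executing the substitutions and tracking signs carefully — since, modulo the algebra, the entire argument reduces to Feynman parametrisation followed by two applications of the fundamental theorem of calculus.
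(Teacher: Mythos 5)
Your proof is correct, and it takes a genuinely different route from the paper's. The paper works entirely by estimation: it restricts to $s<t$ by symmetry, splits into the regions $t-s\le 1$ and $t-s>1$, and on the first region bounds $\int_0^1(1-e^{-Ah})/h\,dh\le 1+\log A$ (which is where the logarithmic factor comes from). You instead \emph{evaluate} the integral in closed form. The Feynman/Schwinger representation $\frac{e^{-\alpha s}-e^{-\alpha t}}{s-t}=-\alpha\int_0^1 e^{-\alpha(us+(1-u)t)}\,du$ is correct (it is just the fundamental theorem of calculus along the segment); Tonelli applies because both difference quotients are negative, making the original integrand nonnegative; the Gamma-integral computation $\int_0^\infty\int_0^\infty e^{-\alpha s-\beta t}\frac{s+t}{\sqrt{st}}\,ds\,dt=\frac{\pi(\alpha+\beta)}{2(\alpha\beta)^{3/2}}$ is right; the antiderivative $F(a)=\frac{2(2a-C)}{C^2\sqrt{a(C-a)}}$ of $(a(C-a))^{-3/2}$ checks out via the identity $4a(C-a)+(2a-C)^2=C^2$; and the symmetry $u\leftrightarrow 1-u$ sends $-F(u)$ to $F(u+A)$ exactly as you claim. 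The resulting closed form $\frac{4\pi\sqrt{A}}{1+A}$ then gives the bound $\le 4\pi A^{-1/2}$ for $A\ge 1$, which is \emph{sharper} than the stated $\lesssim A^{-1/2}(1+\log A)$ --- your computation reveals the logarithm in \cref{lem:A-limit} to be an artifact of the paper's estimation scheme. The tradeoff: your argument requires more careful bookkeeping (and one should note, as you implicitly do, that the inner antiderivative evaluation and the final FTC step are legitimate because the apparent boundary singularities of $F$ at $u=0$ and $u=1$ are of order $u^{-1/2}$ and $(1-u)^{-1/2}$, hence integrable), whereas the paper's region-splitting is rougher but more robust.
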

\begin{proof}
	Since the integrand is symmetric in $s$ and $t$ it is sufficient to integrate over the region $s < t$, in which case we may bound the integrand by a constant times \begin{align*}
	 \iint_{0 \leq s < t} \left(\frac{e^{-s} - e^{-t}}{s - t}\right)\left(\frac{e^{-As} - e^{-At}}{s - t}\right)\sqrt{\frac{t}{s}}\,ds\,dt.
	\end{align*}
	We will further break up the region of integration depending on the value of $t-s$. For $t-s \leq 1$, bound the integrand by \begin{align*}
	e^{-s}\left|\frac{e^{-As} - e^{-At}}{s - t} \right|(1 + s^{-1/2}).
\end{align*}
(Here we used that $(e^{-s}-e^{-t})/(s-t)\le e^{-s}$, which follows from the classical inequality $1-e^{x}\le x$ with $x=s-t$). Then, write $t = s + h$, yielding $$\iint_{0\leq s \leq t\le s+1}  \left(\frac{e^{-s} - e^{-t}}{s - t}\right)\left(\frac{e^{-As} - e^{-At}}{s - t}\right)\sqrt{\frac{t}{s}}\,ds\,dt \le \int_0^\infty e^{-s}(1 + s^{-1/2}) e^{-As}\int_0^1 \frac{1 - e^{-Ah}}{h}\,dh \,ds. $$
Now, observe that
\begin{align*}
		\int_0^1 \frac{1 - e^{-Ah}}{h}\,dh &= \int_0^{1/A} \frac{1 - e^{-Ah}}{h}\,dh  + \int_{1/A}^1 \frac{1 - e^{-Ah}}{h}\,dh \\
		&\leq \int_0^{1/A} A \,dh + \int_{1/A}^1 h^{-1}\,dh  \\
		&= 1 + \log A,
	\end{align*}
and
\begin{align}\int_0^\infty e^{-s}(1 + s^{-1/2}) e^{-As}&\le \frac1{\sqrt{A}}+\sqrt{A+1}\int_{1/\sqrt A}^\infty ((A+1)s)^{-1/2} e^{-(A+1)s}\notag\\
&=\frac1{\sqrt{A}}+\frac{1}{\sqrt{A+1}}\int_{1}^\infty q^{-1/2} e^{-q}\lesssim  \frac{1}{\sqrt{A}}.\label{eq:1/sqrtA}
\end{align}
We deduce that \begin{align*}
	\iint_{0\leq s \leq t\le s+1}  \left(\frac{e^{-s} - e^{-t}}{s - t}\right)\left(\frac{e^{-As} - e^{-At}}{s - t}\right)\sqrt{\frac{t}{s}}\,ds\,dt 
 \lesssim  \frac{1+\log A}{\sqrt{A}}.
\end{align*}
This takes care of the region where $t-s\le 1$; we now turn to the complementary region $t-s\ge 1$.
With the substitution $q=(t-s)/s$ we first compute
\[\int_{s + 1}^\infty (t-s)^{-2} \sqrt{1 + \frac{(t-s)}{s}}dt=s\int_{1/s}^\infty q^{-2}\sqrt{1+q}\,dq\lesssim s^{-1/2}.\]
Combining this with \cref{eq:1/sqrtA}, we have
\begin{align*}
		\iint_{t \geq s + 1}  \left(\frac{e^{-s} - e^{-t}}{s - t}\right)\left(\frac{e^{-As} - e^{-At}}{s - t}\right)\sqrt{\frac{t}{s}}\,ds\,dt &\leq \int_0^\infty e^{-s(1 + A)} \int_{s + 1}^\infty (t-s)^{-2} \sqrt{1 + \frac{(t-s)}{s}} \,dt\,ds \\
		&\lesssim \int_0^\infty e^{-s(1 + A)} s^{-1/2} \,ds\lesssim \frac1{\sqrt A}. \qedhere
\end{align*}

\end{proof}

\begin{proof}[Proof of \cref{lem:technical}]
Note that (\cref{item:gamma}) follows from \cref{lem:gamma-limit}. The other two items follow by combining \cref{lem:cov-limit-finite-c} with \cref{lem:A-limit}.  
\end{proof}

\begin{proof}[Proof of \cref{lem:technical-infinite}]
Both items follow by combining \cref{lem:cov-limit-infinite-c} with the estimate from \cref{lem:A-limit}. 
\end{proof}

\newpage

\medskip\noindent\textbf{Data availability statement. }No datasets were used in this research.

\medskip\noindent\textbf{Conflicts of interest. }The authors have no conflicts of interest to declare that are relevant to the content of this
article

\bibliographystyle{amsplain_initials_nobysame_nomr}
\bibliography{main}

\appendix
\section{A general central limit theorem for linear statistics of the Wishart--Laguerre ensemble}\label{sec:CLT}
In this appendix we explain how to deduce the statements of \cref{thm:general-CLT,thm:general-CLT-infinite} from results in \cite{lytova2009central,BY88,Bao15}. As in the statements of these theorems, let $\mu_1,\dots,\mu_n$ be the eigenvalues of a random matrix sampled from the complex Wishart--Laguerre ensemble with parameters $n,m$, and assume $m/n\to c\in [1,\infty]$.

% We recall the relevant statements from \cite{lytova2009central,BY88,Bao15}. Let $a_\pm = (1\pm \sqrt c)^2$, let $\mathcal F_0$ be the set of all differentiable functions $f:\mb R\to \mb R$ such that both $f$ and $f'$ have compact support\footnote{The different theorems we are citing here have slightly different assumptions, but this is a simple assumption which certainly suffices for each of the theorems we cite.}, and fix $g\in \mc F_0$.

We recall the relevant statements from \cite{lytova2009central,BY88,Bao15}. Let $a_\pm = (1\pm \sqrt c)^2$, and let $\mathcal F_0$ be the set of all differentiable functions $f:\mb R\to \mb R$ such that both $f$ and $f'$ have compact support. The cited results are stated under slightly different regularity assumptions, but $\mathcal F_0$ is a convenient common subclass which certainly satisfies each of them. Below we will use truncation and concentration to pass from $\mathcal F_0$ to the larger class $\mc F$ appearing in \cref{thm:general-CLT,thm:general-CLT-infinite}; in particular, the polynomial test functions used in \cref{cor:c<infty,cor:c=infty} belong to $\mc F$. Fix $g\in \mc F_0$.

First, if $c<\infty$ then let $W=g(\mu_1/n)+\dots+g(\mu_n/n)$. In this case, \cite[Theorem~4.1]{lytova2009central} (which is really a restatement of results in \cite{MP67}) says that
\begin{equation}\frac Wn\overset p\to \frac{1}{2\pi}\int_{a_-}^{a_+} g(\lambda)\frac{\sqrt{(\lambda-a_-)(a_+-\lambda)}}\lambda d\lambda = \gamma_c(g),\label{eq:c=1-mean}\end{equation}
and \cite[Theorem~4.2]{lytova2009central} says that $W-\mb EW\overset d\to \mc N(0,\sigma^2)$, where
\begin{align}\sigma^2 &=\frac 12\cdot \frac{1}{2\pi^2}\int_{a_-}^{a_+}\int_{a_-}^{a_+}\left(\frac{g(\lambda_1) - g(\lambda_2)}{\lambda_1 - \lambda_2}\right)^2 \frac{4c - (\lambda_1 - c - 1)(\lambda_2 - c - 1)}{\sqrt{4c - (\lambda_1 - c - 1)^2}\sqrt{4c - (\lambda_2 - c - 1)^2}}\,d\lambda_1d\lambda_2 \nonumber \\
&= \Gamma_c(g,g).
\label{eq:c=1-variance}\end{align}

Second, if $c=\infty$, then with $X=g((\mu_1-m)/2\sqrt{mn})+\dots+g((\mu_n - m)/2\sqrt{mn})$, the convergence to the semicircle law proved in \cite{BY88} implies that 
\begin{equation}\frac Xn\overset p\to \frac{2}{\pi}\int_{-1}^{1}g(\lambda)\sqrt{1-\lambda^2}d\lambda = \gamma(g).\label{eq:c=infty-mean}\end{equation}
With the slightly different parameterisation $Y=g((\mu_1-m)/\sqrt{mn})+\dots+g((\mu_n - m)/\sqrt{mn})$, \cite[Theorem~1.2]{Bao15} says that $Y-\mb EY\overset d\to \mc N(0,\sigma^2)$, where
\begin{equation}\sigma^2=\frac 12\cdot \frac{1}{2\pi^2}\int_{-2}^{2}\int_{-2}^{2}\left(\frac{g(\lambda_1) - g(\lambda_2)}{\lambda_1 - \lambda_2}\right)^2 \frac{4 - \lambda_1\lambda_2}{\sqrt{4 - \lambda_1^2}\sqrt{4 - \lambda_2^2}}\,d\lambda_1d\lambda_2 = \Gamma(g,g).\label{eq:c=infty-variance}\end{equation}
\begin{remark*}
    The above cited papers all primarily concern the \emph{real} Wishart--Laguerre ensemble; we need the complex case. The same proofs apply to both cases (the only difference is that the variance in the real case is a factor of 2 larger than the variance in the complex case). See \cite[Remark~4.1]{lytova2009central}.
\end{remark*}

These results imply the univariate cases of \cref{thm:general-CLT,thm:general-CLT-infinite}, assuming that $f \in \mc{F}_0$ and provided we weaken $L^1$-convergence to convergence in probability. We now discuss how to upgrade this to the desired statements, using known results on concentration of singular values of Gaussian random matrices. Specifically, we will need the following.

\begin{lemma}
\label{lem:sv-concentration}
    Let $\mu_1 \geq \dots \geq\mu_n$ be the eigenvalues of a random matrix sampled from the complex Wishart--Laguerre ensemble with parameters $m\geq n$. There exists an absolute constant $C$ such that for any $t \geq 0$,
    \[\sqrt{m} - C(\sqrt{n}+t)\leq \sqrt{\mu_n} \leq \sqrt{\mu_1} \leq \sqrt{m} + C(\sqrt{n}+t)\]
    with probability at least $1 - 2\exp(-t^2)$. 
\end{lemma}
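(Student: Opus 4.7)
The plan is to write $\sqrt{\mu_k} = \sigma_k(G)$, where $\sigma_1(G) \geq \dots \geq \sigma_n(G)$ are the singular values of an underlying $n \times m$ random matrix $G$ with i.i.d.\ standard complex Gaussian entries (so that $M = GG^{\dagger}$ is the corresponding sample from the complex Wishart--Laguerre ensemble). The result then reduces to classical concentration estimates for the extreme singular values of Gaussian matrices, which follow from two ingredients: a Gaussian concentration inequality around the mean, and sharp estimates on the means themselves.

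First, by Weyl's inequality (\cref{lem:weyl}), each map $G \mapsto \sigma_k(G)$ is $1$-Lipschitz with respect to the Frobenius norm, which coincides with the Euclidean norm after identifying $\mathbb{C}^{n\times m} \cong \mathbb{R}^{2nm}$ (viewing real and imaginary parts of each entry as independent real Gaussians, each of variance $1/2$). Then, applying the Gaussian concentration inequality (\cref{lem:gaussian-concentration}) to both $\sigma_1$ and $\sigma_n$, one obtains
\[\Pr\bigl[\bigl|\sigma_k(G) - \mathbb{E}\,\sigma_k(G)\bigr| > s\bigr] \leq 2\exp(-c s^2) \qquad \text{for all } s \geq 0,\]
where $c > 0$ is an absolute constant and $k \in \{1,n\}$.

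Second, one needs sharp (to leading order) bounds on the expected extreme singular values:
\[\mathbb{E}\,\sigma_1(G) \leq \sqrt{m} + C_0 \sqrt{n}, \qquad \mathbb{E}\,\sigma_n(G) \geq \sqrt{m} - C_0\sqrt{n},\]
for some absolute constant $C_0$. The upper bound on $\mathbb{E}\,\sigma_1(G)$ is standard and follows from Slepian's comparison lemma applied to the Gaussian process $G \mapsto \sup_{\|u\|=\|v\|=1}\Re(u^* G v)$; the lower bound on $\mathbb{E}\,\sigma_n(G)$ is the classical bound of Gordon (obtained via Gordon's min--max comparison theorem), originally stated for the real Gaussian case but transferring to the complex case with an adjusted absolute constant (e.g.\ by stacking the real and imaginary parts into a $2n \times 2m$ real Gaussian matrix and using Weyl-type inequalities relating singular values). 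These estimates are tighter than the bound recorded in \cref{lem:singular-value-lower-bound} in that the leading term is exactly $\sqrt{m}$ with no multiplicative factor less than $1$.

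Combining these pieces, we set $s = C_1 t$ for an absolute constant $C_1$ chosen large enough that $2\exp(-c C_1^2 t^2) \leq 2\exp(-t^2)$, and take $C = \max(C_0, C_1)$. Then the triangle inequality gives, with probability at least $1 - 2\exp(-t^2)$,
\[\sigma_1(G) \leq \mathbb{E}\,\sigma_1(G) + C_1 t \leq \sqrt{m} + C(\sqrt{n} + t), \qquad \sigma_n(G) \geq \mathbb{E}\,\sigma_n(G) - C_1 t \geq \sqrt{m} - C(\sqrt{n} + t),\]
which, combined with the trivial $\sigma_n(G) \leq \sigma_1(G)$, is exactly the claim (after a union bound and adjusting $C$).

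The one genuinely non-trivial step is the lower bound $\mathbb{E}\,\sigma_n(G) \geq \sqrt{m} - C_0 \sqrt{n}$: the weaker bound quoted from \cite{luh2020eigenvector} in \cref{lem:singular-value-lower-bound} only yields $c(\sqrt{m} - \sqrt{n-1})$ for some $c<1$, which would lose a multiplicative constant in front of $\sqrt{m}$ and ruin the form of the conclusion. The Gordon min--max argument is what allows one to recover the correct leading term $\sqrt{m}$ with only an additive $O(\sqrt{n})$ correction, and invoking it (or equivalently citing a textbook such as Davidson--Szarek or Vershynin) is the only substantive point of the argument.
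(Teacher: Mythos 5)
Your proof is correct and takes the same approach as the paper, which simply cites \cite[Theorem~4.6.1]{vershynin2018high} (Lipschitzness of singular values via Weyl, Gaussian concentration, and Sudakov--Fernique/Gordon comparison for the means of the extreme singular values) and remarks that the argument extends to the complex case; your observation that \cref{lem:singular-value-lower-bound} is too lossy here, because the multiplicative constant $c<1$ spoils the leading term $\sqrt m$, is exactly right. One caveat: the parenthetical suggestion to transfer Gordon's lower bound via ``stacking into a $2n\times 2m$ real Gaussian matrix'' does not quite work as stated, since the realification of $G$ with matching singular values, namely $\begin{pmatrix}\Re G & -\Im G\\ \Im G & \Re G\end{pmatrix}$, is not an i.i.d.\ real Gaussian matrix, so Gordon's theorem does not directly apply to it. The cleaner reading of ``the same proof extends'' is that the Sudakov--Fernique and Gordon comparison arguments for the real-valued Gaussian process $\Re\langle Gv,u\rangle$ run verbatim when $(u,v)$ range over the complex unit spheres, yielding $\sqrt m - C\sqrt n \le \mathbb E\,\sigma_n(G) \le \mathbb E\,\sigma_1(G) \le \sqrt m + C\sqrt n$ directly.
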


\begin{remark*}
    In the real case, this statement is proved in \cite[Theorem~4.6.1]{vershynin2018high}, but the same proof extends readily to the complex case as well. 
\end{remark*}
Therefore, in the case $c < \infty$, it follows from the upper bound in \cref{lem:sv-concentration} that for all sufficiently large $s$ (depending on $c$),
\begin{align*}
    \mb{P}\Big[\max_{k} \mu_k/n \geq s\Big] = \mb{P}[\sqrt{\mu_1} \geq \sqrt{sn}] \leq \mb{P}[\sqrt{\mu_1} \geq \sqrt{m} + C\sqrt{n} + C\sqrt{sn/2}] \leq 2\exp(-sn/4), 
\end{align*}
so that if $g \in \mc{F}$ (i.e.~$g$ is continuously differentiable and has sub-exponential growth) then for any $\varepsilon > 0$, we can (smoothly) truncate $g \in \mc{F}$ to some function $g_0 \in \mc{F}_0$ with compact support, in such a way that
\[\mb E\Big|\Big(g\big(\mu_1/n)\big)+\dots+g\big(\mu_n/n)\big)\Big)-\Big(g_0\big(\mu_1/n\big)+\dots+g_0\big(\mu_n/n\big)\Big)\Big|\le \varepsilon.\]
That is to say, the effect of the truncation can be made arbitrarily small, so if the asymptotic results in \cref{eq:c=1-mean,eq:c=1-variance} hold for every $g\in \mc F_0$ they must also hold for every $g\in \mc F$.

Further, by the upper bound in \cref{lem:sv-concentration} and using that $g \in \mc{F}$ has sub-exponential growth, it follows that the sequence of random variables $W/n$ is uniformly integrable. Since this sequence converges to $\gamma_c(g)$ in probability by \cref{eq:c=1-mean}, it follows by uniform integrability that it also converges in $L^1$. 

In the case $c = \infty$, we use both the upper and the lower bound in \cref{lem:sv-concentration} to see that for all sufficiently large $s$,
\begin{align*}
    \mb{P}\Big[\max_k (\mu_k - m)/\sqrt{mn} \geq s\Big] &\leq \mb{P}[\mu_1 \geq m + s\sqrt{mn}] + \mb{P}[\mu_n \leq m - s\sqrt{mn}]\\
    &\leq 2\min\{\exp(-s^2 n/2), \exp(-s\sqrt{mn}/2)\}\\
    &\leq 2\exp(-sn/4),
\end{align*}
from which it follows (as before) that we can upgrade the assumption $g \in \mc{F}_0$ to the assumption $g \in \mc{F}$ and also upgrade the convergence in \cref{eq:c=infty-mean} to convergence in $L^1$. 

% Finally, we discuss how to deduce multivariate central limit theorems from the above univariate central limit theorems. Given functions $f_1,\dots,f_\ell\in \mc F$, for any fixed vector $\vec t=(t_1,\dots,t_\ell)\in \mb R^{\ell}$ we can apply the above observations to the function $f_{\vec t}=t_1f_1+\dots t_\ell f_\ell$, to see that
% \[t_1(W_1-\mb E W_1)+\dots+t_\ell (W_\ell-\mb E W_\ell) \overset d \to \mc N\big(\vec 0,\Gamma_c(f_{\vec t},f_{\vec t})\big),\]
% in the case $c < \infty$ and
% \[t_1 (Y_1 - \mb{E} Y_1) + \dots + t_\ell (Y_\ell - \mb{E} Y_{\ell})  \overset d \to \mc N\big(\vec 0,\Gamma(f_{\vec t},f_{\vec t})\big)\]
% in the case $c = \infty$. 
% The desired convergence in \cref{thm:general-CLT}(\cref{item:CLT}) and \cref{thm:general-CLT-infinite}(\cref{item:CLT-infinite}) now follows from the Cram\'er--Wold device (see for example \cite[Theorem~29.4]{Bill95}).

Finally, we discuss how to deduce multivariate central limit theorems from the above univariate central limit theorems. Given functions $f_1,\dots,f_\ell\in \mc F$ and a vector $\vec t=(t_1,\dots,t_\ell)\in \mb R^\ell$, define
\[
f_{\vec t}=t_1f_1+\dots+t_\ell f_\ell.
\]
Applying the above univariate convergence to the function $f_{\vec t}$, we obtain in the case $c<\infty$ that
\[
t_1(W_1-\mb E W_1)+\dots+t_\ell(W_\ell-\mb E W_\ell)
\overset d\to \mc N\bigl(0,\Gamma_c(f_{\vec t},f_{\vec t})\bigr),
\]
and in the case $c=\infty$ that
\[
t_1(Y_1-\mb E Y_1)+\dots+t_\ell(Y_\ell-\mb E Y_\ell)
\overset d\to \mc N\bigl(0,\Gamma(f_{\vec t},f_{\vec t})\bigr).
\]
By bilinearity,
\[
\Gamma_c(f_{\vec t},f_{\vec t})=\sum_{a,b=1}^\ell t_a t_b\,\Gamma_c(f_a,f_b),
\qquad
\Gamma(f_{\vec t},f_{\vec t})=\sum_{a,b=1}^\ell t_a t_b\,\Gamma(f_a,f_b).
\]
Thus the covariance forms appearing here are exactly those associated with the matrices
\[
\Sigma_c=\bigl(\Gamma_c(f_a,f_b)\bigr)_{a,b\le \ell},
\qquad
\Sigma_\infty=\bigl(\Gamma(f_a,f_b)\bigr)_{a,b\le \ell}.
\]
The desired multivariate convergences in \cref{thm:general-CLT,thm:general-CLT-infinite} now follow from the Cram\'er--Wold device; see, for example, \cite[Theorem~29.4]{Bill95}.

\end{document}